\def\hlmarking{\ensuremath{M}}
\newcommand{\hlconf}{\ensuremath{C}}
\newcommand{\hlconfs}{\ensuremath{\mathcal{C}}}
\newcommand{\hlext}{\ensuremath{D}}
\begin{document}


\setcounter{page}{313}
\publyear{24}
\papernumber{2196}
\volume{192}
\issue{3-4}

\finalVersionForARXIV

\title{Taking Complete Finite Prefixes To High Level, Symbolically\footnote[1]{ This is a revised and extended version of the article \cite{WurdemannCH23} that is published in the Proceedings of PETRI NETS 2023.}}

\author{Nick W\"urdemann\thanks{Address for correspondence: Department of Computing Science,  University
                          of Oldenburg, Oldenburg, Germany}
\\
Department of Computing Science,\\
University of Oldenburg,
Oldenburg, Germany\\
wuerdemann@informatik.uni-oldenburg.de
\and Thomas Chatain\\
Universit\'e Paris-Saclay, INRIA and LMF, CNRS and\\ ENS Paris-Saclay, Gif-sur-Yvette, France\\
thomas.chatain@inria.fr
\and Stefan Haar\\
Universit\'e Paris-Saclay, INRIA and LMF, CNRS and\\ ENS Paris-Saclay, Gif-sur-Yvette, France\\
stefan.haar@inria.fr
\and Lukas Panneke\\
Department of Computing Science,\\
University of Oldenburg,
Oldenburg, Germany\\
lukas.panneke@informatik.uni-oldenburg.de}

\maketitle

\runninghead{N.~W\"urdemann et al.}{Taking Complete Finite Prefixes To High Level, Symbolically}

\begin{abstract}
  Unfoldings are a well known partial-order semantics of P/T Petri nets
  that can be applied to various model checking or verification problems.
  For \emph{high-level} Petri nets, the so-called \emph{symbolic} unfolding generalizes this notion.
  A complete finite prefix of a P/T Petri net's unfolding contains all information to verify, e.g., reachability of markings.
  We unite these two concepts and define complete finite prefixes of the symbolic unfolding of high-level Petri nets.
  For a class of safe high-level Petri nets,
  we generalize the well-known algorithm by Esparza et al.\
  for constructing small such prefixes.
   We evaluate this extended algorithm through a prototype implementation on four novel benchmark families.
  Additionally, we identify a more general class of nets with infinitely many reachable markings, for which an approach with an adapted cut-off criterion extends the complete prefix methodology, in the sense that  the original algorithm cannot be applied to the P/T net represented by a high-level net.
\end{abstract}

\begin{keywords}
Petri Nets, High-level Petri Nets, Unfoldings, Concurrency Theory
\end{keywords}

\section*{Introduction}

Petri nets \cite{Reisig13}, also called P/T (for Place/Transition) Petri nets or low-level Petri nets,
are a well-established formalism for describing distributed systems.
\emph{High-level Petri nets} \cite{Jensen96} (also called \emph{colored Petri nets}) are a concise representation of P/T Petri nets, allowing the places to carry tokens of different colors.
Every high-level Petri net represents a P/T Petri net, here called its \emph{expansion}\footnote{
	The Petri net being represented is commonly referred to as the \emph{unfolding} of the high-level Petri net in the literature. To prevent any potential confusion, we opt for the term \emph{expansion}, as, for instance, in \cite{ChatainF10}.},
where the process of constructing this P/T net is called \emph{expanding} the high-level net.

\emph{Unfoldings} of P/T Petri nets are introduced by Nielsen et al.\ in \cite{NielsenPW81}.
Engelfriet generalizes this concept in \cite{Engelfriet91}
by introducing the notion of \emph{branching processes},
and shows that the unfolding of a net is its maximal branching process.
0n~\cite{McMillan95}, McMillan gives an algorithm to compute a complete finite prefix of the unfolding of a given Petri net.
In a well-known paper \cite{EsparzaRV02}, Esparza, R\"omer, and Vogler improve this algorithm by defining and exploiting a total order on the set of configurations in the unfolding.
We call the improved algorithm the ``ERV-algorithm''.
It leads to a comparably small complete finite prefix of the unfolding.
In \cite{KhomenkoK03}, Khomenko and Koutny describe how to construct the unfolding of the expansion of a high-level Petri net without first expanding it.

High-level representations on the one hand and processes (resp. unfoldings) of P/T Petri nets on the other, at first glance seem to be conflicting concepts; one being a more concise, the other a more detailed description of the net('s behavior).
However, in \cite{EhrigHPBH02}, Ehrig et~al.\ define processes of high-level Petri nets, and in \cite{ChatainJ04},
Chatain and Jard define \emph{symbolic branching processes} and \emph{unfoldings} of high-level Petri nets.
The work on the latter is built upon in \cite{ChatainF10} by Chatain and Fabre, where they consider so-called ``puzzle nets''.
Based on the construction of a symbolic unfolding,
in \cite{ChatainJ06}, complete finite prefixes of safe time Petri nets are constructed,
using time constraints associated with timed processes.
In \cite{Chatain06}, using a simple example, Chatain argues that in general there exists no complete finite prefix of the symbolic unfolding of a high-level Petri net.
However, this is only true for high-level Petri nets with infinitely many reachable markings such that
the number of steps needed to reach them is unbounded,
in which case the same arguments work for P/T Petri nets.

In this paper,
we lift the concepts of complete prefixes and adequate orders to the level of symbolic unfoldings of high-level Petri nets.
We consider the class of \emph{safe} high-level Petri nets
(i.e., in all reachable markings, every place carries at most one token)
that have decidable guards and finitely many reachable markings.
This class generalizes safe P/T Petri nets, and
we obtain a generalized version of the ERV-algorithm creating a complete finite prefix of the symbolic unfolding of such a given high-level Petri net.
Our results are a generalization of~\cite{EsparzaRV02} in the sense that if a P/T~Petri net is viewed as a high-level Petri net,
the new definitions of adequate orders and completeness of prefixes on the symbolic level, as well as the algorithm producing them, all coincide with their P/T counterparts.

We proceed to identify an even more general class of
so-called \emph{symbolically compact} high-level Petri nets;
we drop the assumption of finitely many reachable markings, and instead
assume the existence of a bound on the number of steps needed to reach all reachable markings.
In such a case,
the expansion is possibly not finite, and the original ERV-algorithm from \cite{EsparzaRV02} therefore not applicable.
We adapt the generalized ERV-algorithm by weakening the cut-off criterion to ensure finiteness of the resulting prefix.
Still, in this cut-off criterion we have to compare infinite sets of markings.
We overcome this obstacle by symbolically representing these sets, using the decidability of the guards to decide cut-offs.
 Finally, we present four new benchmark families
for which we report on the results of applying a prototype implementation of the generalized ERV-algorithm.

\subsection*{Distinctions from the Conference Version}
This extended version incorporates numerous textual enhancements compared to our original work in~\cite{WurdemannCH23}.
Apart from that, we made the following changes and additions:
\begin{itemize}
	\item The proofs that were excluded in the conference version have now been integrated into the main body of the paper.
	\item We substituted the running example with a more intricate and compelling one (Fig.~\ref{fig:runex} in Sec.~\ref{sec:hlpns}),
	and discuss it in greater detail.
	Additionally, we present an example for the central concept ``color conflict''.
	\item Sec.~\ref{sec:expansions} has been completely revised.
	\item We introduced a new subsection, found in Sec.~\ref{sec:Nsc-vs-Nf}, where we demonstrate that the generalized ERV-algorithm may not terminate when applied to input nets from $ \Nsc $.
	This further motivates the work from the conference version of finding a new cut-off criterion (Sec.~\ref{sec:AlgoSymbComp}).
	In another new subsection, found in Sec.~\ref{sec:feasibility} we discuss the feasibility of symbolically compact nets
	and provide an outlook into the potential development of a symbolic reachability graph.
	\item We changed the definition of $ \predb $ in Sec.~\ref{sec:constraints} (formerly Section~4.3).
	This allows for a better presentation of Theorem~\ref{thm:constraintsDisj}, and an easier proof.
	\item In a new section, found in Sec.~\ref{sec:impl},
	we report in Sec.~\ref{sec:impl-det} on a new prototype implementation of the generalized ERV-algorithm from Sec.~\ref{sec:genERValgo}.
	In Sec.~\ref{sec:benchmarks} we present four new benchmark families of high-level Petri nets.
	In Sec.~\ref{sec:mode-det} we discuss a property of high-level Petri nets which
	we call \emph{mode-determinism}, leading to an indicator for whether the symbolic unfolding is expected to be faster to construct than the low-level unfolding.
	In Sec.~\ref{sec:exp}, we present the results of applying the implementation to the benchmarks from Sec.~\ref{sec:benchmarks}.
\end{itemize}

\section{High-level Petri nets \& symbolic unfoldings}\label{sec:hlpns}
In  \cite{ChatainJ04}, symbolic unfoldings for high-level Petri nets are introduced.
In Sections~\ref{subsec:hlpns} and \ref{subsec:symbolicBP}, we recall the definitions and formalism for high-level Petri nets and symbolic unfoldings from \cite{ChatainJ04}.

\emph{Multi-sets.}
For a set $ X $, we call a function $ A:X\to\N $ a \emph{multi-set over}~$ X $.
We denote $ x\in A $ if $ A(x)\geq 1 $.
For two multi-sets $ A,A' $ over the same set $ X $, we write $ A\leq A' $ iff $ \forall x\in X: A(x)\leq A'(x) $,
and denote by $ A+A' $ and $ A-A' $ the multi-sets over $ X $
given by $ (A+A')(x)=A(x)+A'(x) $ and $ (A-A')(x)=\max(A(x)-A'(x),0) $.
We use the notation $ \ms{\dots} $ as introduced in \cite{KhomenkoK03}:
elements in a multi-set can be listed explicitly as in $ \ms{x_1,x_1,x_2} $, which describes the multi-set $ A $ with $ A(x_1)=2 $, $ A(x_2)=1 $, and $ A(x)=0 $ for all $ x\in X\setminus\{x_1,x_2\} $.
A multi-set $ A $ is finite if there are finitely many
$ x\in X $ such that $ x\in A $. In such a case, $ \ms{f(x)\with x\in A} $, with $ f(x) $ being an
object constructed from $ x\in X $, denotes the multi-set $ A' $ such that
$A'=\sum_{x\in X} A(x)\cdot f(x) $, where the $ A(x)\cdot y $ is the multi-set containing exactly $ A(x) $ copies of $ y $.

\subsection{High-level Petri nets}\label{subsec:hlpns}

A \emph{(high-level) net structure} is a tuple
$ \hlns=\tup{\hltoks,\hlvars,\hlplaces,\hltranss,\hlflowfunc,\hlguard} $ with
the following components:
$ \hltoks $ and $ \hlvars $ are the sets of \emph{colors} and \emph{variables},
and
$ \hlplaces $~and~$ \hltranss $ are sets of \emph{places} and \emph{transitions}
such that the four sets are pairwise disjoint. The \emph{flow function} is given by $ \hlflowfunc:(\hlplaces\times\hlvars\times\hltranss)\cup(\hltranss\times\hlvars\times\hlplaces)\to \N $.
For $ \hltrans\in\hltranss $, let $ \hlvars(\hltrans)=\{\hlvar\in\hlvars\with\exists\hlplace\in\hlplaces: \tup{\hlplace,\hlvar,\hltrans}\in\hlflowfunc\lor \tup{\hltrans,\hlvar,\hlplace}\in\hlflowfunc\} $.
The function $ \hlguard $ maps each $ \hltrans\in\hltranss $ to a predicate $ \hlguard(\hltrans) $ on $ \hlvars(\hltrans)$, called the \emph{guard} of~$ \hltrans $.
By this, $ \hlguard(\hltrans) $
can contain other (bounded) variables,
but all free variables in $ \hlguard(\hltrans) $ must appear on arcs to or from $ \hltrans $.
A \emph{marking} in $ \hlns $ is a multi-set~$ \hlmarking$ over $ \hlplaces\times\hltoks $,
describing how often each color $ \hltok\in\hltoks $ currently resides on each place $ \hlplace\in\hlplaces $.
A \emph{high-level Petri net} $ \hlpn=\tup{\hlns,\hlmarkings_0} $ is a net structure $ \hlns $ together with a nonempty set $ \hlmarkings_0 $
of \emph{initial markings},
where we assume $ \forall \hlmarking_0,\hlmarking_0'\in\hlmarkings_0:
\ms{\hlplace\with\tup{\hlplace,\hltok}\in\hlmarking_0}=\ms{\hlplace\with\tup{\hlplace,\hltok}\in\hlmarking_0'} $,
i.e., in all initial markings, the same places are marked with \emph{the same number~of~colors}.
We often assume the two sets $ \hltoks $ of colors and $ \hlvars $ of variables to be fixed.
In this case, we denote a high-level net structure (resp.\ high-level Petri net) by
$ \hlns=\tup{\hlplaces,\hltranss,\hlflowfunc,\hlguard} $
(resp.\ $ \hlpn=\tup{\hlplaces,\hltranss,\hlflowfunc,\hlguard,\hlmarkings_0} $).

\medskip
For two nodes $ x, y\in\hlplaces\cup\hltranss $, we write $ x\rightarrow y $, if there exists a variable $ \hlvar $ such that $ \tup{x,\hlvar,y}\in\hlflowfunc $. The reflexive and irreflexive transitive closures of $ \rightarrow $ are denoted respectively by $ \leq $ and $ < $.
For a transition $ \hltrans\in\hltranss $,
we denote by $ \preset{\hltrans}=\mso \tup{\hlplace,\hlvar}\with\tup{\hlplace,\hlvar,\hltrans}\in\hlflowfunc \msc $
and $ \postset{\hltrans}=\mso \tup{\hlplace,\hlvar}\with\tup{\hltrans,\hlvar,\hlplace}\in\hlflowfunc \msc $
the \emph{preset} and \emph{postset} of $ \hltrans $.
A \emph{firing mode} of $ \hltrans $ is a mapping
$ \hlmode:\hlvars(\hltrans)\to\hltoks $ such that $ \hlguard(\hltrans) $ evaluates to $ \true $
under the substitution given by~$ \hlmode $, denoted by $ \hlguard(\hltrans)[\sigma]\equiv\true $.
We then denote
$ \preset{\hltrans,\hlmode}=\mso \tup{\hlplace,\hlmode(\hlvar)}\with\tup{\hlplace,\hlvar}\in\preset{\hltrans} \msc $
and $ \postset{\hltrans,\hlmode}=\mso \tup{\hlplace,\hlmode(\hlvar)}\with\tup{\hlplace,\hlvar}\in\postset{\hltrans} \msc $.
The set of modes of $ \hltrans $ is denoted by~$ \hlmodes(\hltrans) $.
Note that such a ``binding'' of variables to colors is always only local, when firing the respective transition.
$ \hltrans $ can \emph{fire} in such a mode $ \hlmode $ from a marking $ \hlmarking $
if $ \hlmarking\geq\preset{\hltrans,\hlmode} $, denoted by $ \hlmarking[\hltrans,\hlmode\rangle $.
This firing leads to a new marking $ \hlmarking'=(\hlmarking-\preset{\hltrans,\hlmode})+\postset{\hltrans,\hlmode} $,
which is denoted by $ \hlmarking[\hltrans,\hlmode\rangle\hlmarking' $.
We collect in the set $ \reachable(\hlns,\hlmarkings) $ the
markings reachable by firing a sequence of transitions in $ \hlns $ from any marking in a set of markings $ \hlmarkings $.
We say $ \hlns $ resp.\ $ \hlpn $ is \emph{finite} if $ \hlplaces$, $\hltranss $ and $ \hlflowfunc $ are finite.
In this paper, we in particular aim to analyze the \emph{behavior} of high-level Petri nets.
To avoid any issues concerning undecidability regarding the firing relation, we assume that guards are expressed in a decidable logic, with $ \hltoks $ as its domain of discourse.

\medskip
Let
$ \hlns=\tup{\hlplaces,\hltranss,\hlflowfunc,\hlguard} $ and
$ \hlns'=\tup{\hlplaces',\hltranss',\hlflowfunc',\hlguard'} $ be two net structures with the same sets of colors and variables.
A function $ \hlhomom:\hlplaces\cup\hltranss\to\hlplaces'\cup\hltranss' $
is called a \emph{(high-level Petri net) homomorphism},\index{Homomorphism!high-level} if:
\begin{enumerate}[label=\roman*)]
	\item it maps places and transitions in $ \hlns $ into the corresponding sets in $ \hlns' $, i.e.,\\
	$ {\hlhomom(\hlplaces)\subseteq\hlplaces'}$ and $ \hlhomom(\hltranss)\subseteq\hltranss' $;
	\item it is ``compatible'' with the guard, preset, and postset, of transitions, i.e.,\\
	for all
	$  {\hltrans\in\hltranss}$ we have $ \hlguard(\hltrans)=\hlguard'(\hlhomom(\hltrans))$ and $ \preset{\hlhomom(\hltrans)}=\mso\tup{\hlhomom(\hlplace),\hlvar}\with\tup{\hlplace,\hlvar}\in\preset{\hltrans} \msc
	$ and $\postset{\hlhomom(\hltrans)}=\mso\tup{\hlhomom(\hlplace),\hlvar}\with\tup{\hlplace,\hlvar}\in\postset{\hltrans} \msc
	$.
\end{enumerate}
For $ \hlpn=\tup{\hlns,\hlmarkings_0} $ and $ \hlpn'=\tup{\hlns',\hlmarkings_0'} $, the homomorphisms between $ \hlpn $ and~$ \hlpn' $ are the homomorphisms between $ \hlns $ and $ \hlns' $.
Such a homomorphism $ \hlhomom $ is called \emph{initial} if additionally
$ \{\ms{\tup{\hlhomom(\hlplace),\hltok}\with \tup{\hlplace,\hltok}\in\hlmarking_0}\with \hlmarking_0\in\hlmarkings_0 \}=\hlmarkings_0' $ holds,
i.e., the initial markings in $ \hlpn $ are mapped to the initial markings in $ \hlpn' $.

\eject
We define \emph{P/T Petri nets} as high-level Petri nets with singletons $ \hltoks=\{ \bullet \} $ and $ \hlvars=\{ \hlvar_\bullet \} $
for colors and variables, i.e., in a marking, every place holds a number of tokens~$ \bullet $, which is the only value ever assigned to the variable $ \hlvar_\bullet $ on every arc.
The guard of every transition in a P/T Petri net \linebreak is $ \true $.

\begin{example}\label{ex:hlnet}
	\begin{figure}[!htb]
		\centering
		\begin{subfigure}[t]{0.495\textwidth}\vspace{0pt}
			\centering
			\includegraphics[width=0.995\textwidth]{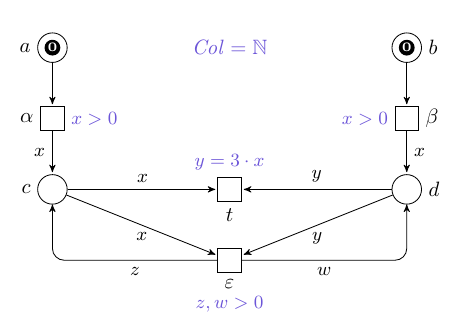}
			\caption{A safe high-level Petri net $ \hlpn $.\label{fig:runEx}}
		\end{subfigure}
		\begin{subfigure}[t]{0.495\textwidth}\vspace{0pt}
			\centering
			\includegraphics[width=0.995\textwidth]{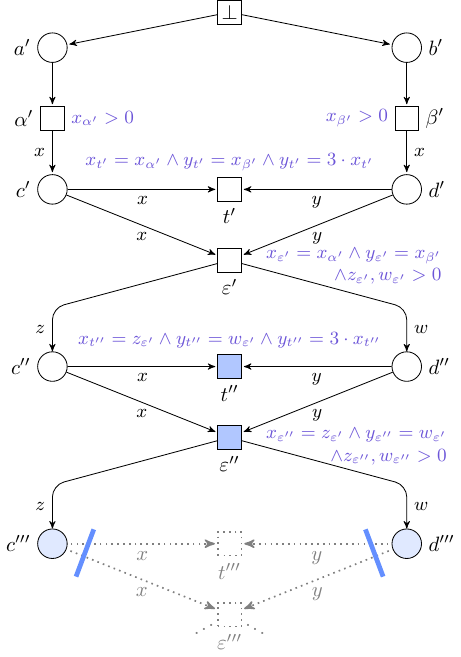}
			\caption{The symbolic unfolding $ \symbunf(\hlpn) $ of the net $ \hlpn $ in \subref{fig:runEx}.}\label{fig:runExUnf}
		\end{subfigure}
		\caption{A safe high-level Petri net $ \hlpn $ in \subref{fig:runEx}, discussed in Example~\ref{ex:hlnet}, and (a prefix of) the infinite symbolic unfolding $ \hlunf(\hlpn) $ in \subref{fig:runExUnf}, discussed in Example~\ref{ex:prefix}.}\label{fig:runex}
	\end{figure}
	Let $ \hltoks=\{ 0,\dots, m\} $ for a fixed $ m $, and $ \hlvars=\{x,y,z,w \} $ be the given sets of colors and variables.
	In Figure~\ref{fig:runEx}, the running example $ \hlpn $ of a high-level Petri net is depicted.
	Places are drawn as circles, and transitions as squares.
	The flow is described by labeled arrows,
	and the guards are written next to the respective transition.
	The set of initial markings is a singleton $ \hlmarkings_0=\{\hlmarking_0 \} $ with $ \hlmarking_0=\ms{ \tup{a,0},\tup{b,0}} $, which is depicted in the net.
	In all our examples we view $ 0 $ as a ``special'' color,
	in the sense that we employ unlabeled arcs as an abbreviation for
	arcs labeled with an additional variable $ x_0 $,
	and the guard of the respective transition having an additional term $ x_0=0 $ in its guard.
	Thus, we handle $ 0 $ as we would the token $ \bullet $ in the P/T case.

	From $ \hlmarking_0 $, both  $ \alpha $ and $ \beta $ can fire,
	taking the color $ 0 $ from place $ a $ resp.\ $ b $ and
	placing a color $ k\in\{1,\dots,m \} $ on place $ c $ resp.\ $ d $
	when firing in mode $ \{x\gets k\} $.
	The mode $ \{ x\gets 0 \} $ is for both transitions excluded by their respective guard.
	When both $ \alpha $ and $ \beta $ fire,
	the net arrives at a marking $ \ms{\tup{c,k},\tup{d,\ell}} $.
	From there, $ \varepsilon $ can fire arbitrarily often,
	always replacing the colors~$ k,\ell $ currently residing on $ c,d $ by any colors $ 0<k',\ell'\leq m $ by firing in mode $ \{ x\gets k,y\gets \ell,z\gets k',w\gets \ell' \} $.
	From every marking $ \ms{\tup{c,k},\tup{d,\ell}} $ satisfying $ \ell=3\cdot k $,
	transition $ t $ can fire, ending the execution.
\end{example}

\subsection{Symbolic branching processes and unfoldings}\label{subsec:symbolicBP}
A high-level net structure
$ \hlns=\tup{\hltoks,\hlvars,\hlplaces,\hltranss,\hlflowfunc,\hlguard} $
is called \emph{ordinary} if there is at most one arc connecting any two nodes in $ \hlns $,
i.e., $ \forall \tup{x,y}\in(\hlplaces\times\hltranss)\cup(\hltranss\times\hlplaces): \sum_{\hlvar\in\hlvars}\hlflowfunc(x,\hlvar,y)\leq 1 $.
For such an ordinary net structure,
analogously to the well-known low-level case, two nodes $ x,y\in\hlplaces\cup\hltranss $ are in \emph{structural conflict}, denoted by $ x\sharp y $, if $ \exists\hlplace\in\hlplaces\,\exists\hltrans,\hltrans'\in\hltranss: {\hltrans\neq\hltrans'}\land
\hlplace\rightarrow\hltrans\land\hlplace\rightarrow\hltrans'\land \hltrans\leq x\land\hltrans'\leq y
$.

\medskip
A \emph{high-level occurrence net} (defined below) is a high-level Petri net with an ordinary net structure that satisfies certain properties.
In such a net, we call the places \emph{conditions} and denote them by $ \hlconds $.
Transition are called \emph{events} (denoted by $ \hlevnts $),
and reachable markings are called \emph{cuts}, where the set of initial cuts is denoted by $ \hlcuts_0 $.
The flow relation is denoted by $ \hlarcs $.

The properties \mbox{\textit{i) -- iii)}} in the definition below are exactly the same as in the low-level case and concern solely the net structure.
Property \textit{iv)} generalizes the corresponding requirement of low-level occurrence nets to the current situation, in which, just as in the low-level case, every condition has at most one event in its preset, and that those conditions having an empty preset constitute the initial cut.
Case \textit{iv.a)} describes the conditions that initially hold a color, at the ``top'' of the net.
Case \textit{iv.b)} on the other hand describes the conditions ``deeper'' in the net,
which initially do not hold a color.

\begin{definition}[High-level occurrence net \cite{ChatainJ04}]
	A \emph{high-level occurrence net} is a  high-level Petri net \label{def:page:hl-onet} $ \hlonet=\tup{\hltoks,\hlvars,\hlconds,\hlevnts,\hlarcs,\hlguard,\hlcuts_0} $ with an ordinary net structure
	$ \tup{\hltoks,\hlvars, \hlconds,\hlevnts,\hlarcs,\hlguard} $
	such that
	\begin{enumerate}[label=\textit{\roman*)}]
		\item No event is in structural self-conflict, i.e.,
		$ \forall \hlevnt\in\hlevnts: \neg(\hlevnt\sharp \hlevnt) $.
		\item No node is its own causal predecessor, i.e.,
		$ \forall x\in\hlconds\cup\hlevnts:{\neg (x<x)} $.
		\item The relation $ < $ is well-founded, i.e., contains no infinite decreasing sequence.
		\item For every $ \hlcond\in\hlconds $, exactly one of the following holds:
		\begin{enumerate}[label=\textit{\alph*)}]
			\item $\forall \hlcut_0\in\hlcuts_0: \sum_{\hltok\in\hltoks}\hlcut_0(\hlcond,\hltok)=1 $ and $ \{ \hlevnt\with\hlevnt\rightarrow \hlcond \}=\emptyset $.\\
			In this case we denote $ \preset{\hlcond}=\tup{\bot,\hlvar^\hlcond} $.
			\item $\forall \hlcut_0\in\hlcuts_0: \sum_{\hltok\in\hltoks}\hlcut_0(\hlcond,\hltok)=0 $ and there exists a unique pair $ \tup{\hlevnt,\hlvar} $ s.t.\ $ \tup{\hlevnt,\hlvar,\hlcond}\in\hlarcs $.
			In this case we denote $ \preset{\hlcond}=\tup{\hlevnt,\hlvar} $
		\end{enumerate}
	\end{enumerate}
\end{definition}

We denote by $ \hlconds_0=\{ \hlcond\in\hlconds\with \exists \hlcut_0\in\hlcuts_0\,\exists \hltok\in\hltoks: \tup{\hlcond,\hltok}\in\hlcut_0 \} $ the conditions from~\textit{iv.a)} occupied in all initial cuts.
$ \bot $ can be seen as a ``special event'' that fires only once to initialize the net, and produces the initial cuts $\hlcut_0\in \hlcuts_0 $ by assigning values to the variables $ \hlvar^\hlcond $ on ``special arcs'' $ \tup{\bot,\hlvar^\hlcond,\hlcond} $ towards the conditions $ \hlcond\in\hlconds_0 $.

In a crucial notation for this paper, we define
in case~\textit{iv.a)} $ \hlcondevnt(\hlcond)=\bot $, and $ \hlcondvar(\hlcond)=\hlvar^\hlcond $,
and
in case~\textit{iv.b)} we identify the event~$ \hlevnt $ by $ \hlcondevnt(\hlcond) $ and the variable $ \hlvar $ by~$ \hlcondvar(\hlcond) $.
By this notation, $\forall\hlcond\in\hlconds: \preset{\hlcond}=\tup{\hlcondevnt(\hlcond),\hlcondvar(\hlcond)} $. We can say that whenever a condition $ \hlcond $ holds a color $ \hltok $,
then it got placed there by firing $ \hlcondevnt(\hlcond) $ in a mode that binds $ \hlcondvar(\hlcond) $
to the color $ \hltok $.

In a high-level occurrence net, we define for every event $ \hlevnt $ the predicates $ \locpred(\hlevnt) $ and $ \pred(\hlevnt) $.
The predicate $ \pred(\hlevnt) $ is satisfiable iff $ \hlevnt $ is not dead,
i.e., there are cuts $ \hlcut_0,\dots,\hlcut_n $ with $ \hlcut_0\in\hlcuts_0 $
and events $ \hlevnt_1,\dots,\hlevnt_n $, s.t.
$ \hlcut_0[\hlevnt_1\rangle\dots[\hlevnt_n\rangle\hlcut_n[\hlevnt\rangle $.
This predicate is obtained by building a conjunction over all
\emph{local predicates} of events $ \hlevnt' $ with $ \hlevnt'\leq\hlevnt $, and
the predicate of the special event~$ \bot $.

\medskip
The local predicate of $ \hlevnt $  is, in its turn, a conjunction of two predicates expressing that (i) the guard of the event $ \hlevnt $ is satisfied, and (ii) that for any $ \tup{\hlcond,\hlvar}\in\preset{\hlevnt} $, the value of the variable $ \hlvar $ coincides with the color that the event~$ \hlcondevnt(\hlcond) $ placed in $ \hlcond $.
To realize this, the variables $ \hlvar\in\hlvars(\hlevnt) $ are instantiated by the index $ \hlevnt $, so that $ \hlvar_\hlevnt $ describes the value assigned to $ \hlvar $ by a mode of $ \hlevnt $.
Having the definition of $ \hlcondevnt(\hlcond) $ and $ \hlcondvar(\hlcond) $ from above in mind, for a condition $ \hlcond $, we abbreviate $ \hlcondvar_{\hlcondevnt}(\hlcond)=\hlcondvar(\hlcond)_{\hlcondevnt(\hlcond)} $.
Formally, we have
\begin{align*}
	\locpred(\hlevnt)\quad&=\quad
	\hlguard(\hlevnt)[ \hlvar\gets\hlvar_\hlevnt ]_{\hlvar\in\hlvars(\hlevnt)}
	\quad\land\quad
	\bigwedge_{\tup{\hlcond,\hlvar}\in\preset{\hlevnt}}\hlvar_\hlevnt=\hlcondvar_\hlcondevnt(\hlcond)
	\\
	\pred(\hlevnt)\quad&=\quad\pred(\bot)\land\bigwedge_{\hlevnt'\leq\hlevnt}\locpred(\hlevnt'),
\end{align*}
where $ \pred(\bot)=\bigvee_{\hlcut_0\in\hlcuts_0}\bigwedge_{\tup{\hlcond,\hltok}\in\hlcut_0}(\hlvar_\bot^\hlcond=\hltok) $
symbolically represents the set of initial cuts.

\medskip
A \emph{symbolic branching process} of a high-level Petri net $ \hlpn
$
is a pair $ \beta=\tup{\hlonet,\hlhomom} $
with an occurrence net $ \hlonet=\tup{\hltoks,\hlvars,\hlconds,\hlevnts,\hlarcs,\hlguard,\hlcuts_0} $ in which $ \pred(\hlevnt) $ is satisfiable for all $\hlevnt\in\hlevnts $,
and an initial homomorphism $ \hlhomom:\hlonet\to\hlpn $ that is injective on events with the same preset, i.e.,
$ \forall \hlevnt,\hlevnt'\in\hlevnts:
(\preset{\hlevnt}=\preset{\hlevnt'}\land \hlhomom(\hlevnt)=\hlhomom(\hlevnt') )
\Rightarrow \hlevnt=\hlevnt' $.

For two symbolic branching processes $ \beta=\tup{\hlonet,\hlhomom} $ and $ \beta'=\tup{\hlonet',\hlhomom'} $  of a high-level Petri net, $ \beta $ is a \emph{prefix} of $ \beta' $ if there
exists an injective initial homomorphism~$ \phi $ from $ \hlonet $ into $ \hlonet' $, such that $ \hlhomom'\circ\phi = \hlhomom $.
In \cite{ChatainJ04} it is stated that for any given high-level Petri net~$ \hlpn $ there exists a unique maximal branching process (maximal w.r.t.\ the prefix relation and unique up to isomorphism).
This branching process is called the \emph{symbolic unfolding}, and denoted by $ \symbunf(\hlpn)=\tup{\hlunf(\hlpn),\symbhomom^\hlpn}
$.
The value $ \symbhomom^\hlpn(x) $ is called the \emph{label} of a node $ x $ in $ \hlunf(\hlpn) $.

\begin{example}\label{ex:prefix}
	Consider again the high-level Petri net $ \hlpn $ from Figure~\ref{fig:runEx}.
	In Figure~\ref{fig:runExUnf} we see (a prefix of) the infinite occurrence net $ \hlunf(\hlpn) $ of the symbolic unfolding $ \symbunf(\hlpn) $.
	We depict the prefix with two instances of each $ t $ and $ \varepsilon $.
	Each node in the unfolding is named after the
	represented place resp.\ transition (i.e., its label),
	equipped with a superscript.
	We include the ``special event'' $ \bot $, that can only fire once, in the drawing.
	The guards of events are omitted, since they have the same guards as their label.
	Instead, the local predicate of each event is written next to it.

	The local predicate of $ \alpha' $, namely $ x_{\alpha'}>0 $
	expresses that the assignment of colors to variables by a mode of $ \alpha' $
	must satisfy the constraint given by the guard of its label $ \alpha $.
	Analogously for $ \beta' $.
	The same is expressed in the local predicate of $ t' $ by $ y_{t'}=3\cdot x_{t'} $,
	coming from the guard $ y=3\cdot x $ of $\symbhomom^\hlpn(t')= t $.
	Additionally, the first part of the conjunction formalizes that,
	since $ \tup{c',x}\in\preset{t'} $,
	the value that a mode of $ t' $ assigns to $ x $
	must be the same that a mode of $ \hlcondevnt(c')=\alpha' $ assigned to $ \hlcondvar(c')=x $.
	This is expressed as $ x_{t'}=x_{\alpha'} $.
	The second part of the conjunction formalizes the same for $ y $ and $ d' $.
	The whole predicate of $ t' $ is then given by
	\begin{equation*}
		\pred(t')=x_{\alpha'}>0\land x_{\beta'}>0
		\land x_{t'}=x_{\alpha'} \land y_{t'}=x_{\beta'}
		\land y_{t'}=3\cdot x_{t'}.
	\end{equation*}
	Since it is satisfiable for example by
	$ \{ x_{\alpha'}\gets 1, x_{\beta'}\gets 3, x_{t'}\gets 1, y_{t'}\gets 3 \} $
	(meaning that $ t $ can fire in mode $ \{ x\gets 1,y\gets 3 \} $ after $ \alpha $ fired in mode $ \{ x\gets 1 \} $ and $ \beta $ fired in mode $ \{ x\gets 3 \} $),
	the node $ t' $ is not dead and an event in the unfolding.

	The blue shading of event $ \varepsilon'' $ and $ t'' $ indicates that they are what we later term \emph{cut-off events},
	which leads to the \emph{complete finite prefix} being marked by the blue thick lines being obtained by Alg.~\ref{alg:complpref}, as described later.
	The unfolding itself is infinite.
\end{example}

As we see in the definition of high-level occurrence nets,
the notion of causality and structural conflict are the same as in the low-level case.
However, a set of events in an occurrence net can also be in what we call \emph{color conflict},
meaning that the conjunction of their predicates is not satisfiable.
In a symbolic branching process,
this means that the constraints  on the values of the ﬁring modes, coming from the guards of the transitions,
prevent joint occurrence of all events from such a set in any \emph{one} run of the net:

\medskip
The nodes in a set $ X \subseteq \hlevnts\cup\hlconds $ are in \emph{color conﬂict} if
\begin{equation*}
	\bigwedge_{\hlevnt\in X\cap\hlevnts}\pred(\hlevnt)\land \bigwedge_{\hlcond\in X\cap\hlconds}\pred(\hlcondevnt(\hlcond))
\end{equation*}
is
\emph{not} satisfiable.
The nodes of $ X $ are \emph{concurrent} if they are \emph{not} in color conflict, and for each $ x,x'\in X' $, neither $ x<x' $ , nor $ x'<x $, nor $ x\sharp x' $ holds.
A set of concurrent conditions is called a \emph{co-set}.

\medskip
Note that while a set  of nodes is defined to be in structural conflict
if and only if two nodes in it are in structural conflict, the same does not hold for color conflict:
it is possible to have a set $ \{ x_1,x_2,x_3 \} $ of nodes
that are in color conflict,
but for which every subset of cardinality 2 is \emph{not} in color conflict.
We demonstrate this on an example.

\begin{example}[Color conflict]
	\begin{figure}[!htb]
		\begin{subfigure}{0.49\linewidth}
			\centering
			\includegraphics[scale=0.9]{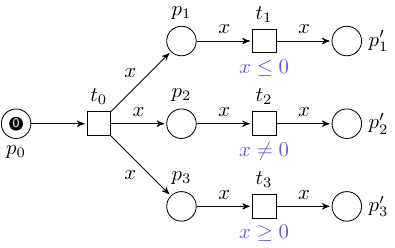}
			\caption{A high-level Petri net with $ \hltoks=\mathbb{Z} $ and $ \hlvars=\{x\} $.\label{fig:colorconflictNet}}
		\end{subfigure}
		\hfill
		\begin{subfigure}{0.49\linewidth}
			\centering
			\includegraphics[scale=0.9]{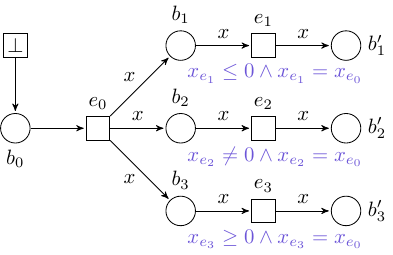}
			\caption{The symbolic unfolding of the net in \subref{fig:colorconflictNet},
				with the events $ \{ \hlevnt_1,\hlevnt_2,\hlevnt_3 \} $ in color conflict.\label{fig:colorconflictUnf}}
		\end{subfigure}
		\caption{Illustration and example of nodes in color conflict.\label{fig:colorconflict}}
	\end{figure}
	In Figure~\ref{fig:colorconflictNet}, a high-level Petri net
	with initial marking $ \ms{\tup{\hlplace_0,0}} $ is depicted.
	The only enabled transition is $ \hltrans_0 $, placing
	the same color $ \hlmode(x)\in\mathbb{Z}=\{ \dots,-1,0,1,2,\dots \} $ on
	each of the three places $ \hlplace_1,\hlplace_2,\hlplace_3 $
	when fired in mode \mbox{$ \hlmode=\{x\gets \hlmode(x)\} $}.
	From each of these places, the color $ \hlmode(x) $ may be taken by a respective transition.
	The three transitions $ \hltrans_1,\hltrans_2,\hltrans_3 $, however,
	each have a guard:
	$ \hlguard(\hltrans_1)=(x\leq 0) $,
	$ \hlguard(\hltrans_2)=(x\neq 0) $, and
	$ \hlguard(\hltrans_3)=(x\geq 0) $.
	Depending on the mode $ \hlmode $ in which $ \hltrans_0 $ fired,
	always two of the three transitions are fireable:
	if $ \hlmode(x)=0 $ then $ \hltrans_1 $ and $ \hltrans_3 $ can both fire (in mode $ \{x\gets 0\} $),
	if $ \hlmode(x)< 0 $ then $ \hltrans_1 $ and $ \hltrans_2 $ can fire,
	and if $ \hlmode(x)> 0 $ then $ \hltrans_2 $ and $ \hltrans_3 $ can fire.

\medskip
	Since the high-level Petri net in Figure~\ref{fig:colorconflictNet} is a high-level occurrence net and all predicates are satisfiable,
	it has the same structure as
	its own symbolic unfolding in Figure~\ref{fig:colorconflictUnf}.
	The set $ \{ \hlcond_1,\hlcond_2,\hlcond_3 \} $ is a co-set,
	since the conditions are neither in conflict, nor causally related,
	and $ \bigwedge_{\hlcond\in\{ \hlcond_1,\hlcond_2,\hlcond_3 \}} \pred(\hlcondevnt(\hlcond))$, which is equivalent to $ \true $,
	is satisfiable, i.e., the conditions are \emph{not} in color conflict.
	Consequently, the set $ \{ \hlevnt_1,\hlevnt_2,\hlevnt_3 \} $ is also not in structural conflict,
	and the events are not causally related.
	However, there now \emph{is} a color conflict between these three events,
	since $ \bigwedge_{\hlevnt\in\{ \hlevnt_1,\hlevnt_2,\hlevnt_3 \}} \pred(\hlevnt) $
	implies $ x_{\hlevnt_0}\leq 0\land x_{\hlevnt_0}\neq0\land x_{\hlevnt_0}\geq 0 $,
	which obviously is not satisfiable.
	In contrast, each of the sets $ \{\hlevnt_i,\hlevnt_j\} $ with $ i,j\in\{1,2,3 \} $, $ i\neq j $ is \emph{not} in color conflict.
	This makes each of the sets $ \{\hlcond_i',\hlcond_j'\} $ a co-set,
	while $ \{ \hlcond_1',\hlcond_2',\hlcond_3' \} $ is not a co-set.
\end{example}

Having employed the notions of conflict, we come to one of the most important definitions when dealing with unfoldings, namely \emph{configurations}.
\begin{definition}[Configuration \cite{ChatainJ04}]
	A \emph{(symbolic) configuration} is a set of high-level events that is free of structural conflict and color conflict, and causally closed.
	The configurations in a symbolic branching process $ \beta $ are collected in~the~set~$ \hlconfs(\beta) $.
\end{definition}

Recall that $ \hlconds_0 $ are the initial conditions occupied in all initial cuts.
For a configuration $ \hlconf $, we define by
$ \confcut(\hlconf):= (\hlconds_0\cup (\hlconf\rightarrow))\setminus (\rightarrow\hlconf)
$
the high-level conditions that are occupied after any execution of $ \hlconf $.
Note that
$ \confcut(\hlconf) $ is a co-set,
and that $ \emptyset $ is a configuration with $ \confcut(\emptyset)=\hlconds_0 $.

Let $ \hlevnt\in\hlevnts $ be a high-level event.
We define the so-called \emph{cone configuration}
$ \cone{\hlevnt}:=\{ \hlevnt'\in\hlevnts\with \hlevnt'\leq \hlevnt \} $.
Additionally, we define the sets $ \hlvars_\hlevnt:=\{ \hlvar_\hlevnt\with \hlvar\in\hlvars(\hlevnt) \} $
and $ \hlvars_\bot:=\{ \hlvar_\bot^\hlcond\with \hlcond\in\hlconds_0 \} $ of indexed variables, and for a set $ \hlevnts'\subseteq\hlevnts\cup\{\bot\} $
we denote $ \hlvars_{\hlevnts'}:= \bigcup_{\hlevnt\in\hlevnts'}\hlvars_\hlevnt $.
Note that, for every event $ \hlevnt $, $ \pred(\hlevnt) $ is a predicate over the variables $ \hlvars_{\cone{\hlevnt}\cup\{\bot\}} $.

\subsection{Properties of the symbolic unfolding.}\label{sec:SymbUnfProps}
Having recalled the definitions and formal language from \cite{ChatainJ04}, we now delve into the novel aspects of this paper.
We state three analogues of well-known properties of the Unfolding of P/T Petri nets for the symbolic unfolding of high-level nets.
These properties are:
\begin{enumerate}[label=(\roman*)]
	\item The cuts in the unfolding represent precisely the reachable markings in the net.
	\item For every transition that can occur in the net, there is an event in the unfolding with corresponding label (and vice versa).
	\item The unfolding is complete in the sense that for any configuration,
	the part of the unfolding that ``lies after'' that configuration
	is the unfolding of the original net with the initial markings being the ones represented by the configurations cut.
\end{enumerate}
The properties are stated in Prop.~\ref{prop:MarkingsRepresented}, Prop.~\ref{prop:unfComplete}, and Prop.~\ref{prop:futureIso}, respectively.

To express these properties, we introduce the notion of \emph{instantiations} of configurations~$ \hlconf $,
choosing a mode for every event in $ \hlconf $ without creating color conflicts.
This is realized by  assigning to each variable $ \hlvar_\hlevnt\in\hlvars_{\hlconf\cup\{\bot\}} $ a value in $ \hltoks $,
such that the above defined predicates evaluate to $ \true $.
For each $ \hlevnt\in\hlconf $,
the assignment of values to the indexed variables in $ \hlvars_\hlevnt $ corresponds to a mode~of~$ \hlevnt $.
\begin{definition}[Instantiation of Configuration]\label{def:inst-of-conf}
	For a given configuration $ \hlconf $,
	an \emph{instantiation of~$ \hlconf $} is a function
	$ \insta:\hlvars_{\hlconf\cup\{\bot\}}\to\hltoks $,
	such that $ \forall \hlevnt\in\hlconf\cup\{\bot\}:\pred({\hlevnt})[\insta]\equiv\true $,
	i.e., it satisfies all predicates in the configuration.
	The set of instantiations of a given configurations $ \hlconf $ is denoted by $ \instas(\hlconf) $.
\end{definition}
Note that, by definition, every configuration $ \hlconf $ has an instantiation $ \insta $.
We denote by
$ \confcut(\hlconf,\insta):=
\{ \tup{\hlcond,\hltok}\with \hlcond\in\confcut(\hlconf)\land \insta(\hlcondvar_\hlcondevnt(\hlcond))=\hltok \}\subseteq\hlconds\times\hltoks  $
the \emph{cut} of an ``instantiated configuration'', and by
$ \confmark(\hlconf,\insta):=\mso \tup{\hlhomom(\hlcond),\hltok}\with \tup{\hlcond,\hltok}\in	\confcut(\hlconf,\insta) \msc $
its \emph{marking}.
We collect both of these in
$ \confcuts(\hlconf):=\{\confcut(\hlconf,\insta)\with\insta\in\instas(\hlconf) \} $
and
$ \confmarks(\hlconf):=\{\confmark(\hlconf,\insta)\with\insta\in\instas(\hlconf) \} $.
Note that in this notation,
for the empty configuration we have
$ \confcuts(\emptyset)=\hlcuts_0 $ and
$ \hlmarkings(\emptyset)=\hlmarkings_0 $.

\begin{proposition}\label{prop:MarkingsRepresented}
	Let $ \hlpn $ be a high-level Petri net
	and $ \symbunf $ its symbolic unfolding.
	Then
	$ \reachable(\hlpn) =
	\{ \confmark(\hlconf,\insta)\with  \hlconf\in\hlconfs(\symbunf), \insta\in\instas(\hlconf) \} $.
\end{proposition}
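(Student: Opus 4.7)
The plan is to prove the two inclusions in Prop.~\ref{prop:MarkingsRepresented} separately. For ``$\subseteq$'' I would induct on the length of a firing sequence witnessing reachability, and for ``$\supseteq$'' on the cardinality of $\hlconf$. Both base cases collapse to the already-noted identity $\confmarks(\emptyset)=\hlmarkings_0$, and both inductive steps add or strip a single event at a $<$-maximal position, whose existence in a finite configuration follows from the well-foundedness of $<$.

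For ``$\subseteq$'', fix a firing sequence $\hlmarking_0[\hltrans_1,\hlmode_1\rangle\dots[\hltrans_n,\hlmode_n\rangle\hlmarking$ in $\hlpn$ and assume by induction that the penultimate marking $\hlmarking'$ equals $\confmark(\hlconf',\insta')$ for some pair $(\hlconf',\insta')$. Since $\hlmarking'\geq\preset{\hltrans_{n+1},\hlmode_{n+1}}$, one extracts a co-set $B\subseteq\confcut(\hlconf')$ whose $\hlhomom$-images match $\preset{\hltrans_{n+1}}$ and whose $\insta'$-instantiated colors coincide with those assigned by $\hlmode_{n+1}$. The crux of the argument---and the main obstacle of the proof---is to invoke the maximality of $\symbunf$ to produce an event $\hlevnt$ with $\symbhomom^\hlpn(\hlevnt)=\hltrans_{n+1}$ whose preset sits precisely on $B$: if no such event existed, one could attach a fresh event together with fresh postset conditions to $\symbunf$ without violating any occurrence-net axiom, homomorphism condition, or the injectivity-on-same-preset requirement of a branching process, contradicting maximality. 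Setting $\hlvar_\hlevnt:=\hlmode_{n+1}(\hlvar)$ for every $\hlvar\in\hlvars(\hlevnt)$ then extends $\insta'$ to an $\insta\in\instas(\hlconf)$ with $\hlconf:=\hlconf'\cup\{\hlevnt\}$, because the guard clause of $\locpred(\hlevnt)$ holds thanks to $\hlmode_{n+1}$ satisfying $\hlguard(\hltrans_{n+1})$ and the variable-matching clause holds by the very choice of $B$. A direct token accounting finally gives $\confmark(\hlconf,\insta)=\hlmarking$.

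For ``$\supseteq$'', fix a finite $\hlconf\in\hlconfs(\symbunf)$ and $\insta\in\instas(\hlconf)$; the case $\hlconf=\emptyset$ being handled by the base case, assume $\hlconf\neq\emptyset$. Pick a $<$-maximal $\hlevnt\in\hlconf$ and set $\hlconf':=\hlconf\setminus\{\hlevnt\}$: by $<$-maximality of $\hlevnt$ the set $\hlconf'$ remains causally closed, and as a subset of $\hlconf$ it remains free of structural and color conflict, hence $\hlconf'\in\hlconfs(\symbunf)$. Restricting $\insta$ to $\hlvars_{\hlconf'\cup\{\bot\}}$ yields an $\insta'\in\instas(\hlconf')$, so by induction $\hlmarking':=\confmark(\hlconf',\insta')\in\reachable(\hlpn)$. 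The variable-matching clause of $\locpred(\hlevnt)[\insta]\equiv\true$, combined with the compatibility of $\symbhomom^\hlpn$ with presets, forces $\preset{\symbhomom^\hlpn(\hlevnt),\hlmode}$ to sit inside $\hlmarking'$ for the mode $\hlmode$ defined by $\hlmode(\hlvar):=\insta(\hlvar_\hlevnt)$, while the guard clause tells us that $\hlmode$ is indeed a firing mode of $\symbhomom^\hlpn(\hlevnt)$. Firing in this mode from $\hlmarking'$ produces exactly $\confmark(\hlconf,\insta)$, which therefore lies in $\reachable(\hlpn)$.
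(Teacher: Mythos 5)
Your proof is correct and follows essentially the same route as the paper: an induction on the number of transitions/events, anchored at $\confmarks(\emptyset)=\hlmarkings_0$ via the initial homomorphism. The only difference is that the paper delegates the inductive step to Prop.~\ref{prop:unfComplete}, whereas you re-derive its content inline (the maximality-of-the-unfolding argument for ``$\subseteq$'' and the strip-a-maximal-event argument for ``$\supseteq$'').
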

\begin{proof}
	The proof is an easy induction over the number $ n $ of transitions/events needed to reach a respective marking/cut.
	The induction anchor $ n=0 $ is proved by using that $ \symbhomom $ is an initial homomorphism which gives
	$ \hlmarkings_0
	=\{ \ms{ \tup{\symbhomom(\hlcond),\hltok} \with \tup{\hlcond,\hltok}\in \hlcut_0  }\with\hlcut_0\in\hlcuts_0 \}
	=\{ \ms{ \tup{\symbhomom(\hlcond),\hltok} \with \tup{\hlcond,\hltok}\in \hlcut  }\with\hlcut\in\confcuts(\emptyset) \}
	=\{ \confmark(\emptyset.\insta)\with \insta\in\instas(\emptyset) \}
	$.
	The induction step is realized by Prop.~\ref{prop:unfComplete}.
\end{proof}
\begin{proposition}\label{prop:unfComplete}
	The symbolic unfolding $ \symbunf=\tup{\hlunf,\symbhomom} $ with events $ \hlevnts $ of a high-level Petri net $ \hlpn=\tup{\hlplaces,\hltranss,\hlflowfunc,\hlguard,\hlmarkings_0} $ satisfies
	$ \forall \hlconf\in\hlconfs(\symbunf)\;
	\forall\insta\in\instas(\hlconf)\;
	\forall \hltrans\in\hltranss\;
	\forall\hlmode\in\hlmodes(\hltrans):$
	\begin{equation*}	\confmark(\hlconf,\insta)[\hltrans,\hlmode\rangle
		\ \Leftrightarrow\
		\exists \hlevnt\in\hlevnts: \symbhomom(\hlevnt)=\hltrans \land \confcut(\hlconf,\insta)[\hlevnt,\hlmode\rangle.
	\end{equation*}
\end{proposition}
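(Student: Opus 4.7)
I would prove the two directions separately, noting that the ``$\Leftarrow$'' direction is essentially a direct computation using that $\symbhomom$ is a Petri net homomorphism, while ``$\Rightarrow$'' requires a nontrivial maximality argument about the symbolic unfolding.

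\textbf{Direction $\Leftarrow$.} Assume $\hlevnt\in\hlevnts$ with $\symbhomom(\hlevnt)=\hltrans$ and $\confcut(\hlconf,\insta)[\hlevnt,\hlmode\rangle$. By the homomorphism compatibility condition, $\preset{\hlhomom(\hlevnt)}=\mso\tup{\symbhomom(\hlcond),\hlvar}\with \tup{\hlcond,\hlvar}\in\preset{\hlevnt}\msc$, so $\preset{\hltrans,\hlmode}$ is obtained from $\preset{\hlevnt,\hlmode}$ by applying $\symbhomom$ componentwise on conditions. Since taking labels $\symbhomom$ commutes with forming the marking multi-set from the cut, $\confmark(\hlconf,\insta)\geq\symbhomom(\preset{\hlevnt,\hlmode})=\preset{\hltrans,\hlmode}$, and the guard of $\hltrans$ is exactly $\hlguard(\symbhomom(\hlevnt))$, which is satisfied by $\hlmode$ because $\hlevnt$ is fireable in mode $\hlmode$. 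Hence $\confmark(\hlconf,\insta)[\hltrans,\hlmode\rangle$.

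\textbf{Direction $\Rightarrow$.} Assume $\confmark(\hlconf,\insta)[\hltrans,\hlmode\rangle$. For each $\tup{\hlplace,\hlvar}\in\preset{\hltrans}$, the multi-set inequality $\preset{\hltrans,\hlmode}\leq\confmark(\hlconf,\insta)$ gives a condition $\hlcond\in\confcut(\hlconf)$ with $\symbhomom(\hlcond)=\hlplace$ and $\insta(\hlcondvar_\hlcondevnt(\hlcond))=\hlmode(\hlvar)$. Choose such a $\hlcond$ for each element of $\preset{\hltrans}$; this yields a co-set $B\subseteq\confcut(\hlconf)$ together with arcs $\tup{\hlcond,\hlvar}$ that define a candidate preset for a new event $\hlevnt$ labeled by $\hltrans$. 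The key step is to argue that $\hlevnt$ must already exist in $\hlunf$. To do so, I would extend $\symbunf$ hypothetically by adding the event $\hlevnt$ and a postset of fresh conditions labeled according to $\postset{\hltrans}$, forming a symbolic branching process $\symbunf'$: properties \textit{i)--iv)} of occurrence nets are preserved because $B$ is a co-set in $\confcut(\hlconf)$, the homomorphism extends canonically, and the injectivity-on-same-preset requirement is preserved because the preset of the new event is $B$ together with the arc labels chosen above. Crucially, $\pred(\hlevnt)$ is satisfiable: the instantiation $\insta'$ obtained by augmenting $\insta$ with $\hlvar_\hlevnt\gets\hlmode(\hlvar)$ for every $\hlvar\in\hlvars(\hltrans)$ makes $\locpred(\hlevnt)$ true (the guard clause holds because $\hlmode$ is a valid firing mode, and the clauses $\hlvar_\hlevnt=\hlcondvar_\hlcondevnt(\hlcond)$ hold by the choice of $B$), and $\pred(\hlevnt')$ for $\hlevnt'<\hlevnt$ is already satisfied by $\insta$ because $B\subseteq\confcut(\hlconf)$ implies $\cone{\hlevnt}\setminus\{\hlevnt\}\subseteq\hlconf$. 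Hence $\symbunf'$ is a legitimate symbolic branching process of $\hlpn$ that has $\symbunf$ as prefix. By the maximality of the symbolic unfolding stated in~\cite{ChatainJ04}, $\symbunf$ must already contain (an isomorphic copy of) $\hlevnt$, and that event satisfies $\confcut(\hlconf,\insta)[\hlevnt,\hlmode\rangle$ by construction.

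\textbf{Expected obstacle.} The main delicate point is the $\Rightarrow$ direction, specifically justifying that the extension $\symbunf'$ really is a symbolic branching process rather than merely a branching-process-like structure: one must verify (a) that no structural self-conflict is introduced (the conditions in $B$ are pairwise concurrent, so their histories do not conflict), (b) that no color conflict arises for $\cone{\hlevnt}$ (witnessed by $\insta'$), and (c) that the injectivity condition on events with the same preset is preserved, so that if $\symbunf$ already happens to contain an event with preset $B$ and label $\hltrans$, it is exactly the one we need. Once this bookkeeping is in place, invoking the universal/maximality property of $\symbunf$ closes the argument cleanly.
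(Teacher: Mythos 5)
Your proposal is correct and follows essentially the same route as the paper: the ``$\Leftarrow$'' direction is the same direct computation via the homomorphism compatibility of $\symbhomom$, and the ``$\Rightarrow$'' direction is the same maximality argument — select a matching co-set of conditions in $\confcut(\hlconf)$, adjoin a fresh event labeled $\hltrans$ with fresh postset conditions, and conclude from the maximality of $\symbunf$ that the event must already be present (the paper phrases this as a proof by contradiction, which is the same argument). Your explicit verification that $\pred(\hlevnt)$ is satisfiable via the augmented instantiation $\insta'$ is a detail the paper's proof leaves implicit, so it is a welcome addition rather than a deviation.
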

\begin{proof}
	Let $ \hlunf=\tup{\hlconds,\hlevnts,\hlarcs,\hlguard,\hlcuts_0}$, and let $
	\hlconf\in\hlconfs(\symbunf),
	\insta\in\instas(\hlconf),
	\hltrans\in\hltranss,
	{	\hlmode\in\hlmodes(\hltrans)}$.

	Let
	$ \confmark(\hlconf,\insta)[\hltrans,\hlmode	\rangle $,
	which means
	\begin{equation*}
		\preset{\hltrans,\hlmode}
		\leq \confmark(\hlconf,\insta)
		=\ms{ \tup{\symbhomom(\hlcond),\insta(\hlcondvar_{\hlcondevnt}(\hlcond))} \with \tup{\hlcond,\insta(\hlcondvar_{\hlcondevnt}(\hlcond))}\in\confcut(\hlconf,\insta) } ,
	\end{equation*}
	Let $ \hlconds'\subseteq\confcut(\hlconf) $ be a set of conditions s.t.
	\begin{equation*}
		\preset{\hltrans,\hlmode}=\ms{\tup{\symbhomom(\hlcond),\insta(\hlcondvar_\hlcondevnt(\hlcond))}\with \hlcond\in\hlconds'}.
	\end{equation*}
	Aiming a contradiction, assume there is \emph{no} $ \hlevnt\in\hlevnts$ s.t.\ $\symbhomom(\hlevnt)=\hltrans$ and $ \confcut(\hlconf,\insta)[\hlevnt,\hlmode\rangle $:
	we extend $ \symbunf $ by such an event.	We add to $ \hlevnts $ an event $ \widetilde{\hlevnt} $ with $ \symbhomom(\widetilde{\hlevnt})=\hltrans $ and $ \hlguard(\widetilde{\hlevnt})=\hlguard(\hltrans) $.
	Choose for every $ \hlcond\in\hlconds' $ a variable $ \hlvar^\hlcond\in\hlvars $ s.t.\
	\begin{equation*}
		\ms{(\symbhomom(\hlcond),\hlvar^\hlcond)\with \hlcond\in\hlconds'}=
		\ms{(\hlplace,\hlvar)\with \tup{\hlplace,\hlvar}\in\preset{\hltrans}}\quad(=\preset{\hltrans}).
	\end{equation*}
	We define $ \preset{\widetilde{\hlevnt}}=\ms{\tup{\hlcond,\hlvar^\hlcond}\with \hlcond\in\hlconds'} $.
	Then we have $ \ms{ \tup{\symbhomom(\hlcond),\hlvar}\with \tup{\hlcond,\hlvar}\in\preset{\widetilde{\hlevnt}} }=\preset{\hltrans}=\preset{\symbhomom(\widetilde{\hlevnt})} $.
	For every $ \tup{\hlplace,\hlvar}\in\postset{\hltrans} $,
	we then add
	$ \postset{\hltrans}(\hlplace,\hlvar) $ conditions $ \hlcond $ with $ \symbhomom(\hlcond)=\hlplace $ to $ \hlconds $ and add $ \tup{\widetilde{\hlevnt},\hlvar,\hlcond} $ to $ \hlarcs $.
	We thus get $ \postset{\symbhomom(\widetilde{\hlevnt})}=\ms{ \tup{\symbhomom(\hlcond),\hlvar}\with \tup{\hlcond,\hlvar}\in\postset{\widetilde{\hlevnt}} } $.
	We now created a symbolic branching process bigger than $ \symbunf $, contradicting that $ \symbunf $ is the symbolic unfolding.

\medskip
	Conversely, assume
	$ \exists \hlevnt\in\hlevnts: \symbhomom(\hlevnt)=\hltrans \land \confcut(\hlconf,\insta)[\hlevnt,\hlmode\rangle. $
	Then $ \preset{\hlevnt,\hlmode}\leq \confcut(\hlconf,\insta) $,
	and therefore,
	$ \preset{\hltrans}
	=\ms{ \tup{\symbhomom(\hlcond),\hlvar}\with \tup{\hlcond,\hlvar}\in\preset{\hlevnt} }
	\leq \ms{ \tup{\symbhomom(\hlcond),\hlvar}\with \tup{\hlcond,\hlvar}\in\confcut(\hlconf,\insta) }
	=\confmark(\hlconf,\insta) $,
	meaning
	$ \confmark(\hlconf,\insta)[\hltrans,\hlmode	\rangle $.
\end{proof}

Given a configuration $ \hlconf $ of a symbolic branching process $ \beta=\tup{\hlonet,\hlhomom} $, we define $ {\Uparrow}\hlconf$ as the pair $ \tup{\hlonet',\hlhomom'} $, where $ \hlonet' $ is the unique subnet of $ \hlonet $ whose set of nodes is $\{ x\in\hlconds\cup\hlevnts\with x\notin (\hlconf\cup\rightarrow\hlconf)\land \forall y\in\hlconf:\lnot(y\sharp x) \land( \hlconf\cup\{ x \}\text{ is not in color conflict})  \} $
with the set~$ \confcuts(\hlconf) $ of initial cuts,
and $ \hlhomom' $ is the restriction of $ \hlhomom $ to the nodes of $ \hlonet' $.
The branching process $ {\Uparrow}\hlconf $ is referred to as the \emph{future} of $ \hlconf $.

\begin{proposition}\label{prop:futureIso}
	If $ \beta $ is a symbolic branching process of $ \tup{\hlns,\hlmarkings_0} $ and $ \hlconf $ is a configuration of $ \beta $, then $ {\Uparrow}\hlconf $ is a branching process of $ \tup{\hlns,\confmarks(\hlconf)} $.
	Moreover, if $ \beta $ is the unfolding of $ \tup{\hlns,\hlmarkings_0} $, then $ {\Uparrow}\hlconf $ is the unfolding of $ \tup{\hlns,\confmarks(\hlconf)} $.
\end{proposition}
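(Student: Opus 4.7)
The plan is to split the argument along the two claims: first to establish that ${\Uparrow}\hlconf$ is a symbolic branching process of $\tup{\hlns,\confmarks(\hlconf)}$, and then to upgrade this to the maximality assertion when $\beta$ is the unfolding.

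For the branching-process claim I would verify the four occurrence-net axioms for the underlying net of ${\Uparrow}\hlconf$, taking $\confcuts(\hlconf)$ as its set of initial cuts. Axioms \textit{i)}--\textit{iii)} transfer verbatim from $\hlonet$ because restricting to a subnet cannot introduce new self-conflicts, new causal self-predecessors, or new infinite descending chains. Axiom \textit{iv)} is the substantive point: a condition $\hlcond$ surviving in ${\Uparrow}\hlconf$ either lies in $\confcut(\hlconf)$ --- in which case its producing event in $\hlonet$, if any, belonged to $\hlconf$ and has been removed, so $\hlcond$ becomes case \textit{iv.a)} --- or it lies in the future of $\confcut(\hlconf)$, in which case its unique producing event survives and $\hlcond$ remains case \textit{iv.b)}. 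That each new initial cut places exactly one color on every $\hlcond\in\confcut(\hlconf)$ is immediate from the definition of $\confcut(\hlconf,\insta)$.

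Next I would check that the restricted homomorphism $\hlhomom'$ is initial --- which reduces to the tautological identity $\{\ms{\tup{\hlhomom(\hlcond),\hltok}\with\tup{\hlcond,\hltok}\in\hlcut}\with\hlcut\in\confcuts(\hlconf)\}=\confmarks(\hlconf)$ --- and that every event $\hlevnt\in{\Uparrow}\hlconf$ has a satisfiable predicate in the new branching process. The new predicate $\pred'(\hlevnt)$ is obtained from $\pred(\hlevnt)$ by replacing $\pred(\bot)$ with a disjunction over $\confcuts(\hlconf)$ and dropping the local predicates of events of $\hlconf$. Since $\hlconf\cup\{\hlevnt\}$ is not in color conflict by construction of the node set, and hence neither is $\hlconf\cup\cone{\hlevnt}$ (its joint predicate coincides up to implication with $\pred(\hlevnt)\land\bigwedge_{\hlevnt'\in\hlconf}\pred(\hlevnt')$), any satisfying assignment $\insta$ of the latter restricts, after assigning the values $\insta(\hlcondvar_\hlcondevnt(\hlcond))$ to the fresh initial variables indexed by $\hlcond\in\confcut(\hlconf)$, to a satisfying instantiation of $\pred'(\hlevnt)$. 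Injectivity of $\hlhomom'$ on events with the same preset is inherited from $\hlhomom$.

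For the maximality claim, suppose $\beta=\symbunf(\tup{\hlns,\hlmarkings_0})$ and that some symbolic branching process $\beta^\ast$ of $\tup{\hlns,\confmarks(\hlconf)}$ strictly extends ${\Uparrow}\hlconf$ via an injective initial homomorphism $\phi$. The strategy is to glue $\beta^\ast$ onto $\beta$ along $\confcut(\hlconf)$: identify each case-\textit{iv.a)} condition of $\beta^\ast$ with the corresponding element of $\confcut(\hlconf)$, keep $\hlconf$ together with all nodes of $\beta$ lying outside ${\Uparrow}\hlconf$ unchanged, and inherit the remaining nodes and arcs from $\beta^\ast$. The result is a symbolic branching process of $\tup{\hlns,\hlmarkings_0}$ that is strictly larger than $\beta$, contradicting maximality of the unfolding. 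I expect the main obstacle to be verifying that this glued structure indeed satisfies the branching-process axioms, and in particular that the predicates of transplanted events remain satisfiable under the original $\pred(\bot)$: this is done by combining any satisfying instantiation of such an event in $\beta^\ast$ (which commits to one specific initial cut of $\beta^\ast$, hence to an element of $\confcuts(\hlconf)$) with an instantiation of $\hlconf$ realizing exactly that cut, whose existence is guaranteed by the very construction of $\confcuts(\hlconf)$.
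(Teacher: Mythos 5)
Your proposal is correct and follows essentially the same route as the paper's proof: verify occurrence-net axioms \textit{i)}--\textit{iv)} for the restricted net with $\confcuts(\hlconf)$ as initial cuts, check that the restricted homomorphism is initial and injective on events with equal presets, and derive maximality of ${\Uparrow}\hlconf$ from maximality of $\beta$. You supply more detail than the paper does on two points it leaves implicit --- the satisfiability of the relocated predicates and the gluing argument behind the maximality claim --- but these are elaborations of the same argument, not a different one.
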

\begin{proof}
	Let $ {\Uparrow}\hlconf=\tup{\hlonet',\hlhomom'} $ with $ \hlonet'=\tup{\hlconds',\hlevnts',\hlflowfunc',\hlguard',\confcuts(\hlconf)} $.
	To show that $ \hlonet' $ is an occurrence net, we have to show {\it i -- iv} from the Definition~\ref{def:page:hl-onet}.
	{\it i -- iii} are purely structural properties and follow from the fact that $ \hlonet $ is an occurrence net.
	{\it iv} is satisfied since
	$ \forall \hlcond\in\confcut(\hlconf)\; \forall \hlcut\in\hlcuts(\hlconf): \sum_{\hltok\in\hltoks} \hlcut(\hlcond,\hltok)=1 $
	and $ \forall \hlcond\in\hlconds'\setminus\confcut(\hlconf)\; \forall \hlcut\in\hlcuts(\hlconf): \sum_{\hltok\in\hltoks} \hlcut(\hlcond,\hltok)=0 $.
	$ \hlhomom' $ is a homomorphism that is injective on events with the same preset since $ \hlhomom $ is,
	and that $ \hlhomom' $ is initial follows by Prop.~\ref{prop:MarkingsRepresented} and Prop.~\ref{prop:unfComplete}.

	When $ \beta $ is the symbolic unfolding of $ \tup{\hlns,\hlmarkings_0} $,
	then the maximality of $ {{\Uparrow}\hlconf} $ follows from the maximality of $ \beta $,
	making $ {{\Uparrow}\hlconf} $ the symbolic unfolding of $ \tup{\hlns,\hlmarkings(\hlconf)} $.
\end{proof}

\section{Finite \& complete prefixes of symbolic unfoldings}\label{sec:complPref}
We combine ideas from \cite{EsparzaRV02} (computing small finite and complete prefixes of unfoldings) with  results from \cite{ChatainJ04} (symbolic unfoldings of high-level Petri nets) to define and construct
complete finite prefixes of symbolic unfoldings of high-level Petri nets.
We generalize the concepts and the ERV-algorithm from \cite{EsparzaRV02} for safe P/T Petri nets to a class of safe high-level Petri nets, and compare this generalization to the original.
We will see that for P/T nets interpreted as high-level nets, all generalized concepts
(i.e., complete prefixes, adequate orders, cut-off events),
and, as a consequence, the result of the generalized ERV-algorithm, all coincide with their P/T counterparts.

We start by lifting the definition of completeness
to the level of symbolic unfoldings.
Together with Prop.~\ref{prop:MarkingsRepresented} and Prop.~\ref{prop:unfComplete}, this can be seen as a direct translation from the low-level case described, e.g., in \cite{EsparzaRV02}.
\begin{definition}[Complete prefix]\label{def:ComplPref}
	Let $ \beta=\tup{\hlonet,\hlhomom} $ be a prefix of the symbolic unfolding of a high-level Petri net~$ \hlpn $, with events $ \hlevnts' $.
	Then $ \beta $ is called \emph{complete} if for every reachable marking $ \hlmarking $ in $ \hlpn $
	there exists $ \hlconf\in\hlconfs(\beta) $ and $ \insta\in\instas(\hlconf) $ s.t.\
	\begin{enumerate}[label=\roman*),topsep=4pt]
		\item
		$
		\hlmarking=\confmark(\hlconf,\insta)  $,\qquad and
		\item $ \forall \hltrans\in\hltranss\,
		\forall\hlmode\in\hlmodes(\hltrans):\  \hlmarking[\hltrans,\hlmode\rangle\ \Rightarrow\
		\exists\hlevnt\in\hlevnts':
		\hlhomom(\hlevnt)=\hltrans
		\ \land\ \confcut(\hlconf,\insta)[\hlevnt,\hlmode\rangle.
		$
	\end{enumerate}
\end{definition}

We now define the class $ \Nf $ of high-level Petri nets
for which we generalize the construction of finite and complete prefixes of the unfolding of \emph{safe}
P/T Petri nets from~\cite{EsparzaRV02}.
We discuss the properties defining this class, and describe how it generalizes  safe P/T nets.
\begin{definition}[Class $ \Nf $]\label{def:Nf}
	The class $ \Nf $ contains all finite high-level Petri nets $ \hlpn=\tup{\hlplaces,\hltranss,\hlflowfunc,\hlguard,\hlmarkings_0} $
	satisfying the following three properties:
	\begin{enumerate}[label=\textbf{(\arabic*)},leftmargin=*,topsep=4pt]
		\item The net is \emph{safe}, i.e., in every reachable marking there lies at most $ 1 $ color on every place (formally; $ \forall \hlmarking\in\reachable(\hlpn)\,\forall\hlplace\in\hlplaces: \sum_{\hltok\in\hltoks}\hlmarking(\hlplace,\hltok)\leq 1 $).
		\item Guards are written in a decidable theory with the set $ \hltoks $ as its domain of discourse.
		\item The net has \textbf{f}initely many reachable markings (formally; $ |\reachable(\hlpn)|<\infty $).
	\end{enumerate}
\end{definition}
\newcommand{\itmSafety}{\textbf{(1)}}
\newcommand{\itmGuards}{\textbf{(2)}}
\newcommand{\itmFinite}{\textbf{(3)}}

We require the safety property \itmSafety\ for two reasons;
on the one hand, to avoid adding to the already heavy notation.
On the other hand, while we think that a generalization to bounded high-level Petri nets is possible, it comes with all the troubles known from going from safe to $ k $-bounded in the P/T case in \cite{EsparzaRV02}, plus the problems arising from the expressive power of the high-level formalism.
We therefore postpone this generalization to future work.
Note that, under the safety condition, we can w.l.o.g.\ assume
$ \hlns $ to be ordinary (i.e., $ \forall x,y\in\hlplaces\cup\hltranss: \sum_{\hlvar\in\hlvars}\hlflowfunc(x,v,y)\leq 1 $), since
transitions violating this property could never fire.
The finiteness of~$ \hlns $ implies that we can assume $ \hlvars $ to be finite.

While property \itmGuards\ seems very strong,
the goal is an algorithm that generates a complete finite prefix of the symbolic unfolding of a given high-level Petri net.
The definition of symbolic branching processes requires the predicate of every event added to the prefix to be satisfiable,
and the predicates are build from the guards in the given net.
Thus, satisfiability checks in the generation of the prefix seem for now inevitable.
An example for such a theory is Presburger arithmetic~\cite{Presburger30}, which is a first-order theory of the natural numbers with addition.
The guards in the example from Figure~\ref{fig:runEx} are expressible in Presburger arithmetic.

We need property~\itmFinite\ to ensure that the generalized version of the cut-off criterion from \cite{EsparzaRV02}
yields a finite prefix constructed in the generalized ERV-Algorithm.
$ |\reachable(\hlpn)|<\infty $ can be ensured by having a finite set $ \hltoks $ of colors.
In Sec.~\ref{sec:InfinteCase}, we identify a class of high-level Petri nets with infinitely many reachable markings for which the algorithm works with an adapted cut-off criterion.

Under these three assumptions we generalize the finite safe P/T Petri nets considered in \cite{EsparzaRV02}: every such P/T net can be seen as a high-level Petri net with $ \hltoks=\{\bullet \} $ and all guards being $ \true $,
and thus satisfying the three properties above.
Replacing the safety property \itmSafety\ by a respective ``$ k $-bounded property'' would result in a generalization of $ k $-bounded P/T nets.
In Sec.~\ref{sec:expansions},
we compare the result of the generalized ERV-algorithm Alg.~\ref{alg:complpref} applied to a high-level net to
the result of the original ERV-algorithm from \cite{EsparzaRV02} applied to the high-level net's expansion.

For the rest of the section let $ \hlpn=\tup{\hlplaces,\hltranss,\hlflowfunc,\hlguard,\hlmarkings_0}\in\Nf $
with symbolic unfolding $ \symbunf=\tup{\hlunf,\symbhomom}=\tup{\hlconds,\hlevnts,\hlarcs,\hlguard,\hlcuts_0,\symbhomom} $.

\subsection{Generalizing adequate orders and cut-off events}\label{sec:adequateOrders}
We lift the concept of adequate orders on the configurations of an occurrence net
to the level of symbolic unfoldings.
A main property of adequate orders is the preservation by finite \emph{extensions},
which are defined as for P/T-nets (cp.\ \cite{EsparzaRV02}):

Given a configuration $ \hlconf $, we denote $ \hlconf\cup\hlext $ by $ \hlconf{\oplus}\hlext $ if $ \hlconf\cup\hlext $ is a configuration
such that $ \hlconf\cap\hlext=\emptyset $. We say that $ \hlconf{\oplus}\hlext $ is an \emph{extension} of $ \hlconf $, and that $ \hlext $ is a \emph{suffix} of $ \hlconf $.
Obviously, for a configuration $ \hlconf' $, if $ \hlconf\subsetneq\hlconf' $ then there is a nonempty suffix $ \hlext $ of $ \hlconf $ such that $ \hlconf{\oplus}\hlext=\hlconf' $.
For a configuration $\hlconf{\oplus} \hlext $, denote by
$ \hlonet(\hlconf|\hlext)=\tup{ \confcut(\hlconf)\cup{{\rightarrow}\hlext}\cup{\hlext{\rightarrow}}, \hlext,\hlarcs',\confcuts(\hlconf) } $ the occurrence net ``around $ \hlext $'' from $ \confcut(\hlconf) $, where $ \hlarcs' $ is the restriction of $ \hlarcs $ to the nodes of $ \hlonet(\hlconf|\hlext) $.
Note that for every finite configuration $ \hlconf $ with an extension $ \hlconf{\oplus}\hlext $, we have that
$ \hlext $ is a configuration of $ {\Uparrow}\hlconf $.

We abbreviate for a marking
$ \hlmarking $ the fact
$
\exists \insta\in\instas(\hlconf{\oplus}\hlext):
\confmark(\hlconf,\insta|_{\hlvars_{\hlconf\cup\{\bot\}}})=\hlmarking $ by $ 	\hlconf\llbracket\hlmarking\rrbracket\hlext$
to improve readability.
Thus, $ \hlconf\llbracket\hlmarking\rrbracket\hlext $ means that the transitions corresponding to the events in $ \hlext $ can fire from $ \hlmarking\in\confmarks(\hlconf) $.

Since  we consider safe high-level Petri nets, we can relate two cuts representing the same set of places in the following way:
\begin{definition}\label{def:uniquemirror}
	Let $ \hlconf_1, \hlconf_2\in\hlconfs(\symbunf) $ with $ \symbhomom(\confcut(\hlconf_1))=\symbhomom(\confcut(\hlconf_2)) $.
	Then there is a unique bijection $ \mirror: \confcut(\hlconf_1)\to\confcut(\hlconf_2) $
	preserving $ \symbhomom $.
	We call this mapping $ \mirror_{\hlconf_1}^{\hlconf_2} $.
\end{definition}

The now stated Prop.~\ref{prop:futureIsoWeak2}
is a weak version of the arguments
in \cite{EsparzaRV02},
where the Esparza et~al. infer from
the low-level version of Prop.~\ref{prop:futureIso}
that
if the cuts of two low-level configurations represent the same marking in the low-level net,
then their futures are isomorphic, and the respective (unique) isomorphism maps the suffixes of one configuration to the suffixes of the other.
\begin{proposition}\label{prop:futureIsoWeak2}
	Let $ \hlconf_1 $ and $ \hlconf_2 $ be two finite configurations in $ \symbunf $,
	and let $ \hlext $ be a suffix of $ \hlconf_1 $.
	If there is a marking $ \hlmarking\in\confmarks(\hlconf_1)\cap\confmarks(\hlconf_2) $
	s.t.\
	$ \hlconf_1\llbracket\hlmarking\rrbracket\hlext $,
	then there is a unique monomorphism $ \extmonom_{1,\hlext}^{2}:\hlonet(\hlconf_1|\hlext)\to{\Uparrow}\hlconf_2 $ that satisfies
	$ \extmonom_{1,\hlext}^{2}(\confcut(\hlconf_1))=\confcut(\hlconf_2) $
	and preserves the labeling $ \symbhomom $.\\
	For this monomorphism we have that $ \extmonom_{1,\hlext}^{2}(\hlext) $ is a suffix of $ \hlconf_2 $.
\end{proposition}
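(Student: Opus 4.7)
The plan is to exploit Proposition~\ref{prop:futureIso}, which lets us regard ${\Uparrow}\hlconf_2$ as the symbolic unfolding of $\tup{\hlns,\confmarks(\hlconf_2)}$, and then to build $\extmonom_{1,\hlext}^{2}$ in two stages: first match the initial cuts via the unique bijection of Definition~\ref{def:uniquemirror}, and then extend event by event along a linear extension of $\leq$ restricted to $\hlext$, invoking Proposition~\ref{prop:unfComplete} to supply the image events inside ${\Uparrow}\hlconf_2$. For the base case, fix $\insta_1\in\instas(\hlconf_1)$ and $\insta_2\in\instas(\hlconf_2)$ with $\confmark(\hlconf_i,\insta_i)=\hlmarking$. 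Safety together with $\hlmarking\in\confmarks(\hlconf_1)\cap\confmarks(\hlconf_2)$ forces $\symbhomom(\confcut(\hlconf_1))=\symbhomom(\confcut(\hlconf_2))$ to equal the support of $\hlmarking$, and hence Definition~\ref{def:uniquemirror} supplies a unique label-preserving bijection $\mirror_{\hlconf_1}^{\hlconf_2}$; I take this as the restriction of $\extmonom_{1,\hlext}^{2}$ to $\confcut(\hlconf_1)$, noting that the instantiated cut $\confcut(\hlconf_1,\insta_1)$ is carried to $\confcut(\hlconf_2,\insta_2)\in\confcuts(\hlconf_2)$.

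Next, enumerate $\hlext=\{\hlevnt_1,\dots,\hlevnt_n\}$ respecting $\leq$. The hypothesis $\hlconf_1\llbracket\hlmarking\rrbracket\hlext$ yields an instantiation $\insta\in\instas(\hlconf_1\oplus\hlext)$ extending $\insta_1$ under which the labels $\symbhomom(\hlevnt_1),\dots,\symbhomom(\hlevnt_n)$ fire in order from $\hlmarking$ with concrete modes read off from $\insta$. Assume inductively that $\hlevnt_1,\dots,\hlevnt_{i-1}$ and their incident conditions have already been mapped into ${\Uparrow}\hlconf_2$. Then $\preset{\hlevnt_i}\subseteq\confcut(\hlconf_1)\cup\bigcup_{j<i}\postset{\hlevnt_j}$ already lies in the domain of $\extmonom_{1,\hlext}^{2}$, and Proposition~\ref{prop:unfComplete} applied inside ${\Uparrow}\hlconf_2$ produces an event $\hlevnt_i'$ with $\symbhomom(\hlevnt_i')=\symbhomom(\hlevnt_i)$ and preset $\extmonom_{1,\hlext}^{2}(\preset{\hlevnt_i})$. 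The injectivity of symbolic branching processes on events sharing both label and preset forces $\hlevnt_i'$ to be unique, and ordinariness (guaranteed by safety) uniquely identifies each image condition in $\postset{\hlevnt_i}$ via the matching flow variable. This yields $\extmonom_{1,\hlext}^{2}$ as a uniquely determined label- and preset-preserving monomorphism into ${\Uparrow}\hlconf_2$.

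It remains to verify that $\extmonom_{1,\hlext}^{2}(\hlext)$ is a suffix of $\hlconf_2$. Causal closedness, disjointness from $\hlconf_2$, and absence of structural conflict with $\hlconf_2$ transfer directly from $\hlonet(\hlconf_1|\hlext)$ through $\extmonom_{1,\hlext}^{2}$, since the map is injective and preserves the flow relation. The main obstacle is ruling out color conflict between $\hlconf_2$ and $\extmonom_{1,\hlext}^{2}(\hlext)$: we must exhibit a single instantiation of $\hlconf_2\cup\extmonom_{1,\hlext}^{2}(\hlext)$ that simultaneously satisfies every local predicate. The candidate extends $\insta_2$ to each $\hlvars_{\extmonom_{1,\hlext}^{2}(\hlevnt_i)}$ by copying the values $\insta$ assigns on $\hlvars_{\hlevnt_i}$; since $\extmonom_{1,\hlext}^{2}$ preserves $\symbhomom$, each $\locpred(\hlevnt_i')$ is syntactically the same guard contribution as $\locpred(\hlevnt_i)$ up to the renaming of local variables, and since $\extmonom_{1,\hlext}^{2}$ preserves presets, the equations $\hlvar_{\hlevnt_i'}=\hlcondvar_{\hlcondevnt}(\cdot)$ translate correctly, with the base-case identity $\insta_2(\hlcondvar_{\hlcondevnt}(\mirror_{\hlconf_1}^{\hlconf_2}(\hlcond)))=\insta_1(\hlcondvar_{\hlcondevnt}(\hlcond))$ bridging the two initial cuts on $\confcut(\hlconf_1)$. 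Every predicate then evaluates to $\true$, so $\hlconf_2\oplus\extmonom_{1,\hlext}^{2}(\hlext)$ is a configuration and the proposition follows.
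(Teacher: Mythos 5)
Your proposal is correct and follows essentially the same route as the paper's proof: the base case is handled by the unique label-preserving bijection of Definition~\ref{def:uniquemirror} (available by safety), and the inductive step peels off events of $\hlext$ in causal order, invoking Proposition~\ref{prop:unfComplete} to supply the image event in ${\Uparrow}\hlconf_2$ and safety/ordinariness to match the surrounding conditions, with the suffix property following from the transported instantiation. The only point the paper adds that you omit is the one-line remark covering an infinite suffix $\hlext$ (taking the union of the uniquely determined maps for its finite sub-extensions), which you may want to append since the statement does not restrict $\hlext$ to be finite.
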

\textit{Notation.}
For functions $ f:X\to Y $ and $ f':X'\to Y $ with $ X\cap X'=\emptyset $ we define
$ f\uplus f':X\cup X'\to Y $
by mapping $ x $ to $ f(x) $ if $ x\in X $ and to $ f'(x) $ if $ x\in X' $.
\begin{proof}
	By induction over the size $ k=|\hlext| $ of the suffix $ \hlext $.

	\emph{Base case $ k=0 $}.
	This means $ \hlext=\emptyset $.
	Then $ \hlonet(\hlconf_1|\hlext)=\tup{\confcut(\hlconf_1),\emptyset,\emptyset,\confcuts(\hlconf_1)} $.
	Since $ \hlmarking\in\confmarks(\hlconf_1)\cap\confmarks(\hlconf_2) $,
	we know that $ \symbhomom(\confcut(\hlconf_1))=\symbhomom(\confcut(\hlconf_2)) $.
	Since we only consider safe nets,
	$ \extmonom_{1,\hlext}^2 $ is uniquely realized by $ \mirror_{\hlconf_1}^{\hlconf_2}: \confcut(\hlconf_1)\to\confcut(\hlconf_2) $ from Def.~\ref{def:uniquemirror}.

	\emph{Induction step. Let $ k>0 $.}
	Let $ \insta\in\instas(\hlconf_1{\oplus}\hlext)$ s.t.\ $
	\confmark(\hlconf_1,\insta|_{\hlvars_{\hlconf_1\cup\{\bot\}}})=\hlmarking $.
	Let $ \hlevnt\in\Min(\hlext) $.
	Then for $ \hlmode= [\hlvar\gets\insta(\hlvar_\hlevnt)]_{\hlvar\in\hlvars(\hlevnt)}$ we have
	$ \hlmarking[\symbhomom(\hlevnt),\hlmode\rangle $.
	Thus, by Prop.~\ref{prop:unfComplete},
	$ \exists \hlevnt'\in\hlevnts:\symbhomom(\hlevnt')=\symbhomom(\hlevnt)\land \hlconf_2{\oplus}\{\hlevnt'\}\in\hlconfs(\symbunf) $.
	This means $ {\rightarrow}\hlevnt'\subseteq (\hlconds_0\cup(\hlconf_2{\rightarrow}))\setminus({\rightarrow}\hlconf_2) $; else, $ \hlconf_2{\oplus}\{\hlevnt' \} $ would not be a configuration.
	Thus, $ \hlevnt' $ is an event in $ {\Uparrow}\hlconf_2 $.
	Since $ \symbhomom(\hlevnt)=\symbhomom(\hlevnt') $,
	we get by definition of homomorphisms that
	$ \{ \tup{\symbhomom(\hlcond),\hlvar}\with \tup{\hlcond,\hlvar}\in\postset{\hlevnt} \}
	=\{ \tup{\symbhomom(\hlcond),\hlvar}\with \tup{\hlcond,\hlvar}\in\postset{\hlevnt'} \} $.
	The net $ \hlpn $ is safe, therefore we can define the bijection
	$ \mirror_1:(\hlevnt{\rightarrow})\to(\hlevnt'{\rightarrow}) $
	by $ \mirror_1(\hlcond)=\hlcond'\Leftrightarrow \symbhomom(\hlcond)=\symbhomom(\hlcond') $.
	We now define $ \extmonom_1: \hlonet(\hlconf_1|\{\hlevnt\})\to{{\Uparrow}\hlconf_2} $ by
	$ \extmonom_1=\mirror_{\hlconf_1}^{\hlconf_2}\uplus \{\hlevnt\mapsto\hlevnt'\}\uplus\mirror_1 $,
	which is a homomorphism satisfying the claimed conditions.

	Let now $ \hlconf_1'=\hlconf_1\cup\{ \hlevnt \} $,
	$ \hlconf_2'=\hlconf_2\cup\{\extmonom_1(\hlevnt) \} $
	and $ \hlext'=\hlext\setminus\{ \hlevnt \} $.
	We then have for $ \hlmarking' $ given by $ \hlmarking[\symbhomom(\hlevnt),\hlmode\rangle \hlmarking' $
	that $ \hlconf_1'\llbracket\hlmarking'\rrbracket\hlext' $,
	$ \hlmarking'\in\hlmarkings(\hlconf_1')\cap\hlmarkings(\hlconf_2') $,
	and $ |\hlext'|<k $.
	Thus, by the induction hypothesis, we get that there is a unique monomorphism $ \extmonom_2:\hlonet(\hlconf_1'|\hlext')\to{\Uparrow}\hlconf_2' $ satisfying the conditions above.
	Since $ \extmonom_1 $ and $ \extmonom_2 $ coincide on $ \confcut(\hlconf_1') $,
	we can now define $ \extmonom_{1,\hlext}^{2} $ by ``gluing together'' $ \extmonom_1 $ and $ \extmonom_2 $ at $ \confcut(\hlconf_1') $.

	This proves the claim for finite extensions.
	For an infinite extension, every node also contained in a finite extension.
	Due to uniqueness of the homomorphisms, we can define the $ \extmonom_{1,\hlext}^{2} $ in the case of an infinite $ \hlext $ as the union of all homomorphisms of smaller finite extensions.
\end{proof}

Equipped with Prop.~\ref{prop:futureIsoWeak2}, we can now lift the concept of adequate order to the level of symbolic branching processes.
Compared to \cite{McMillan95,EsparzaRV02}, the monomorphism~$ \extmonom_{1,\hlext}^{2} $ defined above  replaces the isomorphism $ I_1^2 $ between $ {\Uparrow}\hlconf_1 $ and $ {\Uparrow}\hlconf_2 $ for two low-level configurations $ \hlconf_1,\hlconf_2 $ representing the same marking.

\begin{definition}[Adequate order]\label{def:adequateOrder}
	A partial order $ \ado $ on the set of finite configurations of the symbolic unfolding of a high-level Petri net is an \emph{adequate order} if:
	\begin{enumerate}[label=\roman*),topsep=4pt]
		\item $ \ado $ is well-founded,
		\item $ \hlconf_1\subset\hlconf_2 $ implies $ \hlconf_1\ado\hlconf_2 $, and
		\item $ \ado $ is preserved by finite extensions in the following way:
		if $ \hlconf_1,\hlconf_2 $ are two finite configurations, and
		$ \hlconf_1{\oplus}\hlext $ is a finite extension of~$ \hlconf_1 $ such that
		there is a marking $ \hlmarking\in\confmarks(\hlconf_1)\cap\confmarks(\hlconf_2) $
		satisfying
		$ \hlconf_1\llbracket\hlmarking\rrbracket\hlext $,
		then the monomorphism
		$ \extmonom_{1,\hlext}^{2} $ from above satisfies
		$ \hlconf_1\ado\hlconf_2 \Rightarrow \hlconf_1{\oplus}\hlext\ado
		\hlconf_2{\oplus} \extmonom_{1,\hlext}^{2}(\hlext) $.
	\end{enumerate}
\end{definition}
In the case of a P/T net
we have $ |\hlmarkings(\hlconf)|=1 $ for every configuration $ \hlconf $,
and therefore, Def.~\ref{def:adequateOrder} coincides with its P/T version \cite{EsparzaRV02}.
We could alternatively generalize the P/T case by replacing  `$\exists\hlmarking\in \confmarks(\hlconf_1)\cap\confmarks(\hlconf_2) $ s.t.\ $ \hlconf_1\llbracket\hlmarking\rrbracket\hlext $' by
`$ \confmarks(\hlconf_1)=\confmarks(\hlconf_2) $',
and
use the isomorphism $ I_1^2 $ between
$ {\Uparrow}\hlconf_1 $ and $ {\Uparrow}\hlconf_2 $
to define preservation by finite extension.
However, in the upcoming generalization of the ERV-algorithm from~\cite{EsparzaRV02},
the generalized cut-off criterion exploits property iii) of adequate orders.
Using `$ \confmarks(\hlconf_1)=\confmarks(\hlconf_2) $' would produce an exponential blowup of the generated prefix's size.
This is circumvented by using `$\exists  \hlmarking\in\confmarks(\hlconf_1)\cap\confmarks(\hlconf_2) $
s.t.\
$ \hlconf_1\llbracket\hlmarking\rrbracket\hlext $',
which however leads to obtaining merely a monomorphism that depends on the considered suffix $ \hlext $, instead of an isomorphism between the futures.
We now show that this monomorphism sufficient:

The upcoming proof that the generalized ERV-algorithm is complete
is structurally analogous to the respective proof in \cite{EsparzaRV02}.
It	uses that, under the conditions of Def.~\ref{def:adequateOrder}~iii),
we also have
$ \hlconf_2\ado\hlconf_1 \Rightarrow \hlconf_2{\oplus} \extmonom_{1,\hlext}^{2}(\hlext)\ado\hlconf_1{\oplus}\hlext$.
This result would directly be obtained if  $ \extmonom_{1,\hlext}^{2} $ was an isomorphism, as $ I_1^2 $ is in the low-level case.
However, a monomorphism is an isomorphism when its codomain is restricted to its range.
This idea is used in the proof of the following proposition,
which states that $ \extmonom_{1,\hlext}^2 $ indeed satisfies the above property.
\begin{proposition}\label{prop:adequateOrderOtherDir}
	Let $ \ado $ be an adequate order.
	Under the conditions of Def.~\ref{def:adequateOrder}~iii)  the monomorphism~$ \extmonom_{1,\hlext}^2 $ also satisfies
	$ \hlconf_2\ado\hlconf_1 \Rightarrow \hlconf_2{\oplus} \extmonom_{1,\hlext}^{2}(\hlext)\ado\hlconf_1{\oplus}\hlext$.
\end{proposition}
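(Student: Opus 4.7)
The plan is to reduce the statement to a direct application of property iii) of Def.~\ref{def:adequateOrder} in a \emph{swapped} direction: to deduce $\hlconf_2\oplus\extmonom_{1,\hlext}^{2}(\hlext)\ado\hlconf_1\oplus\hlext$ from $\hlconf_2\ado\hlconf_1$, I would apply iii) with $\hlconf_2$ playing the role of ``$\hlconf_1$'', with $\hlconf_1$ playing the role of ``$\hlconf_2$'', and with $\hlext':=\extmonom_{1,\hlext}^{2}(\hlext)$ playing the role of the suffix. By Prop.~\ref{prop:futureIsoWeak2} we already know that $\hlext'$ is a suffix of $\hlconf_2$, so the resulting monomorphism $\extmonom_{2,\hlext'}^{1}$ should provide the required inequality, provided that (a) the hypotheses of iii) are met in the swapped configuration and (b) the swapped suffix maps back to $\hlext$.

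For (a), I would transport the instantiation realizing $\hlmarking$. Fix $\insta\in\instas(\hlconf_1\oplus\hlext)$ with $\confmark(\hlconf_1,\insta|_{\hlvars_{\hlconf_1\cup\{\bot\}}})=\hlmarking$, and fix any $\insta_2\in\instas(\hlconf_2)$ with $\confmark(\hlconf_2,\insta_2)=\hlmarking$ (which exists because $\hlmarking\in\confmarks(\hlconf_2)$). Define $\insta'$ on $\hlvars_{\hlconf_2\cup\hlext'\cup\{\bot\}}$ by $\insta':=\insta_2$ on $\hlvars_{\hlconf_2\cup\{\bot\}}$ and $\insta'(\hlvar_{\extmonom_{1,\hlext}^{2}(\hlevnt)}):=\insta(\hlvar_\hlevnt)$ for each $\hlevnt\in\hlext$ and $\hlvar\in\hlvars(\hlevnt)$. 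Since $\extmonom_{1,\hlext}^{2}$ preserves labels and incidences and the net is safe, the bijection $\mirror_{\hlconf_1}^{\hlconf_2}$ identifies corresponding conditions in the two cuts, and both instantiations assign them the color prescribed by $\hlmarking$; this ensures that each preset-matching equation in $\locpred(\extmonom_{1,\hlext}^{2}(\hlevnt))$ is satisfied under $\insta'$, while the guard parts are satisfied by renaming from $\insta$. Hence $\insta'\in\instas(\hlconf_2\oplus\hlext')$ and $\hlconf_2\llbracket\hlmarking\rrbracket\hlext'$.

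With the hypotheses of iii) in place, the swapped application yields a label-preserving monomorphism $\extmonom_{2,\hlext'}^{1}:\hlonet(\hlconf_2|\hlext')\to{\Uparrow}\hlconf_1$ taking $\confcut(\hlconf_2)$ to $\confcut(\hlconf_1)$, together with $\hlconf_2\oplus\hlext'\ado\hlconf_1\oplus\extmonom_{2,\hlext'}^{1}(\hlext')$. To address (b), I would restrict the codomain of $\extmonom_{1,\hlext}^{2}$ to its image, which is exactly $\hlonet(\hlconf_2|\hlext')$; as a label-preserving monomorphism onto this image it is a label-preserving isomorphism. Its inverse is then a label-preserving monomorphism $\hlonet(\hlconf_2|\hlext')\to\hlonet(\hlconf_1|\hlext)\hookrightarrow{\Uparrow}\hlconf_1$ sending $\confcut(\hlconf_2)$ to $\confcut(\hlconf_1)$, so by the uniqueness clause of Prop.~\ref{prop:futureIsoWeak2} it coincides with $\extmonom_{2,\hlext'}^{1}$. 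In particular $\extmonom_{2,\hlext'}^{1}(\hlext')=\hlext$, and the inequality above becomes the desired $\hlconf_2\oplus\extmonom_{1,\hlext}^{2}(\hlext)\ado\hlconf_1\oplus\hlext$.

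The main obstacle I anticipate is the bookkeeping around $\insta'$ in step (a): one must check that the values assigned to the variables $\hlvar_{\hlevnt'}$ for $\hlevnt'\in\hlext'$ are simultaneously consistent with the transported guard and with the values coming from $\confcut(\hlconf_2)$ via $\insta_2$. The safety assumption is essential here, since it forces the color stored in every condition of a cut to be determined by the labeling and $\hlmarking$ alone, making $\insta$ and $\insta_2$ necessarily agree on the ``interface'' between $\hlconf_2$ and $\hlext'$. Once this matching is established, the remainder of the argument is purely structural and relies entirely on the uniqueness built into Prop.~\ref{prop:futureIsoWeak2}.
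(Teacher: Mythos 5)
Your proposal is correct and follows essentially the same route as the paper's proof: apply Def.~\ref{def:adequateOrder}~iii) with the roles of $\hlconf_1$ and $\hlconf_2$ swapped, verify $\hlconf_2\llbracket\hlmarking\rrbracket\extmonom_{1,\hlext}^{2}(\hlext)$ (which the paper extracts directly from the construction in the proof of Prop.~\ref{prop:futureIsoWeak2}, where you spell out the transported instantiation explicitly), and use the uniqueness clause of Prop.~\ref{prop:futureIsoWeak2} to identify $\extmonom_{2,\hlext'}^{1}$ with the inverse of $\extmonom_{1,\hlext}^{2}$ restricted to its image, so that $\extmonom_{2,\hlext'}^{1}(\hlext')=\hlext$.
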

\begin{proof}
	Let $ \hlext'=\extmonom_{1,\hlext}^{2}(\hlext) $.
	We first show that
	$ \extmonom_{2,\hlext'}^1(\hlext')=\hlext $.

	Let $ \extmonom_1:\hlonet(\hlconf_1|\hlext)\to\extmonom_{1,\hlext}^2(\hlonet(\hlconf_1|\hlext)) $ be the isomorphism that acts on $ \hlonet(\hlconf_1|\hlext) $ as $ \extmonom_{1,\hlext}^2 $ does,
	and let $ \extmonom_2:\hlonet(\hlconf_2|\hlext')\to \extmonom_{2,\hlext'}^1(\hlonet(\hlconf_2|\hlext')) $ be the isomorphism that acts on $ \hlonet(\hlconf_2|\hlext') $ as $ \extmonom_{2,\hlext'}^1 $ does.
	Since $ \extmonom_1^{-1}:\extmonom_{1,\hlext}^2(\hlonet(\hlconf_1|\hlext))\to \hlonet(\hlconf_1|\hlext) $ and $ \hlonet(\hlconf_1|\hlext)\subset {\Uparrow}\hlconf_1 $,
	and $ \extmonom_1^{-1}(\extmonom_{1,\hlext}^2(\hlext))=\hlext $ is a suffix of~$ \hlconf_1 $,
	we get by Prop.~\ref{prop:futureIsoWeak2} that
	$ \extmonom_1^{-1}=\extmonom_2 $,
	which means $ \extmonom_{2,\hlext'}^1(\hlext')=\hlext $.

	Assume now $ \hlconf_2\ado\hlconf_1 $.
	From the proof of Prop.~\ref{prop:futureIsoWeak2} we see that
	$ \hlconf_2\llbracket\hlmarking\rrbracket\extmonom_{1,\hlext}^2(\hlext) $.
	Thus, we get by the definition of adequate order and the result above that
	$ \hlconf_2{\oplus} \extmonom_{1,\hlext}^{2}(\hlext)\ado
	\hlconf_1{\oplus}\extmonom_{2,\extmonom_{1,\hlext}^2(\hlext)}^1(\extmonom_{1,\hlext}^2(\hlext))
	\allowbreak=\hlconf_1{\oplus}\hlext $.
\end{proof}
In \cite{EsparzaRV02}, Esparza et al.\ discuss three adequate orders
on the configurations of the low-level unfolding.
In particular, they present a \emph{total} adequate order that uses the \emph{Foata normal form} of configurations.
Using such a total order in the algorithm limits the size of the resulting finite and complete prefix:
it contains at most $ |\reachable(\hlpn)| $ non cut-off events.
All three adequate orders presented in \cite{EsparzaRV02}
can be directly lifted to the configurations of the symbolic unfolding by exchanging every low-level term by its high-level counterpart.
The lifted order using the Foata normal form is still a total order.
We include these discussions in App.~\ref{app:adequateOrders}.

We now define cut-off events in a symbolic unfolding.
In the low-level case~\cite{EsparzaRV02},
$ \evnt $ is a cut-off event if there is another event $ \evnt' $ satisfying $\cone{\evnt'} \ado \cone{\evnt}$ and $ \confmark(\cone{\evnt})=\confmark(\cone{\evnt'}) $,
which ensures that the future of $ \evnt $ needs not be considered further.
In the high-level case, we generalize these conditions to high-level events $ \hlevnt $.
However, we do not require the existence of \emph{one}
other high-level event $ \hlevnt' $ with $ \cone{\hlevnt'}\ado\cone{\hlevnt} $
and $ \confmarks(\cone{\hlevnt})=\confmarks(\cone{\hlevnt'}) $.
While this would still be a valid cut-off criterion
and would lead to finite and complete prefixes,
the upper bound on the size of such a prefix would be exponential in the number of markings in the original net.
Instead, we check whether $ \confmarks(\cone{\hlevnt}) $ is contained in the union of \emph{all} $ \confmarks(\cone{\hlevnt'}) $ with $ \cone{\hlevnt'}\ado\cone{\hlevnt} $.
This criterion expresses that we have already seen every marking in $ \confmarks(\cone{\hlevnt}) $ in the prefix $ \beta $ under construction, and therefore need not consider the future of  $ \hlevnt $ any further.
By this, we obtain the same upper bounds as in \cite{EsparzaRV02}, as discussed~later.
\begin{definition}[Cut-off event]\label{def:cut-off}
	Let $ \ado $ be an adequate order on the configurations of the symbolic unfolding of a high-level Petri net. Let $ \beta $ be a prefix of the symbolic unfolding containing a high-level event $ \hlevnt $.
	The high-level event $ \hlevnt $ is a \emph{cut-off} event in $ \beta $ (w.r.t.\ $ \ado $) if
	$ \confmarks(\cone{\hlevnt})\subseteq \bigcup_{\cone{\hlevnt'}\ado\cone{\hlevnt}}\confmarks(\cone{\hlevnt'}). $
\end{definition}
For P/T Petri nets,
this definition corresponds to the cut-off events defined in \cite{EsparzaRV02},
since in this case $ |\hlmarkings(\cone{\hlevnt})|=1 $ for all events $ \hlevnt $.

\subsection{The generalized ERV-algorithm}\label{sec:genERValgo}
We present the algorithm for constructing a finite and complete prefix of the symbolic unfolding of a given high-level Petri net.
It is a generalization of the ERV-algorithm from \cite{EsparzaRV02},
and is structurally equal (and therefore looks very similar).
However, the algorithm is contingent upon the previous section's work of generalizing adequate orders and cut-off events,
which ultimately enables us to adopt this structure.

A crucial concept of the ERV-algorithm is the notion of ``possible extensions'',
i.e., the set of individual events that extend a given prefix of the unfolding.
In Def.~\ref{def:PossExt}, we lift this concept to the high-level formalism.
We do so by isolating the procedure of adding high-level events in the
algorithm from \cite{ChatainJ04} which generates the complete symbolic unfolding of a given high-level Petri net (but does not terminate if the symbolic unfolding is infinite).

We define the data structures similarly to \cite{EsparzaRV02}.
There, an event is given by a tuple $ \evnt=\tup{\trans,\conds'} $ with $ \homom(\evnt)=\trans\in\transs $ and $ \preset{\evnt}=\conds'\subseteq\conds $,
and a condition given by a tuple $ \cond=\tup{\place,\evnt}$ with $ \homom(\cond)=\place\in\places $ and $ \preset{\cond}=\{\evnt \}\subseteq\evnts $.
The finite and complete prefix is a set of such events and transitions.

In the high-level case, we need more information inside the tuples.
A high-level event is given by a tuple
$ \hlevnt=\tup{\hltrans,X,\pred} $ described by
$ \hlhomom(\hlevnt)=\hltrans $,
$ \preset{\hlevnt}=X\subseteq \hlconds\times\hlvars $,
and $ \pred(\hlevnt)=\pred $.
Analogously, a high-level condition is given by a tuple
$ \hlcond=\tup{\hlplace,\tup{\hlevnt,\hlvar},\pred} $, where
$ \hlhomom(\hlcond)=\hlplace $,
$ \preset{\hlcond}=\tup{\hlevnt,\hlvar}\in(\hlevnts\times\hlvars)\cup(\{\bot\}\times\{\hlvar^\hlcond\with\hlcond\in\hlconds_0\}) $,
and $ \pred(\hlcondevnt(\hlcond))=\pred $.

\begin{definition}[Possible extensions]
	\label{def:PossExt}
	Let $ \beta=\tup{\hlonet,\hlhomom} $ be a branching process of a high-level Petri net $ \hlpn $.
	The \emph{possible extensions} $ \mathit{PE}(\beta) $ are the set of
	tuples $\hlevnt= \tup{\hltrans,X,\pred} $ where $ \hltrans $ is a transition of $ \hlpn $, and $ X\subseteq\hlconds\times\hlvars $ satisfying
	\begin{itemize}[topsep=4pt]
     \itemsep=0.8pt
		\item
		$ \{\hlcond\with \tup{\hlcond,\hlvar}\in X\} $ is a co-set, and
		$ \preset{ \hltrans }=\{\tup{\hlhomom(\hlcond),\hlvar}\with \tup{\hlcond,\hlvar}\in X \}$,
		\item $ \pred=
		\locpred\land\big(\bigwedge_{\tup{\hlcond,\hlvar}\in X}\pred(\hlcondevnt(\hlcond))\big)
		$
		is satisfiable,\\
		where
		$ \locpred=\hlguard(\hltrans)[\hlvar\leftarrow\hlvar_\hlevnt]_{\hlvar\in\hlvars(\hlevnt)}
		\land \big(
		\bigwedge_{\tup{\hlcond,\hlvar}\in X}
		\hlvar_\hlevnt= \hlcondvar_\hlcondevnt(\hlcond) \big) $,
		\item $ \beta $ does not contain $ \tup{\hltrans,X,\pred} $.
	\end{itemize}
\end{definition}
Since the notion of co-set in high-level occurrence nets is
achieved by the direct translation from low-level occurrence nets
plus the ``color conflict freedom'',
possible extensions in a prefix $ \beta $
can be found by searching first for sets of conditions that are not in structural conflict as in the low-level case, and then checking whether these sets are in color conflict.

\medskip
Alg.~\ref{alg:complpref} is a generalization of the ERV-Algorithm in \cite{EsparzaRV02} for complete finite prefixes of the low-level unfolding.
The structure is taken from there,
with the only difference being the special initial transition~$ \bot $.
It takes as input a high-level Petri net $ \hlpn\in\Nf $ and assumes a given adequate order~$ \ado $.

\vspace*{-2mm}
\begin{algorithm}[!ht]
	\caption{Generalization of the ERV-Algorithm from \cite{EsparzaRV02} for complete finite prefixes.\label{alg:complpref}}
	\KwData{High-level Petri net $\hlpn=\tup{\hlplaces,\hltranss,\hlflowfunc,\hlguard,\hlmarkings_0}\in\Nf$.}
	\KwResult{A complete finite prefix $ \mathit{Fin}$
		of the symbolic unfolding of~$ \hlpn $.}
	$ \fin \gets \{\bot \} $\;
	$ \pred(\bot)\gets\bigvee_{\hlmarking_0\in\hlmarkings_0}\bigwedge_{\tup{\hlplace,\hltok}\in\hlmarking_0}\hlvar_\bot^{\hlcond_\hlplace}=\hltok $\;
	\ForEach{$ \hlplace\in\hlplaces_0 $}{
		Create a fresh condition $ \hlcond_\hlplace=\tup{\hlplace,\tup{\bot,\hlvar^{\hlcond_\hlplace}},\pred(\bot)} $\;
		$ \fin\gets\fin\cup\{ \hlcond \} $\;
	}
	$ \possextalg \gets \possext(\fin) $\;
	$ \cutoff \gets \emptyset $\;
	\While{$ \possextalg\neq\emptyset $}{
		Pick $ \hlevnt=\tup{\hltrans,X,\pred} $ from $ \possextalg $ such that $ \cone{\hlevnt} $ is minimal w.r.t.\ $ \ado $\;
		\uIf{$ \cone{\hlevnt}\cap\cutoff=\emptyset $}{
			$ \fin\gets\fin\cup\{\hlevnt\} $\;
			\ForEach{$ \tup{\hlplace,\hlvar}\in\postset{\hltrans} $}{
				Create a fresh condition $ \hlcond=\tup{\hlplace,\tup{\hlevnt,\hlvar},\pred} $\;
				$ \fin\gets\fin\cup\{ \hlcond \} $\;
			}
			$ \possextalg\gets\possext(\fin) $\;
			\If{$ \hlevnt $ is a cut-off event of $ \fin $}{
				$ \cutoff\gets\cutoff\cup\{\hlevnt\} $\;
			}
		}
		\Else{
			$ \possextalg\leftarrow\possextalg\setminus\{ \hlevnt \} $
		}
	} \vspace*{-2mm}
   \end{algorithm}

\begin{example}
	Consider the running example $ \hlpn $ from Figure~\ref{fig:runEx}.
	Alg.~\ref{alg:complpref} produces the complete finite prefix
	marked by the blue line in Figure~\ref{fig:runExUnf}.
	Cut-off events are shaded blue.

	Starting with the initial conditions $ a' $ and $ b' $,
	the possible extensions are $ \alpha' $ and $ \beta' $.
	assuming $ \cone{\alpha'}\ado\cone{\beta'} $,
	we first attach $ \alpha' $ and a condition $ c' $
	corresponding to the output place $ c $ of $ \alpha $,
	and then analogously $ \beta' $ and the condition $ d' $.

	For $ \alpha' $ we have $ \confmarks(\cone{\alpha'})=\{ \ms{\tup{c,n},\tup{b,0}}\with n\in\{1,\dots,m\} \} $ and analogously, for $ \beta' $ we have $ \confmarks(\cone{\beta'})=\{ \ms{\tup{a,0},\tup{d,n}}\with n\in\{1,\dots,m\} \} $.
	Since we have not seen these markings before, neither $ \alpha' $ nor $ \beta' $ is a cut-off event.
	Thus, we have the possible extensions $ t' $ and $ \varepsilon' $.
	For $ t' $ we have  $ \confmarks(\cone{t'})=\{ \ms{} \}  $, since no tokens are in the net after firing $ t $.
	However, we have not seen the empty marking $ \ms{} $ before, so formally, $ t' $ is not a cut-off event.

	For $ \varepsilon' $ we have $ \confmarks(\cone{\varepsilon'})=\{ \ms{\tup{c,n},\tup{d,n'}}\with n,n'\in\{1,\dots,m\} \} $.
	Corresponding cuts can be reached in the prefix constructed so far by concurrently firing $ \alpha' $ and $ \beta' $.
	However, no marking $ \ms{\tup{c,n},\tup{d,n'}} $ is represented by a \emph{cone} configuration before $ \varepsilon' $, and therefore $ \varepsilon' $ does \emph{not} satisfy
	$ \confmarks(\cone{\varepsilon'})\subseteq \bigcup_{\cone{\hlevnt'}\ado\cone{\varepsilon'}}\confmarks(\cone{\hlevnt'}) $.
	This means $ \varepsilon' $ is not a cut-off event and we have to proceed with the possible extensions $ t'' $ and $ \varepsilon'' $.

	Since $ \confmarks(\cone{t''})=\{ \ms{} \} = \confmarks(\cone{t'}) $ with $ \cone{t'}\ado\cone{t''} $,
	have that $ t'' $ is a cut-off event.
	This, however, has no impact on the prefix since we cannot continue after $ t'' $ anyway.
	For $ \varepsilon'' $ we have $ \confmarks(\cone{\varepsilon''})=\{ \ms{\tup{c,n},\tup{d,n'}}\with n,n'\in\{1,\dots,m\} \}=\confmarks(\cone{\varepsilon'}) $ with $ \cone{\varepsilon'}\ado\cone{\varepsilon''} $.
	This makes $ \varepsilon'' $ also a cut-off event.
	We therefore have no more possible extensions, and the algorithm terminates.
	In the figure, this is indicated by the blue lines.
\end{example}

We now prove correctness of Alg.~\ref{alg:complpref} analogously to \cite{EsparzaRV02},
by stating two propositions -- one each to show that the prefix is finite and complete, respectively.
The proof structure is also as in \cite{EsparzaRV02},
but adapted to the setting of high-level Petri nets and symbolic unfoldings.

\begin{proposition}\label{prop:FinIsFin}
	After applying Alg.~\ref{alg:complpref} to a high-level Petri net $\hlpn\in\Nf$, the result $ \fin $ is finite.
\end{proposition}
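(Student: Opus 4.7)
My plan is to argue by contradiction: assume $\fin$ is infinite, then extract an infinite causal chain and use pigeonhole on marking sets to force a cut-off contradiction. First, the set $E$ of events inserted into $\fin$ that are not in $\cutoff$ must itself be infinite, because cut-off events contribute only finitely many postset conditions and no event with a cut-off ancestor passes the test $\cone{\hlevnt}\cap\cutoff=\emptyset$; moreover $E$ is causally closed in $\fin$, since the same test forbids any admitted event from having a cut-off ancestor.

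I would then use the safety property~\itmSafety\ to extract an infinite causal chain $\hlevnt_1<\hlevnt_2<\ldots$ inside $E$. Consider the tree whose nodes are linearizations of finite configurations of $\fin$, where the children of a configuration $\hlconf$ are the configurations $\hlconf\cup\{\hlevnt\}$ obtained by appending an event enabled at $\confcut(\hlconf)$. By safety, $|\confcut(\hlconf)|\leq|\hlplaces|$, so this tree has branching bounded by the finite number of transition labels times co-sets of the bounded cut. Since only finitely many events can occur at each bounded causal depth---this follows inductively from the finiteness of $\hlpn$ together with the bound on presets---infinity of $E$ forces nodes of arbitrary depth, and K\"onig's lemma yields an infinite branch whose limit is an infinite configuration $\hlconf_\infty\subseteq\fin$. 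The poset $(\hlconf_\infty,<)$ has width at most $|\hlplaces|$, because any pairwise-concurrent subset of $\hlconf_\infty$ can be simultaneously enabled at some marking of $\reachable(\hlpn)$ via Prop.~\ref{prop:MarkingsRepresented}, and by safety this marking carries at most $|\hlplaces|$ tokens. Dilworth's theorem then yields an infinite causal chain inside $\hlconf_\infty\subseteq E$.

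Finally, I would invoke~\itmFinite: each $\confmarks(\cone{\hlevnt_i})$ is a nonempty subset of the finite set $\reachable(\hlpn)$, so the map $i\mapsto\confmarks(\cone{\hlevnt_i})$ takes values in a finite set, and pigeonhole gives indices $i<j$ with $\confmarks(\cone{\hlevnt_j})=\confmarks(\cone{\hlevnt_i})$. Since $\cone{\hlevnt_i}\subsetneq\cone{\hlevnt_j}$, Def.~\ref{def:adequateOrder}~ii) provides $\cone{\hlevnt_i}\ado\cone{\hlevnt_j}$, and because $\hlevnt_j$ can only appear in $\possextalg$ once every event of $\cone{\hlevnt_j}$ is already in $\fin$, the event $\hlevnt_i$ is present in $\fin$ at the moment $\hlevnt_j$ is processed. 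Hence $\hlevnt_j$ satisfies the criterion of Def.~\ref{def:cut-off} via $\hlevnt_i$, the algorithm therefore puts $\hlevnt_j$ into $\cutoff$, and this blocks $\hlevnt_{j+1}$ from ever being inserted, contradicting $\hlevnt_{j+1}\in E$. The step I expect to be the main obstacle is the width bound above: one has to verify that a pairwise-concurrent subset of events in $\hlconf_\infty$ admits a single instantiation whose induced reachable marking simultaneously supports all their presets, so that safety actually yields the $|\hlplaces|$ bound in the symbolic high-level setting rather than just structurally.
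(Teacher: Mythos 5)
Your proof is correct in substance and rests on the same two pillars as the paper's: (a) only finitely many events can occur at each causal depth (your inductive aside, which is the paper's steps (2)--(3)), and (b) long causal chains must contain a cut-off event by a pigeonhole argument over $\reachable(\hlpn)$. The differences are in packaging. You argue by contradiction and extract an \emph{infinite} chain via K\"onig's lemma plus Dilworth's theorem, whereas the paper directly bounds $\depth(\hlevnt)$ by $|\reachable(\hlpn)|+1$ for every event of $\fin$. Your Dilworth detour --- the step you rightly flag as delicate, since pairwise concurrency does not imply joint concurrency in the high-level setting --- is in fact dispensable: once (a) gives you events of unbounded depth, each deep event already sits atop a long \emph{finite} causal chain consisting of its ancestors, and no antichain analysis is needed. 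Your pigeonhole is also coarser than the paper's: you wait for two indices with $\confmarks(\cone{\hlevnt_i})=\confmarks(\cone{\hlevnt_j})$, which on a finite chain would require length about $2^{|\reachable(\hlpn)|}$, while the paper observes that each non-cut-off $\hlevnt_i$ in a chain must contribute a marking not already in $\bigcup_{j<i}\confmarks(\cone{\hlevnt_j})$, so a chain of length $|\reachable(\hlpn)|+1$ already forces a cut-off. Both counts are valid, but the paper's yields the explicit depth bound that is reused later when adapting the algorithm to $\Nsc$. Your bookkeeping around the algorithm (that $E$ infinite iff $\fin$ infinite, that ancestors are processed and classified before their successors, and that the witness $\hlevnt_i$ is already in $\fin$ when $\hlevnt_j$ is tested) is accurate and is a point the paper leaves implicit.
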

Given an event $ \hlevnt $, define the \emph{depth} of $ \hlevnt $
as the length of the longest chain of events $ \hlevnt_1<\hlevnt_2<\dots<\hlevnt $;
the depth of $ \hlevnt $ is denoted by $ \depth(\hlevnt) $.

\begin{proof}
	As in \cite{EsparzaRV02}, we prove the following results (1) -- (3):

	\begin{enumerate}[label=(\arabic*)]
		\item For every event $ \hlevnt $ of $ \fin $, $ \depth(\hlevnt)\leq |\reachable(\hlpn)|+1 $,
		\item
		For every event $ \hlevnt $ of $ \fin $, the sets $ \preset{\hlevnt} $ and $ \postset{\hlevnt} $ are finite, and
		\item
		For every $ k\geq 0 $, $ \fin $ contains only finitely many events $ \hlevnt $ such that $ \depth(\hlevnt)\leq k $.
	\end{enumerate}
	This works exactly as in \cite{EsparzaRV02},
	with minor adaptations to the generalization of cut-offs in the symbolic unfolding in (1):

	\begin{enumerate}[label=(\arabic*)]
		\item
		Let $ n=|\reachable(\hlpn)| $.
		Every chain of events $ \hlevnt_1<\hlevnt_2<\dots<\hlevnt_{n}<\hlevnt_{n+1}  $
		in the unfolding contains an event $ \hlevnt_i $, $ i>1 $,
		s.t.\ $ \confmarks(\cone{\hlevnt_i})\subseteq\bigcup_{j=1}^{i-1}\confmarks(\cone{\hlevnt_j}) $,
		since, if every $ \confmarks([\hlevnt_j]) $, $ j=1,\dots,n $, contains a marking not contained in
          $ \bigcup_{k=1}^{j-1}\confmarks(\cone{\hlevnt_k}) $,
		then finally $ \bigcup_{j=1}^{n}\confmarks(\cone{\hlevnt_j}) $ contains all $ n $ markings.
		This makes $ \hlevnt_{n+1} $ a cut-off event.
		\item
		By the construction in the algorithm we see that there is a bijection between $ \postset{\hlevnt} $ and $ \postset{\hlhomom(\hlevnt)} $,
        and similarly for $ \preset{\hlevnt} $ and $ \preset{\hlhomom(\hlevnt)} $. The result then follows from the finiteness of $ \hlpn $.
		\item
		By complete induction on $ k $. The base case, $ k = 0 $, is trivial.
		Let $ E_k $ be the set of events of depth at most $ k $.
		We prove that if $ E_k $ is finite then $ E_{k+1} $ is finite.
		By (2) and the induction hypothesis, $ \postset{E_{k}} $ is finite. Since $ \{\hlcond\with \exists\hlvar\in\hlvars:\tup{\hlcond,\hlvar}\in\preset{E_{k+1}}\}\subseteq \{\hlcond\with \exists\hlvar\in\hlvars:\tup{\hlcond,\hlvar}\in\postset{E_{k}}\}$, we
		get by property {\it iv} in the definition of occurrence nets that $ E_{k+1} $ is finite.
	\end{enumerate}
\end{proof}

\begin{proposition}\label{prop:FinIsCompl}
	After applying Alg.~\ref{alg:complpref} to a high-level Petri net $\hlpn\in\Nf$, the result $ \fin $ is complete.
\end{proposition}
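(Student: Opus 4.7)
The plan is to adapt the standard ERV completeness argument to the symbolic setting, leveraging Prop.~\ref{prop:MarkingsRepresented}, Prop.~\ref{prop:unfComplete}, and Prop.~\ref{prop:futureIsoWeak2} together with property iii) of Def.~\ref{def:adequateOrder}.

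First I would establish the following key lemma: for every reachable marking $\hlmarking$ of $\hlpn$ there is a configuration $\hlconf$ of $\fin$ together with an instantiation $\insta\in\instas(\hlconf)$ such that $\confmark(\hlconf,\insta)=\hlmarking$ and \emph{$\hlconf$ contains no cut-off event}. By Prop.~\ref{prop:MarkingsRepresented} the collection of configurations of $\symbunf$ that represent $\hlmarking$ is non-empty, so by well-foundedness of $\ado$ it has an $\ado$-minimal element; I pick such a $\hlconf$ together with a witnessing $\insta$. Assuming for contradiction that some $\hlevnt\in\hlconf$ is a cut-off event, I decompose $\hlconf=\cone{\hlevnt}{\oplus}\hlext$ and let $\hlmarking_1\in\confmarks(\cone{\hlevnt})$ be the intermediate marking obtained by restricting $\insta$. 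By Def.~\ref{def:cut-off} there is a witness $\hlevnt'$ with $\cone{\hlevnt'}\ado\cone{\hlevnt}$ and $\hlmarking_1\in\confmarks(\cone{\hlevnt'})$. Applying Prop.~\ref{prop:futureIsoWeak2} with $\hlconf_1=\cone{\hlevnt}$, $\hlconf_2=\cone{\hlevnt'}$ and suffix $\hlext$ supplies a monomorphism $\extmonom_{1,\hlext}^{2}$ whose image is a suffix of $\cone{\hlevnt'}$. Property iii) of Def.~\ref{def:adequateOrder} then gives $\cone{\hlevnt'}{\oplus}\extmonom_{1,\hlext}^{2}(\hlext)\ado\hlconf$, and since the monomorphism preserves labels and firing behaviour, this configuration still represents $\hlmarking$, contradicting the $\ado$-minimality of $\hlconf$.

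Once $\hlconf$ is known to be cut-off free, I would verify by induction along $<$ that all its events are inserted into $\fin$ by the main loop of Alg.~\ref{alg:complpref}: at the moment an event $\hlevnt\in\hlconf$ is picked from $\possextalg$, its strict causal predecessors are already in $\fin\setminus\cutoff$ by induction and $\hlevnt\notin\cutoff$ since it has not yet been processed, so the guard $\cone{\hlevnt}\cap\cutoff=\emptyset$ holds and $\hlevnt$ is added. Hence $\hlconf\in\hlconfs(\fin)$, which is condition (i) of Def.~\ref{def:ComplPref}. Condition (ii) then follows directly from Prop.~\ref{prop:unfComplete}: given $\hlmarking[\hltrans,\hlmode\rangle$, there is an event $\hlevnt$ of $\symbunf$ with $\symbhomom(\hlevnt)=\hltrans$ and $\confcut(\hlconf,\insta)[\hlevnt,\hlmode\rangle$; its preset lies in $\confcut(\hlconf)$, so $\cone{\hlevnt}\setminus\{\hlevnt\}\subseteq\hlconf$ is cut-off free and $\hlevnt$ is therefore inserted into $\fin$ when it is picked, regardless of whether it is subsequently classified as a cut-off.

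The step I expect to be the main obstacle is the minimality argument inside the key lemma: one has to verify that the configuration $\cone{\hlevnt'}{\oplus}\extmonom_{1,\hlext}^{2}(\hlext)$ not only is strictly $\ado$-below $\hlconf$ but actually represents $\hlmarking$. Concretely, I will need to unpack the inductive construction of $\extmonom_{1,\hlext}^{2}$ in the proof of Prop.~\ref{prop:futureIsoWeak2} and show that an instantiation of $\cone{\hlevnt'}$ producing $\hlmarking_1$ can be glued with the image of $\insta|_{\hlvars_{\hlext}}$ transported along $\extmonom_{1,\hlext}^{2}$ to an instantiation of $\cone{\hlevnt'}{\oplus}\extmonom_{1,\hlext}^{2}(\hlext)$ that satisfies every local predicate and yields the same final multi-set of labelled tokens. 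All other steps are essentially bookkeeping translations of the corresponding ERV arguments.
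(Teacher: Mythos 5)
Your overall strategy matches the paper's proof almost exactly: part (1) of the paper is your key lemma (the paper phrases it as an iterative descent terminating by well-foundedness of $\ado$ rather than picking an $\ado$-minimal configuration up front, but these are equivalent), and part (2) is your possible-extension argument for condition (ii), including the observation that $\cone{\hlevnt}\setminus\{\hlevnt\}\subseteq\hlconf$ so the cut-off guard in the algorithm is passed. The instantiation-gluing you flag as ``the main obstacle'' is indeed carried out explicitly in the paper, by combining an instantiation $\insta_2'$ of $\cone{\hlevnt_2}$ witnessing $\hlmarking_1$ with $\insta_2''(\hlvar_{\extmonom_1(\hlevnt)})=\insta_1(\hlvar_\hlevnt)$ on the transported suffix; your sketch of that step is the right one.

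There is, however, one concrete misstep in your key lemma: you claim that ``Property iii) of Def.~\ref{def:adequateOrder} then gives $\cone{\hlevnt'}{\oplus}\extmonom_{1,\hlext}^{2}(\hlext)\ado\hlconf$.'' Property iii) only yields $\hlconf_1{\oplus}\hlext\ado\hlconf_2{\oplus}\extmonom_{1,\hlext}^{2}(\hlext)$ \emph{under the hypothesis} $\hlconf_1\ado\hlconf_2$, where $\hlconf_1=\cone{\hlevnt}$ is the configuration being extended and $\hlconf_2=\cone{\hlevnt'}$ is the other one. But the cut-off definition supplies the \emph{opposite} inequality, $\cone{\hlevnt'}\ado\cone{\hlevnt}$, so Def.~\ref{def:adequateOrder}~iii) does not apply directly. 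Since $\extmonom_{1,\hlext}^{2}$ is only a monomorphism (not an isomorphism as in the low-level setting), the reverse implication $\hlconf_2\ado\hlconf_1\Rightarrow\hlconf_2{\oplus}\extmonom_{1,\hlext}^{2}(\hlext)\ado\hlconf_1{\oplus}\hlext$ is not automatic; this is exactly why the paper proves Prop.~\ref{prop:adequateOrderOtherDir} separately (via the observation that a monomorphism restricted to its range is an isomorphism, so $\extmonom_{2,\hlext'}^{1}(\extmonom_{1,\hlext}^{2}(\hlext))=\hlext$). Your argument goes through once you replace the appeal to Def.~\ref{def:adequateOrder}~iii) by an appeal to Prop.~\ref{prop:adequateOrderOtherDir}; as written, the cited justification is the wrong direction.
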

The proof of this proposition also has the same general structure as the respective proof in~\cite{EsparzaRV02}.
However here we use the generalizations of adequate order,
possible extensions,
and the cut-off criterion to symbolic branching processes.

\begin{proof}
	We first show that for every reachable marking in $ \hlpn $ there exists a configuration in $ \symbunf $ satisfying a) from the definition of complete prefixes,
	and then show that one of these configurations (a minimal one) also satisfies b).

	\begin{enumerate}[label=(\arabic*),topsep=4pt]
		\item Let $ \hlmarking $ be an arbitrary reachable marking in $ \hlpn $.
		Then by Prop.~\ref{prop:MarkingsRepresented}, we have that there is a $ \hlconf_1\in\hlconfs(\symbunf) $ s.t.\ $ \hlmarking\in\confmarks(\hlconf_1) $.
		Let $ \insta_1\in\instas(\hlconf_1) $ s.t.\ $ \hlmarking=\confmark(\hlconf_1.\insta_1) $.
		If $ \hlconf $ is not a configuration in $ \fin $, then it contains a cut-off event $ \hlevnt_1 $,
		and so $ \hlconf_1=\cone{\hlevnt_1}{\oplus}\hlext $ for some set $ \hlext $	of events.
		Let $ \hlmarking_1=\confmark(\cone{\hlevnt_1}.\insta_1|_{\hlvars_{\cone{\hlevnt_1}\cup\{\bot\}}})\in\confmarks(\cone{\hlevnt_1}) $.
		By the definition of cut-off event, there exists an event $ \hlevnt_2 $ with $ \cone{\hlevnt_2}\ado\cone{\hlevnt_1} $
		and $ \hlmarking_1\in\confmarks(\cone{\hlevnt_2}) $.
		Since we have $C_{1}\llbracket M_{1}\rrbracket D$,
		we get by Prop.~\ref{prop:futureIsoWeak2} that the monomorphism $ \extmonom_1:=\extmonom_{\cone{\hlevnt_1},\hlext}^{\cone{\hlevnt_2}}:\hlonet(\cone{\hlevnt_1}|\hlext)\to{\Uparrow}\cone{\hlevnt_{2}} $ exists and that $ \extmonom_1(\hlext) $ is a suffix of $ \cone{\hlevnt_2} $. By Prop.~\ref{prop:adequateOrderOtherDir} we know
		\begin{equation*}
			\hlconf_2:=\cone{\hlevnt_2}{\oplus}\extmonom_1(\hlext)\ado\cone{\hlevnt_1}{\oplus}\hlext=\hlconf_1 .
		\end{equation*}
		Let $ \insta_2'\in\instas(\cone{\hlevnt_2}) $ s.t.\ $ \hlmarking_1=\confmark(\cone{\hlevnt_2},\insta_2') $.
		Define now $ \insta_2\in\instas(\hlconf_2) $ by $ \insta_2=\insta_2'\uplus\insta_2'' $,
		where $ \insta_2'':\hlvars_{\extmonom_1(\hlext)}\to\hltoks $ is given by
		$ \insta_2''(\hlvar_{\extmonom_1(\hlevnt)})=\insta_1(\hlvar_\hlevnt) $.
		By this construction we get $ \hlmarking=\confmark(\hlconf_2,\insta_2) \in \confmarks(\hlconf_2) $.

		\smallskip
         If $ \hlconf_2 $ is not a configuration of
		$ \fin $, then we can iterate the procedure and find a configuration $ \hlconf_3 $ such that $ \hlconf_3\ado\hlconf_2 $
		and $ \hlmarking\in\confmarks(\hlconf_3) $. The procedure cannot be iterated infinitely often because $ \ado $ is well-founded. Therefore, it terminates in a configuration of $ \fin $.
		\item
		Let now $ \hlconf $ be a minimal configuration w.r.t.\ $ \ado $ s.t.\ $ \hlmarking\in\confmarks(\hlconf) $, and let $ \hltrans\in\hltranss $, $ \hlmode\in\hlmodes(\hltrans) $ s.t.\ $ \hlmarking[\hltrans,\hlmode\rangle $.
		If $ \hlconf $ contains some cut-off event, then we can apply the arguments
		of a) to conclude that $ \fin $ contains a configuration $ \hlconf'\ado\hlconf $ such that $ \hlmarking\in\confmarks(\hlconf') $.
		This contradicts the minimality of $ \hlconf $. So $ \hlconf $ contains no cut-off events.
		Let $ \insta\in\instas(\hlconf) $ s.t.\ $ \hlmarking=\confmark(\hlconf,\insta) $.
		Since $ \preset{\hltrans.\hlmode}\subseteq \hlmarking $,
		we have that there is a co-set $ \hlconds_{\hltrans,\hlmode}\subseteq\confcut(\hlconf) $
		s.t.\ $ \preset{\hltrans,\hlmode}=\{ \tup{\hlhomom(\hlcond),\insta(\hlcondvar_\hlcondevnt(\hlcond))}\with \hlcond\in\hlconds_{\hltrans,
             \hlmode}\}$.

\smallskip
		Let now $
		X:=\{ \tup{\hlcond,\hlvar}\with \hlcond\in\hlconds_{\hltrans,\hlmode}, \tup{\hlhomom(\hlcond),\hlvar}\in\preset{\hltrans}\}$.
		We then have $ \forall \tup{\hlcond,\hlvar}\in X: \hlmode(\hlvar)=\insta(\hlcondvar_\hlcondevnt(\hlcond)) $.

		We now show that
		\begin{equation*}
			\pred:=\hlguard(\hltrans)[\hlvar\leftarrow\hlvar_\hlevnt]_{\hlvar\in\hlvars(\hlevnt)}\land \Big(
			\bigwedge_{\tup{\hlcond,\hlvar}\in X}
			\hlvar_\hlevnt= \hlcondvar_\hlcondevnt(\hlcond) \Big)\land\bigwedge_{(\hlcond,\hlvar)\in X}\pred(\hlcondevnt(\hlcond))
		\end{equation*}  is satisfiable.
		Let $ \insta':=\insta\uplus(\hlmode\circ[\hlvar_\hlevnt\mapsto\hlvar]_{\hlvar\in\hlvars(\hlevnt)})  $.
		Then
		\begin{itemize}
			\item $ \hlguard(\hltrans)[\hlvar\leftarrow\hlvar_\hlevnt]_{\hlvar\in\hlvars(\hlevnt)}[\insta']
			\equiv\hlguard(\hltrans)[\hlmode]\equiv\true
			$, and
			\item
			$ \big(	\bigwedge_{\tup{\hlcond,\hlvar}\in X}
			\hlvar_\hlevnt= \hlcondvar_\hlcondevnt(\hlcond) \big)[\insta']
			\equiv
			\big(\bigwedge_{\tup{\hlcond,\hlvar}\in X}
			\hlmode(\hlvar)=\insta(\hlcondvar_\hlcondevnt(\hlcond)) \big)\equiv \true $, and
			\item $ \bigwedge_{\tup{\hlcond,\hlvar}\in X}\pred(\hlcondevnt(\hlcond))[\insta']
			\equiv
			\bigwedge_{\tup{\hlcond,\hlvar}\in X}\pred(\hlcondevnt(\hlcond))[\insta]\equiv\true$, since $ \insta\in\instas(\hlconf) $.
		\end{itemize}
		Thus, $ \pred[\insta']\equiv\true $.
		Therefore, $ \hlevnt=\tup{\hltrans,X,\pred} $ is a possible extension and added in the execution of the algorithm.
		Then we directly have $ \hlevnt\notin\hlconf $, $ \hlhomom(\hlevnt)=\hltrans $, and with the same arguments as in a), we get $ \hlconf\cup\{ \hlevnt \}\in\hlconfs(\fin) $ and
		$ \insta\uplus(\hlmode\circ[\hlvar_\hlevnt\mapsto\hlvar]_{\hlvar\in\hlvars(\hlevnt)})\in\instas(\hlconf\cup\{ \hlevnt \}) $,
		which means $ \confcut(\hlconf,\insta)[\hlevnt,\hlmode\rangle $.
		Since we chose $ \insta $ independently of $ \hltrans$ and $\hlmode $,
		this concludes the proof.
	\end{enumerate}
\end{proof}
Notice that by this construction, as described in \cite{EsparzaRV02},
we get that if $ \ado $ is a total order, then $ \fin $ contains at most $ |\reachable(\hlpn)| $ non cut-off events.
As mentioned in Sec.~\ref{sec:adequateOrders}, the total adequate order defined in \cite{EsparzaRV02} can be lifted to the configurations in the symbolic unfolding, where it again is total (cp. App.~\ref{app:adequateOrders}).
Thus, we generalized the possibility to construct such a small complete finite prefix by application of Alg.~\ref{alg:complpref} with $ \ado $ being a total adequate order.

\section{High-level versus P/T expansion}\label{sec:expansions}
Every high-level Petri net represents a P/T Petri net (in which the places can only carry a number tokens with color $ \bullet $) with the same behavior, called its expansion.
Markings in a P/T Petri net describe only how many tokens lie on each place. Each transitions only has one possible firing mode that takes and/or places a fixed number of tokens from resp.\ onto each connected place.

\medskip
In this section we state in Lemma~\ref{lem:ExpansionOfComplPref}
that the expansion of a finite complete prefix of the unfolding of a high-level Petri net is a finite and complete prefix of the unfolding of the expanded high-level Petri net. This means the generalization of complete prefixes is ``canonical'', and compatible with the established low-level concepts.
We then compare for our running example the results of
\begin{itemize}[topsep=4pt]
	\item applying the generalized ERV-algorithm Alg.~\ref{alg:complpref} to obtain a complete finite prefix of the symbolic unfolding of a given high-level Petri net, and
	\item first expanding a given high-level Petri net and then applying the ERV-algorithm from \cite{EsparzaRV02} for a complete finite prefix of the (P/T) unfolding.
\end{itemize}

The procedure of constructing the represented P/T Petri net $ \expf(\hlpn) $ (called the \emph{expansion}) of a high-level Petri net $ \hlpn $ is well established (cp., e.g., Chapter~2.4 in \cite{Jensen96}), and we describe it here only briefly;
the places of $ \expf(\hlpn) $ are given by $ \places=\{ \hlplace.\hltok\with\hlplace\in\hlplaces, \hltok\in\hltoks\} $,
and its transitions by $ \transs=\{ \hltrans.\hlmode\with\hltrans\in\hltranss, \hlmode\in\hlmodes(\hltrans) \} $.
There is an arc from $ \hlplace.\hltok $ to $ \hltrans.\hlmode $ iff $ \tup{p,c}\in\preset{\hltrans,\hlmode} $,
and analogously for arcs from transitions to places.
Markings in $ \expf(\hlpn) $ are functions $ \marking:\places\to\N $, describing how often the only color $ \bullet $ lies on each place $ \hlplace.\hltok $.
Every such marking corresponds to a marking~$ \hlmarking $ in the high-level net $ \hlpn $,
with $ \hlmarking(\hlplace,\hltok)=\marking(\hlplace.\hltok) $,
and a transition $ \hltrans $ can fire in mode $ \hlmode $ from $ \hlmarking $ iff
$ \hltrans.\hlmode $ can fire from $ \marking $.
Thus, we say that $ \hlpn $ and $ \expf(\hlpn) $ have the same behavior.
For a finite high-level Petri net~$ \hlpn $,
the expansion $ \expf(\hlpn) $ is finite iff $ \hltoks $ is finite.

The \emph{(low-level) unfolding} of a P/T Petri net $ \pn $ is a tuple $ \llunf(\pn)=\tup{\unf(\pn),\llhomom^\pn} $,
where $ \unf(\pn) $ is an \emph{occurrence net}, and $ \llhomom^\pn:\unf(\pn)\to\pn $ is a \emph{Petri net homomorphism}
such that $ \tup{\unf(\pn),\llhomom^\pn} $ is a \emph{maximum branching process} of $ \pn $.
Since the definition of (low-level) branching processes and homomorphisms is
firstly well established in the literature,
and secondly so similar to their corresponding high-level definitions,
we omit them here and refer the reader for example  to \cite{Engelfriet91,EsparzaRV02}.

\medskip
With the notation of low-level unfoldings of P/T nets
we can define for a high-level occurrence net $ \hlonet $,
the P/T occurrence net
$ \expof(\hlonet):=\unf(\expf(\hlonet)) $
(i.e., the occurrence net from the unfolding $ \llunf(\expf(\hlonet))=\tup{\unf(\expf(\hlonet)),\llhomom^{\expf(\hlonet)}} $).
We abbreviate $ \llhomom^{\expf(\hlonet)} $ by~$ \llhomom^{\hlonet} $.
The  operator $ \expof $ therefore maps high-level occurrence nets to occurrence nets (cf.~\cite{ChatainF10}).
Let now $ \beta=\tup{\hlonet,\hlhomom} $ be a symbolic branching process of $ \hlpn $.
Then we can define the \emph{expanded symbolic branching process}
$ \expof(\beta)=\tup{\expof(\hlonet),\homom} $ of $ \expf(\hlpn) $
with the homomorphism
$ \homom:\expof(\hlonet)\to\expf(\hlpn) $,
defined  \linebreak  by $ \homom(\evnt)=\hltrans.\hlmode\Leftrightarrow \llhomom^\hlonet(\evnt)=\hlevnt.\hlmode\land\hlhomom(\hlevnt)=\hltrans $
and $\homom(\cond)=\hlplace.\hltok\Leftrightarrow \llhomom^\hlonet(\cond)=\hlcond.\hltok\land\hlhomom(\hlcond)=\hlplace $ for events $ \evnt $ resp.\ conditions $ \cond $ in $ \expof(\hlonet) $.
The following diagram serves as an overview:
\begin{center}
	\includegraphics{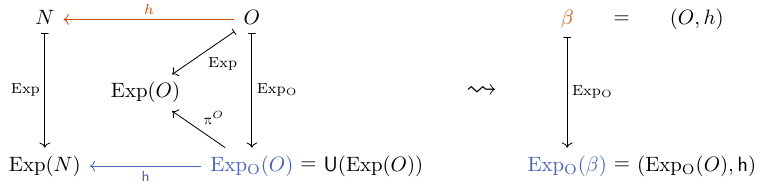}
\end{center}

The following result is shown in \cite{ChatainF10}.
It states that, for a high-level Petri net $ \hlpn $, the unfolding of~$ \hlpn $'s expansion is isomorphic to the expanded symbolic unfolding of $ \hlpn $.
\begin{lemma}[\cite{ChatainF10}, Sec. 4.1]\label{lem:ExpUCom}
	$ \llunf(\expf(\hlpn))\simeq\expof(\symbunf(\hlpn)) $.
\end{lemma}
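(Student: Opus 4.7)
The plan is to prove the isomorphism by exhibiting $\expof(\symbunf(\hlpn))$ as a \emph{maximal} branching process of $\expf(\hlpn)$ and then invoking uniqueness (up to isomorphism) of the unfolding. This splits the argument into two clean parts: (a) $\expof(\symbunf(\hlpn))$ really is a branching process of $\expf(\hlpn)$, and (b) every branching process of $\expf(\hlpn)$ embeds into it.

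For part (a), I would unpack the definition of $\expof$ on $\symbunf(\hlpn)$. The expanded events are tuples $\hlevnt.\hlmode$, and by construction of $\unf(\cdot)$ only those modes $\hlmode$ arise that extend some instantiation $\insta\in\instas(\cone{\hlevnt})$; Prop.~\ref{prop:MarkingsRepresented} and Prop.~\ref{prop:unfComplete} translate this directly into the statement that the corresponding low-level events fire from reachable cuts in $\expf(\hlpn)$. The occurrence-net properties (no self-conflict, well-founded $<$, unique preset events for non-initial conditions) transfer from $\hlunf(\hlpn)$ to its expansion: structural conflict and causality are preserved arc-wise by $\expf$, and color-conflict-freedom of the symbolic configurations is exactly what guarantees that the low-level runs picked out by $\unf(\expf(\hlonet))$ are non-empty on the relevant events. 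The homomorphism $\homom$ defined by $\homom(\hlevnt.\hlmode) = \hlhomom(\hlevnt).\hlmode$ and $\homom(\hlcond.\hltok)=\hlhomom(\hlcond).\hltok$ is initial because $\hlhomom$ is initial (checked against $\hlmarkings_0$ and its expansion), and it is injective on events with the same preset because in the low-level expansion, preset and label together determine $\hlevnt$ (via $\hlhomom$) and $\hlmode$.

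For part (b), I would take an arbitrary branching process $\tup{\onet,\homom'}$ of $\expf(\hlpn)$ and build, event by event along $<$, a symbolic branching process $\tup{\hlonet,\hlhomom}$ of $\hlpn$ together with an embedding of $\onet$ into $\expof(\hlonet)$. The inductive step works as follows: given a low-level event $\evnt$ with label $\hltrans.\hlmode$ whose preset has already been mapped to conditions $\hlcond_1.\hltok_1,\dots,\hlcond_r.\hltok_r$ of $\expof(\hlonet)$, add (if not already present) a symbolic event $\hlevnt$ with $\hlhomom(\hlevnt)=\hltrans$ and preset $\{\hlcond_1,\dots,\hlcond_r\}$ labelled by the appropriate variables; the mode $\hlmode$ then provides an instantiation witnessing satisfiability of $\pred(\hlevnt)$, which is exactly the condition required to keep $\hlonet$ a symbolic branching process. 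The resulting embedding composes with the inclusion into $\expof(\symbunf(\hlpn))$ since $\hlonet$ is a prefix of $\symbunf(\hlpn)$ by maximality of the latter.

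The main obstacle I expect is part (b), specifically the careful bookkeeping of how low-level events that share a symbolic ancestor collapse onto the same $\hlevnt$ versus those that refine into distinct symbolic events: the injectivity condition on symbolic branching processes only requires events to be distinguished up to their preset \emph{and} their label (not their mode), and one must make sure the inductive construction does not over- or under-identify. The key technical point is that two low-level events $\evnt,\evnt'$ with $\homom'(\evnt)=\hltrans.\hlmode$ and $\homom'(\evnt')=\hltrans.\hlmode'$ whose presets project to the same set of symbolic conditions are the expansions of one and the same symbolic event $\hlevnt$ in two different modes, whereas presets projecting to distinct symbolic condition-sets must yield distinct $\hlevnt$. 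Once this identification rule is fixed and checked to be consistent with the data in $\onet$, the induction goes through and the embedding $\onet\hookrightarrow\expof(\symbunf(\hlpn))$ is obtained, yielding maximality and hence the claimed isomorphism $\llunf(\expf(\hlpn))\simeq\expof(\symbunf(\hlpn))$.
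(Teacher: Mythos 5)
The paper does not prove this lemma: it is quoted as a known result from \cite{ChatainF10}, Sec.~4.1, so there is no in-paper argument to compare yours against. Your overall strategy --- show that $\expof(\symbunf(\hlpn))$ is a branching process of $\expf(\hlpn)$, show it is maximal, and invoke uniqueness of the maximal branching process --- is the standard route to such a statement and is sound in outline.

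Two places where your sketch defers the actual work. First, in part (a) you treat the events of $\expof(\symbunf(\hlpn))$ as if they were the tuples $\hlevnt.\hlmode$ themselves. By the paper's definition $\expof(\hlonet)=\unf(\expf(\hlonet))$, so the events are events of a \emph{further unfolding}, and in general several of them carry the same label $\hlevnt.\hlmode$: a place $\hlcond.\hltok$ of $\expf(\hlonet)$ is represented by one condition for each mode of $\hlcondevnt(\hlcond)$ that outputs $\hltok$ on $\hlcond$, so the preset of $\hlevnt.\hlmode$ can be instantiated by several co-sets. This does not break part (a) --- the occurrence-net axioms and the homomorphism to $\expf(\hlpn)$ come for free from $\unf(\cdot)$ and composition --- but it is exactly the multiplicity you must track when matching part (b) against part (a). Second, and more substantively, in part (b) the satisfiability of $\pred(\hlevnt)$ needs a \emph{global} instantiation of the whole cone $\cone{\hlevnt}$, not just the mode $\hlmode$ of the newly added event: you must argue that the modes read off from the labels of the low-level causal past $[\evnt]$ assemble into a single consistent $\insta\in\instas(\cone{\hlevnt})$, which requires that no symbolic ancestor receives two different modes from two distinct low-level events of $[\evnt]$ projecting onto it. That uniqueness of representatives has to be proved by induction along $<$ using the occurrence-net structure of the low-level process (unique producer of each condition, conflict-freeness of $[\evnt]$); your sketch names the issue and asserts the conclusion, but this is precisely the step that carries the proof. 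With that argument supplied, the two mutual prefix relations give the claimed isomorphism.
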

With this result, we state the following:
\begin{lemma}\label{lem:ExpansionOfComplPref}
	Let $ \hlpn $ be a high-level Petri net and $ \beta $ be a prefix of $ \symbunf(\hlpn) $.
	Then
	$ \beta $~is finite and complete if and only if
	$ \expof(\beta) $ is a finite and complete prefix of $ \llunf(\expf(\hlpn)) $.
\end{lemma}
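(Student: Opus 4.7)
The plan is to invoke Lemma~\ref{lem:ExpUCom} to move the statement into the common setting of $\llunf(\expf(\hlpn))$, and then set up an explicit bijection between pairs $(\hlconf,\insta)$ with $\hlconf\in\hlconfs(\beta)$ and $\insta\in\instas(\hlconf)$ on the one hand, and configurations of $\expof(\beta)$ on the other, under which the represented markings match. Finiteness and completeness can then be read off along this bijection in both directions.

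First I would check that $\expof$ respects the prefix relation, so that $\expof(\beta)$ is a prefix of $\expof(\symbunf(\hlpn))$ and, via Lemma~\ref{lem:ExpUCom}, a prefix of $\llunf(\expf(\hlpn))$. Next I would make the bijection precise. Since $\expof(\hlonet)=\unf(\expf(\hlonet))$, every event of $\expof(\beta)$ carries a label $\hlevnt.\hlmode$ with $\hlevnt$ an event of $\beta$ and $\hlmode\in\hlmodes(\hlhomom(\hlevnt))$, and every condition a label $\hlcond.\hltok$. Given a P/T configuration of $\expof(\beta)$, projecting these labels yields a set $\hlconf\subseteq\hlevnts$; causal closure and low-level conflict-freeness of the configuration, together with the way $\expf$ inherits the flow structure of $\hlonet$, force $\hlconf$ to be causally closed and free of structural conflict. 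The condition that the modes picked along the P/T configuration agree on the colors flowing into each shared condition matches exactly the condition that the assignment $\hlvar_\hlevnt\mapsto\hlmode_\hlevnt(\hlvar)$ assembles into a single $\insta\in\instas(\hlconf)$ satisfying all $\pred(\hlevnt)$ simultaneously -- equivalently, that $\hlconf$ is color-conflict-free. Conversely, each pair $(\hlconf,\insta)$ produces a P/T configuration of $\expof(\beta)$ by choosing for each $\hlevnt\in\hlconf$ the mode $\hlmode_\hlevnt=[\hlvar\mapsto\insta(\hlvar_\hlevnt)]_{\hlvar\in\hlvars(\hlevnt)}$, and the cut of this configuration encodes exactly $\confmark(\hlconf,\insta)$ under the canonical marking bijection between $\hlpn$ and $\expf(\hlpn)$.

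Finiteness and completeness then transfer along this bijection. For finiteness, $\beta$ has finitely many events iff $\expof(\beta)$ does, since in the intended regime (finite color set, automatic for $\hlpn\in\Nf$) each high-level event admits only finitely many modes, so the bijection between events of $\beta$ and events of $\expof(\beta)$ is finite-to-finite. For completeness, condition \emph{i)} of Def.~\ref{def:ComplPref} for $\beta$ becomes the P/T marking-representation condition for $\expof(\beta)$ via the marking bijection, and condition \emph{ii)} translates because a pair $(\hltrans,\hlmode)$ is enabled at a marking $\hlmarking$ in $\hlpn$ precisely when $\hltrans.\hlmode$ is enabled at the corresponding marking of $\expf(\hlpn)$; the witnessing symbolic event $\hlevnt$ in $\beta$ then corresponds under the bijection to its low-level counterpart $\hlevnt.\hlmode$ in $\expof(\beta)$, with Prop.~\ref{prop:unfComplete} guaranteeing the alignment.

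The main obstacle I anticipate is establishing the bijection watertightly at the conflict level: one must show that low-level causal closure and conflict-freeness of a configuration of $\expof(\beta)$ jointly encode structural-conflict-freeness \emph{and} color-conflict-freeness of its high-level projection $\hlconf$, and that the single instantiation reconstructed from a low-level configuration is well-defined. This is the point where the injectivity of $\hlhomom$ on events with the same preset, together with the ordinariness and safety built into the definition of $\hlonet$, do the real work.
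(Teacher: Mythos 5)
Your proposal is correct and follows essentially the same route as the paper: both reduce to $\llunf(\expf(\hlpn))$ via Lemma~\ref{lem:ExpUCom} and then exploit the correspondence between instantiated symbolic configurations $(\hlconf,\insta)$ of $\beta$ and P/T configurations of $\expof(\beta)$, matching markings through the canonical bijection between markings of $\hlpn$ and of $\expf(\hlpn)$ and using Prop.~\ref{prop:unfComplete} for condition \emph{ii)}. The only cosmetic difference is that you set up the correspondence once as a global bijection, whereas the paper constructs it on demand for each reachable marking by translating the witnessing firing sequence of $\hlconf$ under $\insta$ into a firing sequence of $\expf(\hlonet)$.
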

The proof uses the results from Prop.~\ref{prop:MarkingsRepresented} and Prop.~\ref{prop:unfComplete},
since the definition of completeness on the symbolic level is a direct translation from its P/T analogue.
\begin{proof}
	Let $ \beta=\tup{\hlonet,\hlhomom} $ be finite and complete.
	From Lemma~\ref{lem:ExpUCom} we already know that $ \expof(\hlonet)\subseteq\unf(\expf(\hlpn)) $.
	Since $ \expof(\beta) $ is a branching process of $ \expf(\hlpn) $,
	we see that is a prefix of the unfolding of $ \expf(\hlpn) $.
	Also,
	$ \expof(\beta) $ is obviously finite since $ \hlonet $ is a finite high-level occurrence net.

	We now prove that $ \expof(\beta)=\tup{\expof(\hlonet),\homom} $ is complete.
	Let $ \marking $ be a reachable marking in $ \expf(\hlpn) $.
	Then
	the high-level marking $ \hlmarking $ defined by
	$ \hlmarking(\hlplace,\hltok)=\marking(\hlplace.\hltok) $
	is reachable in $ \hlpn $.
	Thus, since $ \beta $ is complete, there is a configuration $ \hlconf\in\hlconfs(\beta) $
	and an instantiation $ \insta\in\instas(\hlconf) $
	satisfying a) and b) from Def.~\ref{def:ComplPref}.
	This means there is a firing sequence
	$ \hlcut_0[\hlevnt_1,\hlmode_1\rangle\hlcut_1\dots[\hlevnt_n,\hlmode_n\rangle\hlcut_n $
	with $ \{ \hlevnt_1,\dots,\hlevnt_n \}=\hlconf $,
	$ \hlmode_i=\insta\circ [\hlvar\mapsto\hlvar_{\hlevnt_i}]_{\hlvar\in\hlvars(\hlevnt_i)} $,
	and $ \hlcut_n=\confcut(\hlconf,\insta) $
	(meaning $ \hlmarking=\confmark(\hlconf,\insta)=\ms{\tup{\hlhomom(\hlcond),\hltok}\with\tup{\hlcond,\hltok}\in\hlcut_n} $).
	Then, in $ \expf(\hlonet) $,
	the marking $ \ms{\hlcond.\hltok\with\tup{\hlcond,\hltok}\in\hlcut_n} $
	is reachable from the initial marking $ \ms{\hlcond.\hltok\with\tup{\hlcond,\hltok}\in\hlcut_0} $ by the firing sequence
	$ \tup{\hlevnt_1.\hlmode_1,\dots,\hlevnt_n.\hlmode_n} $.
	Thus, there is a configuration $ \conf=\{ \evnt_1,\dots,\evnt_n \} $
	in $ \unf(\expf(\hlonet))=\expof(\hlonet) $
	with $\forall i: \llhomom^{\mathrm{O}}(\evnt_i)= \hlevnt_i.\hlmode_i $.
	Then, by the definition of $ \homom $, we get $ \confmark^{\homom}(\conf):=\ms{\homom(\cond)\with\cond\in\confcut(\conf)} = \ms{\hlhomom(\hlcond).\hltok\with\tup{\hlcond,\hltok}\in\hlcut_n}=\marking $.

	Let now $ \hltrans.\hlmode\in\transs $ s.t.\ $ \marking[\hltrans.\hlmode\rangle $.
	Then $ \hlmarking[\hltrans,\hlmode\rangle $.
	Since $ \hlconf $, $ \insta $ satisfy property b) from Def.~\ref{def:ComplPref},
	we know that $ \exists \hlevnt\in\hlevnts $ s.t.\ $ \hlevnt\notin\hlconf $,
	$ \hlhomom(\hlevnt)=\hltrans $, and $ \hlconf,\insta[\hlevnt,\hlmode\rangle $.
	This means, in $ \expf(\beta) $, we have
	$ \ms{\hlcond.\hltok\with\tup{\hlcond,\hltok}\in\hlcut_n}[\hlevnt.\hlmode\rangle $.
	Thus, there exists an $ \evnt $ in $ \unf(\expf(\beta)) $ such that
	$ \conf[\evnt\rangle $ and $ \llhomom^{\mathrm{O}}(\evnt)=\hlevnt.\hlmode $,
	which again means that $ \homom(\evnt)=\hlhomom(\hlevnt).\hlmode =\hltrans.\hlmode$.
	This proves that $ \expof(\beta) $ is complete.

	The other direction works analogously.
\end{proof}

We can now compare the two complete finite prefixes resulting from the original ERV-algorithm from \cite{EsparzaRV02} applied to $ \expf(\hlpn) $ and the generalized ERV-algorithm Alg.~\ref{alg:complpref} applied to $ \hlpn\in\Nf $.
From the definition of the generalized cut-off criterion
we get that both these prefixes have the same depth.
However, due to the high-level representation, the breadth of the symbolic prefix can be substantially~smaller.
This is the case for our running example:

\begin{example}
	\begin{figure}[!htb]
		\centering
		\includegraphics[scale=1.25]{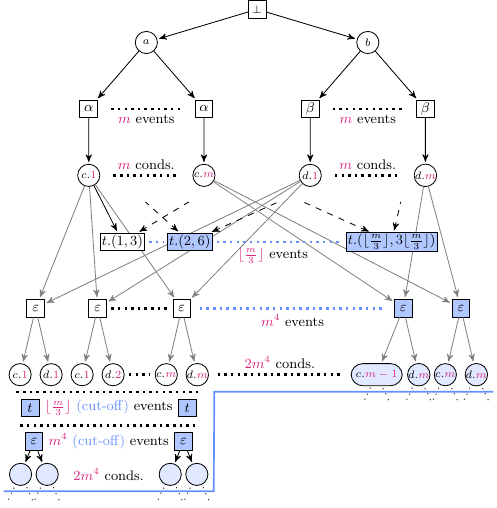}
		\caption{The complete finite prefix of $ \llunf(\expf(\hlpn)) $ of the running example $ \hlpn $ from Figure~\ref{fig:runEx} calculated by the original ERV-Algorithm.
			\label{fig:RunExExp}
		}
	\end{figure}
	Consider again $ \hlpn\in\Nf $ from Figure~\ref{fig:runEx} with $ \hltoks=\{ 0,1,\dots,m \} $ for a fixed $ m>0 $.
	The finite complete prefix of  $ \llunf(\expf(\hlpn)) $ is depicted in Fig.~\ref{fig:RunExExp}.
	Instead of giving each event/condition a distinct name, we indicated the label of each node inside of it.
	For events with label $ \varepsilon $ we even omitted the mode, since it is derivable from the connected conditions.
	Cut-off events and their output conditions are again shaded blue, and the blue line indicates the complete finite prefix resulting from the original ERV-algorithm.

	After firing an instance of $ \alpha $
	and an instance of $ \beta $,
	we arrive at conditions with labels $ c.k $ and $ d.\ell $.
	If these satisfy $ \ell=k\cdot 3 $ then
	we can fire an instance of $ t $, which means we have $ \lfloor\frac{m}{3}\rfloor $ such events.
	Only the first is no cut-off event,
	since their configurations all represent the same (empty) marking.
	This empty marking is however also the reason why we cannot continue even after the first instance of~$ t $.

	For each combination of an output condition of an instance of $ \alpha $ (the $ m $ instances of $ c $) and an output condition of an instance of $ \beta $ (the $ m $ instances of $ d $),
	we have $ m^2 $ possibilities to fire an instance of $ \varepsilon $.
	The reason for this is that in the high-level net $ \hlpn $,
	the output variables are independent of the input variables.
	This leads to $ m^4 $ many $ \varepsilon $-events of depth $ 2 $.

	All except the first $ m^2 $ of those are cut-off events.
	After the non cut-off events, however, we have to repeat the part from above for the ERV-algorithm to terminate.
	All in all the complete finite prefix contains $ 6m^4+4m+ 2\lfloor \frac{m}{3} \rfloor + 2 $ nodes for every fixed $ m $
	in the color class $ \hltoks=\{0,1,\dots,m \} $.
	The complete finite prefix of the \emph{symbolic} unfolding $ \symbunf(\hlpn) $
	that is shown in Figure~\ref{fig:runExUnf},
	on the other hand has the same number of nodes for every $ m $.
\end{example}

Generalizing this example to a family of nets gives the following proposition:
\begin{proposition}
	For every $ n \in\N$, there is a family $ (\hlpn_m^n)_{m\in\N^{>0}}$ of high-level nets  in $\hlpn_m^n\in\Nf $ such that every $ \hlpn_m^n $ has the set of colors $ \hltoks=\{0,\dots,m\} $ and satisfies that
	\begin{itemize}
		\item the complete finite prefix of $ \symbunf(\hlpn_m^n) $ obtained by Alg.~\ref{alg:complpref} has the same number of nodes for every~$ m $,
		\item the number of nodes in the low-level prefix of $ \llunf(\expf(\hlpn_m^n)) $ obtained by the original ERV-algorithm is greater than $ m^n $.
	\end{itemize}
\end{proposition}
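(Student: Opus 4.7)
The plan is to construct $\hlpn_m^n$ as an $n$-fold widening of the running example from Fig.~\ref{fig:runEx}. Set $k:=\max(n,1)$ and build $k$ parallel copies of the $a\to\alpha\to c$ chain: initial places $a_1,\ldots,a_k$ (each carrying color $0$), transitions $\alpha_1,\ldots,\alpha_k$ placing a color from $\{1,\ldots,m\}$ on dedicated places $c_1,\ldots,c_k$, a single looping transition $\varepsilon$ whose preset is $\{c_1,\ldots,c_k\}$ with distinct input variables $x_1,\ldots,x_k$ and whose postset is the same $\{c_1,\ldots,c_k\}$ with independent output variables $z_1,\ldots,z_k$ (all ranging over $\{1,\ldots,m\}$), and a terminating transition $t$ consuming every $c_i$ under a trivially satisfiable guard. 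The resulting net is finite, safe, and has finitely many reachable markings, so $\hlpn_m^n\in\Nf$.

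For the low-level lower bound, I would inspect $\llunf(\expf(\hlpn_m^n))$. The expansion of $\varepsilon$ yields $m^{2k}$ distinct low-level transitions, one per firing mode. In the low-level unfolding, each of the $m^k$ co-sets $\ms{c_1.i_1,\ldots,c_k.i_k}$ (obtained after firing the matching $\alpha_i.i_j$) enables all $m^k$ output modes of $\varepsilon$, producing $m^{2k}$ distinct events in the prefix. Since cut-off events are themselves nodes of the prefix (the algorithm adds them and their output conditions before recording the cut-off), all $m^{2k}\geq m^n$ contribute; together with the initial conditions and the $km$ events $\alpha_i.j$ the total strictly exceeds $m^n$ for every $m\geq 1$ and $n\geq 0$, settling the second bullet.

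For the symbolic bound, I would observe that the underlying occurrence-net structure of $\symbunf(\hlpn_m^n)$ (conditions, events, presets, postsets, and causality) is identical for every $m\geq 1$; only the predicates attached to events mention $m$, through the domain of the variables. Fixing the lifted Foata-based total adequate order from App.~\ref{app:adequateOrders}, whose ranking depends only on $m$-independent combinatorial data, Alg.~\ref{alg:complpref} selects its possible extensions in an $m$-invariant sequence. The cut-off test compares sets of the form $\{\ms{(c_{i_1},v_1),\ldots,(c_{i_r},v_r)}\with(v_1,\ldots,v_r)\in\{1,\ldots,m\}^r\}$; inclusion between two such parametric families is determined purely by their place--variable skeletons and is therefore $m$-invariant. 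Consequently Alg.~\ref{alg:complpref} produces isomorphic symbolic prefixes for every $m$, with a node count depending only on $n$, settling the first bullet.

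The main obstacle is the formal verification of the $m$-invariance of the cut-off comparisons. Intuitively, after the first symbolic $\varepsilon$-event the prefix realizes a place--variable pattern on $c_1,\ldots,c_k$ that recurs at every deeper $\varepsilon$-occurrence, so the second $\varepsilon$-event becomes a cut-off uniformly in $m$; analogously the second $t$-event becomes a cut-off because $t$ empties the net. Turning this into a proof amounts to exhibiting each $\confmarks(\cone{\hlevnt})$ as a parametric family $\{\mu(\vec v)\with \vec v\in\{1,\ldots,m\}^r\}$ whose arity $r$ and parametrization $\mu$ are read off from the structural data of $\hlevnt$, and then verifying that inclusion of two such families reduces to matching skeletons -- essentially a $k$-fold replay of the analysis of the running example.
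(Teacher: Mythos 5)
Your construction is correct in substance, but it takes a different route from the paper, and the comparison is instructive. The paper does not prove the proposition by widening the running example; it simply points to the Fork And Join family of Sec.~\ref{bm:diamond} as a witness: one initial place, a transition $t$ that forks an arbitrary color onto each of $n$ output places, and a join transition $\varepsilon$ that consumes them. That net's symbolic unfolding \emph{is} the net itself ($n+3$ nodes, manifestly independent of $m$, no cut-off analysis needed because the unfolding is already finite and acyclic), while its low-level unfolding has $(n+2)(m+1)^n+1>m^n$ nodes. Your family — $k=\max(n,1)$ parallel $a_i\to\alpha_i\to c_i$ chains with a looping $\varepsilon$ and a terminating $t$ — also works, and your low-level count is sound: the cones of the level-one $\varepsilon$-events contain only non-cut-off $\alpha$-events, so Alg.~\ref{alg:complpref}'s low-level counterpart adds all $m^{2k}\geq m^n$ of them to the prefix before declaring most of them cut-offs. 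The price you pay for keeping the loop is exactly the obstacle you flag yourself: the first bullet now requires showing that cut-off detection is uniform in $m$. For your specific family this does go through, because every $\confmarks(\cone{\hlevnt})$ is a full product $\{\ms{\tup{c_{i_1},v_1},\dots,\tup{c_{i_r},v_r}}\with \vec v\in\{1,\dots,m\}^r\}$ (or a singleton), so the inclusion test of Def.~\ref{def:cut-off} degenerates to comparing the underlying place sets, which is $m$-invariant; but this is an extra argument you only sketch, whereas choosing a loop-free witness (as the paper does) makes the first bullet immediate. In short: same idea, a more complicated witness than necessary, and the one nontrivial step for that witness is left at the level of a (correct) sketch rather than a proof.
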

In particular, the benchmark family Fork And Join, presented in Sec.~\ref{bm:diamond}
satisfies this property.

\section{Handling infinitely many reachable markings}\label{sec:InfinteCase}

Unfoldings of unbounded P/T Petri nets (i.e., with infinitely many markings)
have been investigated in \cite{AbdullaIN00,DeselJN04},
and in \cite{HerbreteauST07} concurrent well-structured transition systems with infinite state space
are unfolded.
When applying the generalized ERV-algorithm, Alg.~\ref{alg:complpref},
to high-level Petri nets with infinitely many reachable markings
(therefore violating \itmFinite\ from the definition of $ \Nf $),
the proof for finiteness of the resulting prefix does not hold anymore:
the proof of Prop.~\ref{prop:FinIsFin}, step~(1),
is a generalization of the proof of the respective claim in \cite{EsparzaRV02} (which uses the pigeonhole principle).
It is argued that we cannot have $ |\reachable(\hlpn)|+1 $ consecutive events
s.t.\ their cone configurations each generate a marking in the net not seen before,
and we thus have a cut-off event.
When we deal with infinitely many markings, this argument cannot be made.

In this section, we introduce a class $ \Nsc $ of safe high-level nets, called symbolically compact,
that have possibly infinitely many reachable markings (and therefore an infinite expansion),
generalizing the class $ \Nf $.
We then proceed to make adaptions to Alg.~\ref{alg:complpref} (i.e., to the used cut-off criterion),
so that it generates a finite and complete prefix of the symbolic unfolding for any $ \hlpn\in\Nsc $.

The following Lemma precisely describes the finite high-level Petri nets for which a finite and complete prefix of the symbolic unfolding exists.
They are characterized by having a bound on the number of steps needed to arrive at every reachable marking.
For the proof we argue that in the case of such a bound,
the symbolic unfolding up to depth $ n+1 $ is a finite and complete prefix,
and that in the absence of such a bound no depth of a prefix suffices for it to be complete.
\label{sec:SecOfPrefixExistenceCond}
\begin{lemma}\label{lem:PrefixExistenceCond}
	For a finite high-level Petri net $ \hlpn=\tup{\hlns,\hlmarkings_0} $
	there exists a finite and complete prefix of $ \symbunf(\hlpn) $ if and only if there exists a bound $ n\in\N $
	such that every marking in $ \reachable(\hlpn) $ is reachable from a marking in $ \hlmarkings_0 $ by firing at most $ n $ transitions.
\end{lemma}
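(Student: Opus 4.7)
The plan is to prove both implications by exploiting a single technical correspondence: a firing sequence of length $k$ in $\hlpn$ lifts, via iterated application of Prop.~\ref{prop:unfComplete}, to a configuration of size $k$ in $\symbunf(\hlpn)$, and conversely any such configuration induces (through any linearization compatible with $<$) a firing sequence of the same length.

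For the ``if'' direction, assuming the bound $n$, I would take $\beta$ to be the sub-branching-process of $\symbunf(\hlpn)$ consisting of all nodes of depth at most $n+1$; downward closure of this condition under $\leq$ makes $\beta$ a genuine prefix. I would then prove $\beta$ finite by induction on depth, relying on the finiteness of $\hlpn$: at each level there are only finitely many transitions and only finitely many co-sets among the inductively finite lower-depth conditions, and each event has a finite postset. For completeness, any reachable marking $M$ admits a firing sequence of length $k\leq n$, which lifts to a configuration $\hlconf$ of size $k$ with an instantiation $\insta$ satisfying $M=\confmark(\hlconf,\insta)$; every event of $\hlconf$ has depth at most $|\hlconf|\leq n$, so $\hlconf\in\hlconfs(\beta)$, yielding property (i) of Def.~\ref{def:ComplPref}. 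Property (ii) then follows by one more application of Prop.~\ref{prop:unfComplete}: any transition-mode pair fireable from $M$ corresponds to an event whose preset is contained in $\confcut(\hlconf,\insta)$, hence of depth at most $n+1$, so it too lies in $\beta$.

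For the ``only if'' direction, suppose $\beta$ is a finite complete prefix with event set $\hlevnts'$, and set $N:=|\hlevnts'|$. For any reachable marking $M$, completeness provides a configuration $\hlconf\subseteq\hlevnts'$ together with an instantiation $\insta$ witnessing $M$. Since $|\hlconf|\leq N$, linearizing $\hlconf$ compatibly with $<$ and reading off the modes from $\insta$ yields, via the converse direction of Prop.~\ref{prop:unfComplete} applied step by step, a firing sequence in $\hlpn$ of length $|\hlconf|\leq N$ reaching $M$. Hence $N$ serves as the sought bound.

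The main obstacle I anticipate is the three-way bookkeeping between firing-sequence length, configuration size, and event depth: in particular, ensuring that every event inside a size-$k$ configuration has depth at most $k$, which is what makes the depth-$(n+1)$ prefix simultaneously large enough to capture every reachable marking together with its one-step extensions and small enough to remain finite.
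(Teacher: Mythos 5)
Your proposal is correct and follows essentially the same route as the paper's proof: the ``if'' direction takes the depth-$(n+1)$ prefix and argues completeness via Prop.~\ref{prop:MarkingsRepresented} and Prop.~\ref{prop:unfComplete} and finiteness from the finiteness of $\hlpn$, while the ``only if'' direction extracts the bound from the maximal configuration size of the assumed finite complete prefix (you use $|\hlevnts'|$ where the paper uses $\max\{|\hlconf|\with\hlconf\in\hlconfs(\beta)\}$, and you argue directly where the paper argues by contradiction — immaterial differences). The depth-versus-configuration-size bookkeeping you flag as the main obstacle goes through exactly as you describe, since configurations are causally closed.
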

\begin{proof}
	From Prop.~\ref{prop:MarkingsRepresented} and Prop.~\ref{prop:unfComplete} we see that for a finite high-level Petri net with such a bound $ n $, the prefix of the symbolic unfolding containing exactly the events $ \hlevnt $ with $ \depth(\hlevnt)\leq n+1 $ is complete.
	Finiteness of this prefix follows from the finiteness of the original net and the definition of homomorphism.

	Assume now that no such bound exists, and, for the purpose of contradiction, assume that there is a finite and complete prefix $ \beta $ of $ \symbunf(\hlpn) $.
	Denote $\widetilde{n}= \max\{|\hlconf|\with\hlconf\in\hlconfs(\beta) \}<\infty $.
	Then there exists a marking~$ \hlmarking\in\reachable(\hlpn) $ for which we have to fire at least $ \widetilde{n}+1 $ transitions to reach it.
	Again from Prop.~\ref{prop:MarkingsRepresented} and Prop.~\ref{prop:unfComplete} it follows that a configuration $ \hlconf $ with $ \hlmarking\in\confmarks(\hlconf) $ must contain at least $ \widetilde{n}+1 $ events, contradicting that $ \beta $ is complete.
\end{proof}

\subsection{Symbolically compact high-level Petri nets}
We use the result of Lemma~\ref{lem:PrefixExistenceCond} to define the class $ \Nsc $ of high-level nets
for which we adapt the algorithm for constructing finite and complete prefixes of the symbolic unfolding.
\begin{definition}[Class $ \Nsc $]\label{def:Nsc}
	A finite high-level Petri net $ \hlpn $ is called \emph{\textbf{s}ymbolically \textbf{c}ompact} if it satisfies
	\itmSafety\ and \itmGuards\ from Def.~\ref{def:Nf}, and
	\begin{itemize}[label=\textbf{(3}*\textbf{)},leftmargin=*,topsep=2pt]
		\item There is a bound $ n\in\N $ on the number of transition firings needed to reach all markings in~$ \reachable(\hlpn) $.
	\end{itemize}
	We denote the class containing all symbolically compact high-level Petri nets by $ \Nsc $.
\end{definition}
\newcommand{\itmUniform}{\textbf{(3}*\textbf{)}}

Note that in the case of a (finite, safe) P/T net, property \itmUniform\ is equivalent to~\itmFinite\ (i.e., $ |\reachable(\hlpn)|<\infty $).
However, this is \emph{not} true for all high-level nets~$ \hlpn $:
while $ |\reachable(\hlpn)|<\infty $ still implies \itmUniform\ (meaning $ \Nf\subseteq\Nsc $), the reverse implication does not hold,
as our running example from Figure~\ref{fig:runEx} demonstrates when we change the set of colors to $ \hltoks=\N $: it then still satisfies \itmSafety\ and \itmGuards, with
$ 	\reachable(\hlpn)=\{ \ms{ \tup{a,0},\tup{b,0} }, \ms{}  \}\cup \{ \ms{ \tup{c,n},\tup{b,0} },\ms{ \tup{a,0},\tup{d,n'} },\allowbreak\ms{ \tup{c,n},\tup{d,n'} }\with n,n'\in\N \}. $
So we have infinitely many markings that can all be reached by firing at most two transitions,
meaning the net satisfies~\itmUniform\ and is therefore symbolically compact.

Lemma~\ref{lem:PrefixExistenceCond} implies that the class $ \Nsc $ of symbolically compact nets contains exactly all high-level Petri nets satisfying \itmSafety\ and \itmGuards\ for which a finite and complete prefix of the symbolic unfolding exists (independently of whether the number of reachable markings is finite).
Since the reachable markings of a high-level Petri net and its expansion correspond to each other,
this observation leads to an interesting subclass~$ \Nsc\setminus\Nf $ of symbolically compact high-level Petri nets that have infinitely many reachable markings.
For every net $ \hlpn $ in this subclass
\begin{itemize}
	\item there exists a finite and complete prefix of $ \symbunf(\hlpn) $,\hspace{2.6cm} but
	\item there does \emph{not} exist a finite and complete prefix of $ \symbunf(\expf(\hlpn)) $.
\end{itemize}
In particular, the original ERV-algorithm cannot be applied to $ \expf(\hlpn) $, since this expansion is an infinite net.

An example for such a net is our running example from Figure~\ref{fig:runEx} when we replace the color class $ \hltoks=\{ 0,1,\dots,m \} $ by $ \hltoks=\N $.
Much simpler is the following net, also with $ \hltoks=\N $:
\begin{equation*}
	\includegraphics{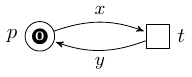}
\end{equation*}
Obviously, every reachable marking $ \ms{\tup{\hlplace,n}} $ with $ n\in\N $ can be reached by firing $ \hltrans $ one time in mode $ \{ x\gets 0,y\gets n \} $, so the net is symbolically compact.
The expansion of this net however is infinite,
and the original ERV-algorithm does not terminate when applied to it.

\subsection{Insufficiency of the cut-off criterion for \texorpdfstring{$ \Nsc $}{}}\label{sec:Nsc-vs-Nf}
Naturally, the question arises whether the generalized ERV-Algorithm, Alg.~\ref{alg:complpref}, also yields a finite and complete prefix of a symbolically compact input net.
For many examples (like the simple one above) this is the case.
However, there are symbolically compact high-level Petri nets for which  Alg.~\ref{alg:complpref} does \emph{not} terminate.

\begin{figure}[!htb]
\vspace*{-2mm}
	\begin{subfigure}[t]{\linewidth}
		\centering
		\includegraphics{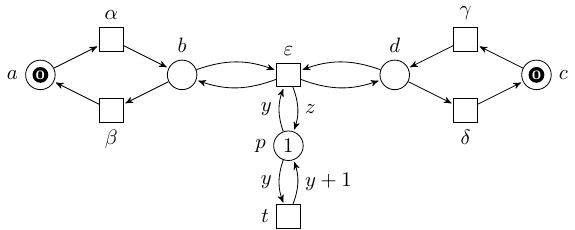}
		\caption{A symbolically compact net with $ \hltoks=\N $.\label{fig:ex-Nsc-vs-Nf}}
	\end{subfigure}\\[5mm]
	\begin{subfigure}[t]{\linewidth}
		\centering
		\includegraphics[width=\textwidth]{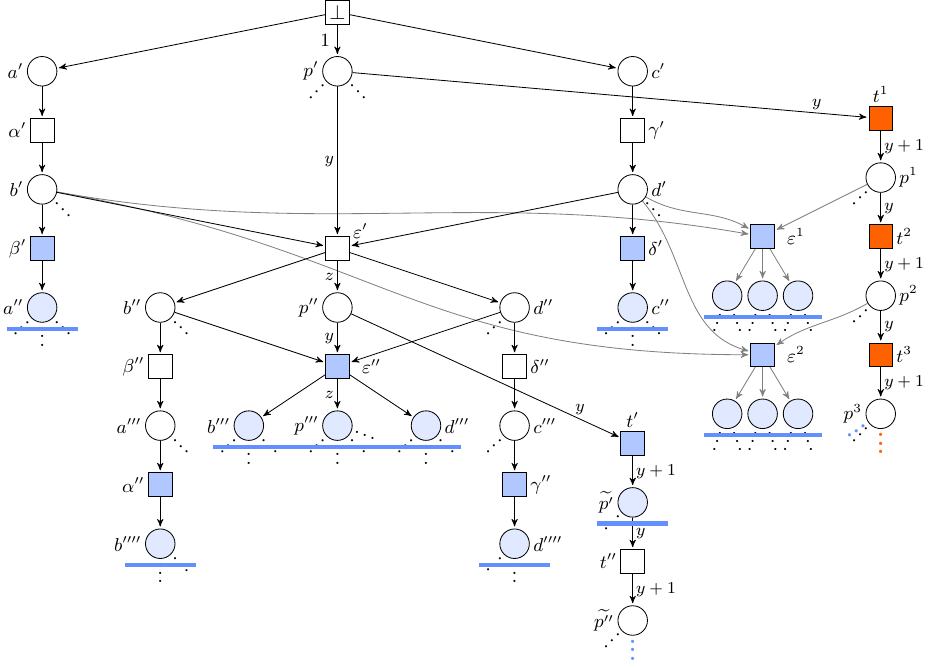}
		\caption{A prefix of the symbolic unfolding of the net in \subref{fig:ex-Nsc-vs-Nf}.\label{fig:ex-Nsc-vs-Nf-unf}}
	\end{subfigure}\vspace*{-3mm}
	\caption{A symbolically compact net in \subref{fig:ex-Nsc-vs-Nf} where Alg.~\ref{alg:complpref}, trying to build a complete finite prefix of the symbolic unfolding shown in \subref{fig:ex-Nsc-vs-Nf-unf}, does not terminate.}\vspace*{-6mm}
\end{figure}

\medskip
The criterion for nontermination of Alg.~\ref{alg:complpref} is that in the unfolding of the net, there is an
infinite sequence of cone configurations $ \cone{\hlevnt_1},\cone{\hlevnt_2},\dots $
with $ \cone{\hlevnt_1}\ado\cone{\hlevnt_2}\ado\dots $ such that
\begin{itemize}
	\item
	$ \forall i\in\N_{>0}: \{ \hlevnt_1,\dots,\hlevnt_i \}\in\hlconfs(\symbunf) $,
	i.e., $ \tup{\hlevnt_1,\hlevnt_2,\dots} $ is a fireable sequence in the symbolic unfolding,
	\item $ \forall i\in\N_{>0}:  \confmarks(\cone{\hlevnt_i})\not\subseteq \bigcup_{\cone{\hlevnt'}\ado\cone{\hlevnt_i}}\confmarks(\cone{\hlevnt'}) $,
	i.e., no event $ \hlevnt_i $ is a cut-off event.
\end{itemize}
Note that in the second condition,
the $ \hlevnt' $ in the union are arbitrary events in the unfolding, and not restricted to the sequence $ \tup{\hlevnt_1,\hlevnt_2,\dots} $.
An example for such a net is shown in Figure~\ref{fig:ex-Nsc-vs-Nf}:

The set of colors is given by $ \hltoks=\N $.
Initially there is a token $ 0 $ in each of the places~$ a $ and $ c $.
The token on $ a $ can cycle between $ a $ and $ b $ by transitions~$ \alpha $ and $ \beta $.
Analogously, the other token can cycle between $ c $ and $ d $ by $ \gamma $ and $ \delta $.
Additionally, in the inital marking, there is a color~$ 1 $ on place~$ p $.
This number can be increased by $ 1 $ by firing $ t $.
Thus, every number~$ n $ can be placed on $ p $ by $ n-1 $ firings of $ t $.
When, however, the two cycling tokens of color $ 0 $ are on places $ b $ and $ d $,
an arbitrary number $ n $ can be placed directly on $ p $ by firing $ \varepsilon $.
The net therefore is symbolically compact.

Examine now the unfolding in Figure~\ref{fig:ex-Nsc-vs-Nf-unf}.
Cut-off events and their output conditions are again shaded blue.
For a cleaner presentation do not write the local predicate next to each event.
For the event $ \varepsilon' $ we have
$ \confmarks(\cone{\varepsilon'})=\{ \ms{ (b,0),\tup{d,0},(p,n) }\with n\in\N \} $.
This means by firing no event, only $ \beta'' $, only $ \delta'' $, or both $ \beta'' $ and $ \delta'' $ from the corresponding cut,
we can represent every reachable marking in the net.

The sequence $ \cone{t^1},\cone{t^2},\dots $ of cone configurations,
with the corresponding events shaded orange,
now satisfies the criterion from above:
The first condition
is obviously satisfied.
The sequence of events corresponds to firing $ t $ infinitely often,
always increasing the number on $ p $ by $ 1 $.
The cones $ \cone{t^i} $ are the only cone configurations
where the cuts represent markings with \emph{no tokens} on the two places $ b $ and $ d $.
For all other cones in the unfolding, there is a $ 0 $ on $ b $ and/or a $ 0 $ on $ d $.
Thus, no event $ t^i $ is a cut-off event.
This means if Alg.~\ref{alg:complpref} is applied to the net in Figure~\ref{fig:ex-Nsc-vs-Nf} it does not terminate, building a prefix containing every $ t^i $ with $ i\in\N^+ $.

We have now shown that there are nets in $ \Nsc\setminus\Nf $ where Alg.~\ref{alg:complpref}
does not terminate due to the insufficient cut-off criterion.
A compelling avenue for further research lies in exploring the existence of subclasses $ \hlpncls $ between $ \Nf $ and $ \Nsc $ (i.e., $ \Nf\subset \hlpncls \subset \Nsc $) for which the cut-off criterion suffices and the algorithm terminates.

\subsection{The finite prefix algorithm for symbolically compact nets}\label{sec:AlgoSymbComp}
As previously discussed, the argument that states the existence of one event in a chain of $ |\reachable(\hlpn)|+1 $ consecutive events, such that every marking represented by its cone configuration is contained in the union of all markings represented by previous cone configurations, cannot be applied in the case of an infinite number of reachable markings. Consequently, Alg.~\ref{alg:complpref} may not terminate when applied to a net in $ \Nsc\setminus\Nf $. However, condition \itmUniform\ guarantees that every marking reached by a cone configuration $ \cone{\hlevnt} $ with depth $ >n $ can be reached by a configuration
$ \hlconf $ containing no more than $ n $ events.

For the algorithm to terminate, we need to adjust the cut-off criterion since we do not know whether~$ \hlconf $ is also a cone configuration, as demanded in Def.~\ref{def:cut-off}. Therefore, we define \emph{cut-off* events}, that generalize cut-off events. They only require that every marking in $ \confmarks(\cone{\hlevnt}) $ has been observed in a set $ \confmarks(\hlconf) $ for \emph{any} configuration $ \hlconf\prec \cone{\hlevnt} $, rather than just considering cone configurations:
\begin{definition}[Cut-off* event]
	Under the assumptions of Def.~\ref{def:cut-off},
	the high-level event $ \hlevnt $ is a \emph{cut-off*} event (w.r.t.\ $ \ado $) if
	$ \confmarks(\cone{\hlevnt})\subseteq \bigcup_{\hlconf\ado\cone{\hlevnt}}\confmarks(\hlconf). $
\end{definition}
We additionally assume that the used adequate order satisfies  $ |\hlconf_1|<|\hlconf_2|\Rightarrow \hlconf_1\ado\hlconf_2 $,
so that every event with depth $ >n $ will be a cut-off event.
Since all adequate orders discussed in \cite{EsparzaRV02} satisfy this this property (cp. App.~\ref{app:adequateOrders}), this is a reasonable requirement.
This adaption and assumption now lead to:
\begin{theorem}
	Assume a given adequate order $ \prec $ to satisfy $ |\hlconf_1|<|\hlconf_2|\Rightarrow {\hlconf_1\prec \hlconf_2} $.
	When replacing in Alg.~\ref{alg:complpref} the term ``cut-off event'' by ``cut-off* event'',
	it terminates for any input net $ \hlpn\in\Nsc $, and generates a complete finite prefix of $ \symbunf(\hlpn) $.
\end{theorem}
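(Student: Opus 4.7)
The plan is to adapt the proofs of Prop.~\ref{prop:FinIsFin} (to establish finiteness, hence termination) and Prop.~\ref{prop:FinIsCompl} (to establish completeness), with two key changes: replace the pigeonhole argument based on $|\reachable(\hlpn)|$ with one based on the bound $n$ from property~\itmUniform, and replace ``cut-off'' by ``cut-off*'' throughout.

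For finiteness, items (2) and (3) of the proof of Prop.~\ref{prop:FinIsFin} (finiteness of $\preset{\hlevnt}$ and $\postset{\hlevnt}$, and finitely many events at each depth) transfer unchanged. I would replace item (1) by showing that every event $\hlevnt \in \fin$ satisfies $\depth(\hlevnt) \leq n+1$. Suppose for contradiction some $\hlevnt \in \fin$ has $\depth(\hlevnt) \geq n+2$; then $\cone{\hlevnt}$ contains a causal predecessor $\hlevnt'$ with $|\cone{\hlevnt'}| \geq n+1$. For any $\hlmarking \in \confmarks(\cone{\hlevnt'})$, Prop.~\ref{prop:MarkingsRepresented} gives $\hlmarking \in \reachable(\hlpn)$, and by \itmUniform\ there is a firing sequence of length at most $n$ reaching $\hlmarking$; the inductive construction underlying Prop.~\ref{prop:MarkingsRepresented} and Prop.~\ref{prop:unfComplete} translates this sequence into a configuration $\hlconf_\hlmarking \in \hlconfs(\symbunf)$ of size at most $n$ with $\hlmarking \in \confmarks(\hlconf_\hlmarking)$. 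The size-monotonicity assumption on $\ado$ then forces $\hlconf_\hlmarking \ado \cone{\hlevnt'}$, so $\hlevnt'$ would satisfy the cut-off* condition, placing it in $\cutoff$ and contradicting the fact that the algorithm added $\hlevnt$ (which requires $\cone{\hlevnt} \cap \cutoff = \emptyset$).

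The main obstacle is that the algorithm tests cut-off* against configurations present in the current $\fin$, while the argument above produces $\hlconf_\hlmarking$ in the full symbolic unfolding. I would bridge this gap by an inductive argument mirroring part (1) of Prop.~\ref{prop:FinIsCompl}: if $\hlconf_\hlmarking$ is not yet in $\fin$, then it must contain an event marked as cut-off* at an earlier stage, and the cut-off* definition then supplies a strictly $\ado$-smaller configuration still representing the same marking. Well-foundedness of $\ado$, together with size-monotonicity keeping these replacements bounded by size $n$, guarantees that the descent terminates in a configuration actually present in $\fin$.

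For completeness, I would follow the proof of Prop.~\ref{prop:FinIsCompl} almost verbatim with ``cut-off'' replaced by ``cut-off*''. The sole nontrivial modification occurs in part (1): where the original argument extracts a companion cone $\cone{\hlevnt_2}$ with $\cone{\hlevnt_2} \ado \cone{\hlevnt_1}$, the cut-off* criterion instead supplies an arbitrary configuration $\hlconf_2' \ado \cone{\hlevnt_1}$ with $\hlmarking_1 \in \confmarks(\hlconf_2')$. The subsequent construction is insensitive to $\hlconf_2'$ being a cone: Prop.~\ref{prop:futureIsoWeak2} still yields a monomorphism $\extmonom_1$ from $\hlonet(\cone{\hlevnt_1}|\hlext)$ into ${\Uparrow}\hlconf_2'$, and Prop.~\ref{prop:adequateOrderOtherDir} still delivers $\hlconf_2' \oplus \extmonom_1(\hlext) \ado \cone{\hlevnt_1} \oplus \hlext = \hlconf_1$, furnishing the strictly smaller configuration needed to iterate. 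Well-foundedness of $\ado$ then terminates the iteration, and part (2) of Prop.~\ref{prop:FinIsCompl} transfers without change, establishing completeness.
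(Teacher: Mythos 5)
Your treatment of finiteness and completeness follows essentially the same route as the paper: the paper likewise replaces step~(1) of Prop.~\ref{prop:FinIsFin} by the bound $\depth(\hlevnt)\leq n+1$, obtains from \itmUniform\ and Prop.~\ref{prop:unfComplete} a configuration of size at most $n$ representing any $\hlmarking\in\confmarks(\cone{\hlevnt'})$, invokes the size-monotonicity assumption to get $\ado$-smallness, and bridges the gap between the full unfolding and $\fin$ by the same descent argument borrowed from part~(1) of Prop.~\ref{prop:FinIsCompl}. Your observation that Prop.~\ref{prop:futureIsoWeak2} and Prop.~\ref{prop:adequateOrderOtherDir} apply to arbitrary finite configurations, not only cones, is exactly what makes the adaptation of the completeness proof go through, and is the content of the paper's remark that the proof ``can easily be adapted''.

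There is, however, one genuine gap: you infer termination from finiteness of $\fin$ (``to establish finiteness, hence termination''), but for nets in $\Nsc\setminus\Nf$ this inference fails. A single iteration of the while loop must decide whether a newly added event $\hlevnt$ is a cut-off* event, i.e., whether $\confmarks(\cone{\hlevnt})\subseteq\bigcup_{\hlconf\ado\cone{\hlevnt}}\confmarks(\hlconf)$, and for symbolically compact nets with infinitely many reachable markings the sets $\confmarks(\cdot)$ can be infinite, so this inclusion cannot be checked by enumeration. The paper closes this by representing each $\confmarks(\hlconf)$ symbolically via the constraints $\constr{\confcut(\hlconf)}$ and reducing the cut-off* test to an implication of formulas in the decidable guard theory (Theorem~\ref{thm:constraintsDisj} and Cor.~\ref{cor:CheckCutOffs}); this is where assumption \itmGuards\ is actually used. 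Without an argument of this kind your proof shows only that the limit object $\fin$ is finite and complete, not that the algorithm terminates.
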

\begin{proof}
	The properties of symbolic unfoldings that we stated in Sec.~\ref{sec:SymbUnfProps}
	are independent on the class of high-level nets.
	Def.~\ref{def:uniquemirror} only uses that the considered net is safe, and so do Prop.~\ref{prop:futureIsoWeak2} and Prop.~\ref{prop:adequateOrderOtherDir}.
	We therefore only have to check that the correctness proof for the algorithm still holds.
	In the proof of Prop.~\ref{prop:FinIsFin} ($ \fin $ is finite),
	the steps~(2) and (3) are independent of the used cut-off criterion.
	In step~(1), however, it is shown that the depth of events never exceeds $ |\reachable(\hlpn)| +1$. This is not applicable when $ |\reachable(\hlpn)|=\infty $, as argued above.
	Instead	we show:

	\begin{itemize}[label=(1*),leftmargin=*,topsep=2pt]
		\item
		For every event $ \hlevnt $ of $ \fin $, $ \depth(\hlevnt)\leq n+1 $,
		where $ n $ is the bound on the number of transitions needed to reach all markings in $ \reachable(\hlpn) $.
	\end{itemize}

\noindent 	In the proof of Prop.~\ref{prop:FinIsCompl},
	the cut-off criterion is used to show (by an infinite descent approach),
	for any marking $ \hlmarking\in\reachable(\hlpn) $
	the existence of a minimal configuration $ \hlconf\in\fin $ with $ \hlmarking\in\confmarks(\hlconf) $.
	Due to the similarity of cut-off and cut-off*,
	this proof can easily be adapted to work as before:

	Assume that at some point during the algorithm,
	we reach a state $ (\hlconds',\hlevnts',\hlarcs',\hlguard',\hlcuts_0') $ of the prefix under construction,
	such that
	there occurs a chain of events $ \hlevnt_1<\hlevnt_2<\dots<\hlevnt_{n+1} $.
	We prove that $ \hlevnt_{n+1} $ must be a cut-off* event.
	Let $ \hlmarking\in\confmarks(\cone{\hlevnt_{n+1}}) $.
	Then, by definition of $ \Nsc $,
	$ \hlmarking $ can be reached by firing at most $ n $ transitions.
	Accordingly, from Prop.~\ref{prop:unfComplete},
	we get that there is a configuration $ \hlconf\in\symbunf $
	containing at most $ n $ events
	such that $ \hlmarking\in\confmarks(\hlconf) $.
	As in the proof of Prop.~\ref{prop:FinIsCompl},
	we can now follow that there is a configuration $ \widetilde{\hlconf}\in\hlconfs(\symbunf) $
	such that $ \hlmarking\in\confmarks(\widetilde{\hlconf}) $ and
	$ \widetilde{\hlconf}\prec\hlconf $, that contains no cut-off event
	and is therefore in $ \fin $.
	Since $ |\hlconf|\leq n < n+1\leq |\cone{\hlevnt_{n+1}}| $,
	we follow $ \widetilde{\hlconf}\prec \cone{\hlevnt_{n+1}} $.
	So we have that $ \forall \hlmarking\in\confmarks(\cone{\hlevnt_{n+1}})\, \exists\widetilde{\hlconf}\prec\cone{\hlevnt_{n+1}} $, which means that $ \hlevnt_{n+1} $ is a cut-off* event.
	This proves that $ \fin $ is finite.

	It remains to show termination.
	In the case of nets in $ \Nf $,
	every object is finite, which, together with Prop.~\ref{prop:FinIsFin}, leads to termination of the algorithm.
	For nets in $ \Nsc\setminus\Nf $, however, there is at least one event $ \hlevnt $ in $ \fin $ s.t.\ $ |\confmarks(\cone{\hlevnt})|=\infty $.
	Thus, we have to show that we can check the cut-off* criterion in finite time.
	This follows from Cor.~\ref{cor:CheckCutOffs} in the next section,
	which is dedicated to symbolically representing markings generated by configurations.
\end{proof}

\subsection{Feasibility of symbolically compact nets and cut-off*}\label{sec:feasibility}
To check the cut-off* criterion of an event added to a prefix of the unfolding,
we have to compare the set of markings represented by the cut of the event's cone configuration to
\emph{all} markings represented by cuts of smaller configurations.
This means that we possibly have to store the whole state space.

This realization gives rise to two questions.
Firstly, how do we manage the storage of an infinite number of markings?
This query is addressed in Sec.~\ref{sec:constraints}, where we demonstrate how to symbolically represent the markings represented by a configuration's cut
and how to check the cut-off* criterion within finite time.
The prototype implementation outlined in Sec.~\ref{sec:impl-det} utilizes these methods for the $ \Nf $-case.

The second question that arises asks how the complete finite prefix resulting from the generalized ERV-algorithm with the cut-off* criterion relates to a reachability graph -- both in terms of size and computation time.
However, as symbolically compact nets
possibly have an infinite expansion, the reachability graph can be infinitely broad. Thus,
at present this method provides a more viable solution compared to calculating the
infinite reachability graph. However, we give an outlook on how a (finite) symbolic
reachability tree of a symbolically compact net could possibly be constructed.
\paragraph{Outlook: Symbolic Reachability Trees of Symbolically Compact Nets.}
The idea of a symbolic reachability tree has been realized for algebraic Petri nets in \cite{Schmidt1995} by Karsten Wolf.
However, in contrast to this work, we think that for the class of symbolically compact nets
we can build a symbolic reachability tree that is complete.

The idea is to gradually extend for every subset $ \hlplaces' $ of places a formula $ \reachable_{\hlplaces'} $
that symbolically describes all reachable markings that we have seen so far and have colors on exactly all places in~$ \hlplaces' $.
Initially, all formulae are $ \false $, except for $ \reachable_{\hlplaces_0} $, where $ \hlplaces_0 $ are the initially marked places.
$ \reachable_{\hlplaces_0} $ symbolically represents the set of initial markings.

The symbolic reachability tree is then constructed by starting with a root $ n_0 $ labeled with $ f_{n_0}=\reachable_{\hlplaces_0} $
representing the set of initial markings.
For every transition $ t $, we can determine whether $ t $ can fire in any mode from any marking represented by $ f_{n_0} $
by a satisfiability check.
If $ t $ can fire, we add a new node $ n' $, and label it with a formula $ f' $ that symbolically represents all markings reached from firing $ t $ in any mode from any marking in $ \hlmarkings_0 $.
In all these markings, there are colors on the same set of places $ \hlplaces' $.
If $ f'\Rightarrow\reachable_{\hlplaces'} $ then we end this branch.
We then extend $ \reachable_{\hlplaces'} $ to $ \reachable_{\hlplaces'}\lor f' $.

By repeating this procedure in breadth-first order, we build a tree that symbolically represents all reachable markings.
This tree should correspond
to the complete finite prefix of the symbolic unfolding of the net to which you
added a shared resource (in form of a new place) that every transition consumes and
recreates. This makes the system sequential
without changing the sequential semantics.
We give here only the idea of the tree, and not a formal definition.
In future work we want to further investigate on this idea.
We can then compare the complete finite prefix of the symbolic unfolding to the symbolic reachability tree.

\section{Checking cut-offs symbolically}\label{sec:constraints}

We show how to check whether a high-level event $ \hlevnt $ is a cut-off* event symbolically in finite time.
By definition, this means checking whether
$
\confmarks(\cone{\hlevnt})\subseteq \bigcup_{\hlconf\ado\cone{\hlevnt}}\confmarks(\hlconf)$.
However, since the cut of a configuration can represent infinitely many markings, when applying the adapted algorithm we cannot simply store the set $ \confmarks(\hlconf) $ for every $ \hlconf\in\hlconfs(\fin) $.
Instead, we now define constraints that symbolically describe the markings represented by a configuration's cut.
Checking the inclusion above then reduces to checking an implication of these constraints.
Since we consider high-level Petri nets with guards written in a decidable theory,
such implications can be checked in finite time.

At the end we see that this method can be easily adapted to symbolically check whether,
in a prefix of the symbolic unfolding of a net $ \hlpn\in\Nf $,
an event $ \hlevnt $ is a cut-off event in the sense of Def.~\ref{def:cut-off}.
This method is also used in the implementation described in Sec.~\ref{sec:impl-det}.

For the rest of this section, let $\hlpn=\tup{\hlplaces,\hltranss,\hlflowfunc,\hlguard,\hlmarkings_0}\in\Nsc $
with symbolic unfolding $ \symbunf(\hlpn)=\tup{\hlunf,\symbhomom}=\tup{\hlconds,\hlevnts,\hlarcs,\hlguard,\hlcuts_0,\symbhomom} $.

\medskip
We first define for every condition $ \hlcond $ a new predicate $ \predb(\hlcond) $ by
\begin{equation*}
	\predb({\hlcond}):=\pred({\hlcondevnt(\hlcond)})\land (\symbhomom(\hlcond)=\hlcondvar_\hlcondevnt(\hlcond)).
\end{equation*}

This predicate now has (in an abuse of notation) an extra variable, named after its label $ \symbhomom(\hlcond) $.
The remaining variables in $ \predb({\hlcond}) $ are coming from $ \pred({\hlcondevnt(\hlcond)}) $ and given by $ \hlvars_{\cone{\hlcondevnt(\hlcond)}\cup\{\bot\}} $.
As we know, $ \pred({\hlcondevnt(\hlcond)}) $ evaluates to $ \true $ under an assignment
$ \insta: \hlvars_{\cone{\hlcondevnt(\hlcond)}\cup\{\bot\}}\to\hltoks $ if and only if
a concurrent execution of $ \cone{\hlcondevnt(\hlcond)} $ with the assigned modes is possible (i.e., under every instantiation of $ \cone{\hlcondevnt(\hlcond)} $).
In such an execution, $ \insta(\hlcondvar_\hlcondevnt(\hlcond))\in\hltoks $ is placed on~$ \hlcond $.
The predicate $ \predb({\hlcond}) $ therefore can only be true if $ \symbhomom(\hlcond) $ is assigned a color that can be placed on $ \hlcond $.

\medskip
We now define for a co-set $ \hlconds'\subseteq\hlconds $ of high-level conditions
the constraint on~$ \hlconds' $ as an expression with free variables~$ \symbhomom(\hlconds')=\{\symbhomom(\hlcond)\with\hlcond\in\hlconds'\} $,
describing which color combinations can lie on the places represented by the high-level conditions.
We build the conjunction over all predicates $ \predb(\hlcond) $ for $ \hlcond\in\hlconds' $
and quantify over all appearing variables $ \hlvar_\hlevnt $:
the \emph{constraint on $ \hlconds' $} is defined by
\begin{equation*}
	\constr{\hlconds'}:=
	\exists_{\bigcup_{\hlcond\in\hlconds'}\hlvars_{\cone{\hlcondevnt(\hlcond)}\cup\{\bot\}}}:
	\bigwedge_{\hlcond\in\hlconds'}\predb({\hlcond}).
\end{equation*}
We denote by $ \asss(\hlconds') $ the set of variable assignments $ \ass:\symbhomom(\hlconds')\to\hltoks $
that satisfy $ \constr{\hlconds'}[\ass]\equiv\true $.

\medskip
For a configuration $ \hlconf $,
we have that $ \hlconds'=\confcut(\hlconf) $ is a co-set, $ \symbhomom(\hlconds')=\symbhomom(\confcut(\hlconf)) $
describes the set of places occupied in every marking in $ \confmarks(\hlconf) $.
Note that in this case,
we have $ \bigcup_{\hlcond\in\confcut(\hlconf)}\hlvars_{\cone{\hlcondevnt(\hlcond)}}=\hlvars_\hlconf $,
i.e., the bounded variables in $ \constr{\confcut(\hlconf)} $ are exactly the variables appearing in predicates in~$ \hlconf $.

For every instantiation $ \insta $ of $ \hlconf $ we
define a variable assignment $  \ass_\insta:\symbhomom(\confcut(\hlconf))\to\hltoks $ by setting
$ \forall\symbhomom(\hlcond)\in\symbhomom(\confcut(\hlconf)): \ass_\insta(\hlcond)=\insta(\hlcondvar_\hlcondevnt(\hlcond)) $.
Instantiations of a configuration and the constraint on its cut are related as follows.
\begin{lemma}\label{lem:constraintsVSinsts}
	Let $ \hlconf\in\hlconfs(\symbunf(\hlpn)) $. Then
	$ \asss(\confcut(\hlconf))=
	\{ \ass_\insta \with \insta\in\instas(\hlconf)\} $.
\end{lemma}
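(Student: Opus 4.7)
The plan is to prove the two inclusions by unfolding the definitions and exploiting the observation that the bounded variables appearing in $\constr{\confcut(\hlconf)}$ are exactly $\hlvars_{\hlconf\cup\{\bot\}}$. Indeed, $\hlcond\in\confcut(\hlconf)$ gives $\hlcondevnt(\hlcond)\in\hlconf\cup\{\bot\}$ and thus $\cone{\hlcondevnt(\hlcond)}\cup\{\bot\}\subseteq\hlconf\cup\{\bot\}$; conversely every $\hlevnt\in\hlconf$ lies below some maximal event $\hlevnt_{\max}\in\hlconf$ whose output conditions are necessarily in $\confcut(\hlconf)$, giving a cut condition $\hlcond$ with $\hlevnt\leq\hlcondevnt(\hlcond)$. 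So a witness to the existential in $\constr{\confcut(\hlconf)}$ is the same kind of object as an instantiation of $\hlconf$.

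For the inclusion $\{\ass_\insta\with\insta\in\instas(\hlconf)\}\subseteq\asss(\confcut(\hlconf))$, I would take any $\insta\in\instas(\hlconf)$ and use $\insta$ itself as the witness of the existential quantifier in $\constr{\confcut(\hlconf)}$. Each conjunct $\predb(\hlcond) = \pred(\hlcondevnt(\hlcond))\land(\symbhomom(\hlcond)=\hlcondvar_\hlcondevnt(\hlcond))$ then evaluates to $\true$: the left conjunct holds because $\hlcondevnt(\hlcond)\in\hlconf\cup\{\bot\}$ and $\insta$ is an instantiation of $\hlconf$, while the right conjunct holds by the very definition $\ass_\insta(\symbhomom(\hlcond))=\insta(\hlcondvar_\hlcondevnt(\hlcond))$.

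For the reverse inclusion, given $\ass\in\asss(\confcut(\hlconf))$, I would extract a witness $\sigma:\hlvars_{\hlconf\cup\{\bot\}}\to\hltoks$ from the existential and claim that $\sigma\in\instas(\hlconf)$ with $\ass=\ass_\sigma$. To see $\sigma\in\instas(\hlconf)$, pick any $\hlevnt'\in\hlconf\cup\{\bot\}$; by the observation above there is $\hlcond\in\confcut(\hlconf)$ with $\hlevnt'\leq\hlcondevnt(\hlcond)$, so $\pred(\hlevnt')$ is a conjunct of $\pred(\hlcondevnt(\hlcond))$ (using the recursive definition $\pred(\hlevnt'')=\pred(\bot)\land\bigwedge_{\hlevnt'''\leq\hlevnt''}\locpred(\hlevnt''')$), and $\sigma$ satisfies the latter by choice of witness. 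The equation $\ass=\ass_\sigma$ then follows immediately from the right conjunct of each $\predb(\hlcond)$.

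The only genuine subtlety is that, when defining $\ass:\symbhomom(\confcut(\hlconf))\to\hltoks$, the labeling $\symbhomom$ must be injective on $\confcut(\hlconf)$, so that the notation $\ass(\symbhomom(\hlcond))$ is unambiguous. This is where safety of $\hlpn$ enters: two distinct conditions in the cut with the same label would, via Prop.~\ref{prop:MarkingsRepresented}, yield a reachable marking assigning two tokens to the same place, contradicting \itmSafety. Everything else is a routine unpacking of definitions.
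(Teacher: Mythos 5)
Your proof is correct and follows essentially the same route as the paper's: both directions are obtained by unpacking the definitions of $\predb$, $\constr{\confcut(\hlconf)}$ and $\ass_\insta$, using the given instantiation as the existential witness in one direction and, in the other, extracting a witness and observing that $\pred(\hlevnt')$ is a subconjunction of $\pred(\hlcondevnt(\hlcond))$ for any $\hlevnt'\in\hlconf\cup\{\bot\}$ below a cut condition. Your additional remarks — that the bounded variables of the constraint are exactly $\hlvars_{\hlconf\cup\{\bot\}}$, and that safety makes $\symbhomom$ injective on $\confcut(\hlconf)$ so that $\ass$ is well defined — are points the paper states or leaves implicit just before the lemma, so they add precision without changing the argument.
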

\begin{proof}
	The proof follows by construction of $ \predb $ and~$ \ass_\insta $:
	Let $ \ass\in\asss(\confcut(\hlconf)) $.
	Then $ \true\equiv\constr{\confcut(\hlconf)}[\ass] $.
	Thus, there exists $\insta:\hlvars_{\hlconf\cup\{\bot\}}\to\hltoks$ s.t.\
	$
	\big(\bigwedge_{\hlcond\in\confcut(\hlconf)}\predb({\hlcond})\big)[\ass][\insta]\equiv \true$
	and therefore
	\begin{align}\label{eq:constraintsVSinsts}
		\Big(\bigwedge_{\hlcond\in\confcut(\hlconf)}\pred(\hlcondevnt({\hlcond}))[\insta]\Big)
		\ \land\
		\Big(\bigwedge_{\hlcond\in\confcut(\hlconf)}\ass(\symbhomom(\hlcond))=\insta(\hlcondvar_\hlcondevnt(\hlcond))\Big)\quad&\equiv\quad \true.
	\end{align}
	From
	the inductive definition of $ \pred $ then follows that
	$ \forall \hlevnt\in\hlconf\cup\{\bot\}:\pred({\hlevnt})[\insta]\equiv\true $.
	Thus, $ \insta $
	is an instantiation of $ \hlconf $,
	and $ \ass_\insta=\ass $, as shown by the posterior conjunction in \eqref{eq:constraintsVSinsts}.

\medskip
	Let on the other hand $ \insta\in\instas(\confcut(\hlconf)) $.
	Then directly, by the definition of $ \predb(\hlcond) $ and $ \ass_\insta $,
	we get
	$
	\big(
	\bigwedge_{\hlcond\in\confcut(\hlconf)}\predb({\hlcond})
	\big)
	[\ass_\insta][\insta]\equiv\true
	$
	and by the definition of $ \constr{\confcut(\hlconf)} $ that
	$ \constr{\confcut(\hlconf)}[\ass_\insta]\equiv\true $, i.e., $ \ass_\insta\in\asss(\confcut(\hlconf))$.
\end{proof}

From the definition of $ \confcuts(\hlconf) $ and $ \hlmarkings(\hlconf) $ we get:
\begin{corollary}\label{cor:CutAndMarkChar}
	Let $ \hlconf\in\hlconfs(\symbunf(\hlpn)) $. Then
	$ \confcuts(\hlconf)
	=\{\{ \tup{\hlcond,\ass(\symbhomom(\hlcond))}\with \hlcond\in\confcut(\hlconf)
	\}\with \ass\in\asss(\confcut(\hlconf)) \} $
	and
	$ \confmarks(\hlconf)
	=\{ \ms{\tup{\symbhomom(\hlcond),\ass(\symbhomom(\hlcond))}\with \hlcond\in\confcut(\hlconf)}\with\ass\in\asss(\confcut(\hlconf)) \} $.
\end{corollary}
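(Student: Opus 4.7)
The plan is to reduce the statement to Lemma~\ref{lem:constraintsVSinsts}, which already carries the essential content: the map $\insta \mapsto \ass_\insta$ translates between the indexed-variable view used by the predicate $\pred$ and the place-indexed view used by $\constr{\confcut(\hlconf)}$. The corollary then follows by unfolding the definitions of $\confcut(\hlconf,\insta)$, $\confmark(\hlconf,\insta)$, $\confcuts(\hlconf)$ and $\confmarks(\hlconf)$ and substituting.

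First I would rewrite an individual cut. By definition $\confcut(\hlconf,\insta) = \{\tup{\hlcond,\hltok} \with \hlcond\in\confcut(\hlconf) \land \insta(\hlcondvar_\hlcondevnt(\hlcond)) = \hltok\}$, and the defining relation $\ass_\insta(\symbhomom(\hlcond)) = \insta(\hlcondvar_\hlcondevnt(\hlcond))$ rewrites this in one line as $\{\tup{\hlcond, \ass_\insta(\symbhomom(\hlcond))} \with \hlcond\in\confcut(\hlconf)\}$. Taking the image as $\insta$ ranges over $\instas(\hlconf)$ and invoking Lemma~\ref{lem:constraintsVSinsts} to identify $\{\ass_\insta \with \insta\in\instas(\hlconf)\}$ with $\asss(\confcut(\hlconf))$ yields the first equality.

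For the marking version I would apply $\symbhomom$ pairwise: since $\confmark(\hlconf,\insta) = \ms{\tup{\symbhomom(\hlcond),\hltok} \with \tup{\hlcond,\hltok}\in \confcut(\hlconf,\insta)}$, substituting the expression from the first step gives $\ms{\tup{\symbhomom(\hlcond),\ass_\insta(\symbhomom(\hlcond))} \with \hlcond\in\confcut(\hlconf)}$, and the same appeal to Lemma~\ref{lem:constraintsVSinsts} replaces the bound $\ass_\insta$ by a generic $\ass\in\asss(\confcut(\hlconf))$.

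The main obstacle, though mild, is purely set-image bookkeeping: different instantiations may induce the same $\ass_\insta$, so the map $\insta\mapsto\ass_\insta$ need not be injective. This does not affect the claimed set equalities, however, because $\confcut(\hlconf,\insta)$ depends on $\insta$ only through the values $\insta(\hlcondvar_\hlcondevnt(\hlcond))$ for $\hlcond \in \confcut(\hlconf)$ --- precisely the coordinates of $\ass_\insta$. Thus the image $\{\confcut(\hlconf,\insta) \with \insta\in\instas(\hlconf)\}$ is correctly indexed by $\asss(\confcut(\hlconf))$ via Lemma~\ref{lem:constraintsVSinsts}, with no injectivity of the map on instantiations required.
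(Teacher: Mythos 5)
Your proposal is correct and matches the paper's (implicit) argument: the paper states the corollary as an immediate consequence of the definitions of $\confcuts(\hlconf)$ and $\confmarks(\hlconf)$ together with Lemma~\ref{lem:constraintsVSinsts}, which is exactly the substitution-and-image computation you carry out. Your remark that injectivity of $\insta\mapsto\ass_\insta$ is not needed for the set equalities is a correct (if unstated in the paper) observation.
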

We now show how to check
whether an event is a cut-off* event
via the constraints defined above.
For that, we first look at general configurations in Theorem~\ref{thm:constraintsDisj},
and then explicitly apply this result to cone configurations $ \cone{\hlevnt} $ in Corollary~\ref{cor:CheckCutOffs}.

\begin{theorem}\label{thm:constraintsDisj}
	Let $ \hlconf,\hlconf_1,\dots,\hlconf_n $ be finite configurations in the symbolic unfolding of a safe high-level Petri net s.t.\ $\forall 1\leq i\leq n: \symbhomom(\confcut(\hlconf))=\symbhomom(\confcut(\hlconf_i)) $.
	Then
	\begin{equation*}
		\confmarks(\hlconf)\subseteq \bigcup_{i=1}^n \confmarks(\hlconf_i)
		\quad\text{if and only if}\quad
		\constr{\confcut(\hlconf)}\Rightarrow	\bigvee_{i=1}^n\constr{\confcut(\hlconf_i)}.
	\end{equation*}
\end{theorem}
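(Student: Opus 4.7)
The plan is to reduce both sides of the equivalence to statements about the set of variable assignments $\asss(\confcut(\hlconf))$, exploiting the results already established in Lemma~\ref{lem:constraintsVSinsts} and Corollary~\ref{cor:CutAndMarkChar}. The hypothesis $\symbhomom(\confcut(\hlconf))=\symbhomom(\confcut(\hlconf_i))$ is crucial because it guarantees that the constraints $\constr{\confcut(\hlconf)}$ and $\constr{\confcut(\hlconf_i)}$ share the same tuple of free variables $V:=\symbhomom(\confcut(\hlconf))$. Only under this common signature does the implication on the right-hand side make sense as a statement about assignments $\ass:V\to\hltoks$.

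First, I would use safety of the net to translate markings into assignments. Because $\hlpn$ is safe, for every $\hlmarking\in\confmarks(\hlconf)$ and every place $\hlplace\in V$ there exists at most one color $\hltok$ with $\hlmarking(\hlplace,\hltok)=1$; moreover, since $V$ is exactly the set of places marked in any $\hlmarking\in\confmarks(\hlconf)$ (and likewise for the $\hlconf_i$), such a color exists and is unique. This yields a bijective correspondence between markings in $\confmarks(\hlconf)$ and functions $V\to\hltoks$ of the form $\hlplace\mapsto\hltok$. Composed with Corollary~\ref{cor:CutAndMarkChar}, this identifies $\confmarks(\hlconf)$ with $\asss(\confcut(\hlconf))$ (and the same for each $\hlconf_i$). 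Consequently, $\confmarks(\hlconf)\subseteq\bigcup_{i=1}^n\confmarks(\hlconf_i)$ holds if and only if $\asss(\confcut(\hlconf))\subseteq\bigcup_{i=1}^n\asss(\confcut(\hlconf_i))$.

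Second, I would translate this inclusion of assignment sets into the logical implication. By Lemma~\ref{lem:constraintsVSinsts} (together with the definition of $\asss$), $\ass\in\asss(\confcut(\hlconf_i))$ if and only if $\constr{\confcut(\hlconf_i)}[\ass]\equiv\true$, and similarly for $\hlconf$. Hence the set inclusion above is equivalent to: for every $\ass:V\to\hltoks$, if $\constr{\confcut(\hlconf)}[\ass]\equiv\true$ then $\constr{\confcut(\hlconf_i)}[\ass]\equiv\true$ for some $i$, which is precisely the semantic reading of $\constr{\confcut(\hlconf)}\Rightarrow\bigvee_{i=1}^n\constr{\confcut(\hlconf_i)}$.

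The main obstacle I anticipate is bookkeeping: keeping clear the distinction between multi-sets over $\hlplaces\times\hltoks$ (markings), functions on indexed variables in $\hlvars_{\hlconf\cup\{\bot\}}$ (instantiations), and assignments to the label variables in $V$ (the free variables of the constraints). The safety assumption is the essential bridge that makes the first of these equivalent to the third, since without safety a marking could assign a multi-set of colors to a single place and one could no longer extract a function $V\to\hltoks$ from it. Once this identification is made precise, both directions of the equivalence follow by pure logical manipulation, without any further appeal to the net structure.
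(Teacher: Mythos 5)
Your proposal is correct and follows essentially the same route as the paper's proof: both reduce the marking-set inclusion to an inclusion of assignment sets via Corollary~\ref{cor:CutAndMarkChar}, using safety and the common label set $\symbhomom(\confcut(\hlconf))$ to identify each marking with a unique assignment, and then read off the implication semantically. The paper simply carries out the two directions element by element rather than first packaging the identification as a bijection, but the content is the same.
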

\begin{proof}
	Assume
	$ \confmarks(\hlconf)\subseteq \bigcup_{i=1}^n \confmarks(\hlconf_i) $ and let $ \ass\in\asss(\confcut(\hlconf)) $.
	We have that
	$ \hlmarking_\ass:= \{ \tup{\symbhomom(\hlcond),\ass(\symbhomom(\hlcond))} \with \hlcond\in\confcut(\hlconf) \}\in\confmarks(\hlconf) $ by Cor.~\ref{cor:CutAndMarkChar}.
	Thus, $ \exists 1\leq i \leq n: \hlmarking_\ass\in\confmarks(\hlconf_i) $.
	This, again by Cor.~\ref{cor:CutAndMarkChar}, means
	$ \exists \ass_i\in\asss(\confcut(\hlconf_i)): $
	\begin{equation*}
		\hlmarking_\ass
		=\{ \tup{\symbhomom(\hlcond'),\ass_i(\symbhomom(\hlcond'))}\with \hlcond'\in\confcut(\hlconf_i) \}
	\end{equation*}
	This shows that $ \ass=\ass_i $.
	Thus,
	$ \constr{\confcut(\hlconf_i)}[\ass]\equiv\true $,
	which proves the implication.

\medskip
	Assume on the other hand $ \constr{\confcut(\hlconf)}\Rightarrow 	\bigvee_{i=1}^n\constr{\confcut(\hlconf_i)} $.
	Let $ \hlmarking\in\confmarks(\hlconf) $.
	Then $ \exists \ass\in\asss(\confcut(\hlconf)):
	\hlmarking=\{ \tup{\symbhomom(\hlcond),\ass(\symbhomom(\hlcond))}\with \hlcond\in\confcut(\hlconf) \} $.
	Thus, $ {\exists 1\leq i\leq n}: \constr{\confcut(\hlconf_i)}[\ass]\equiv\true $.
	Let $ \ass_i=\ass $.
	Then $ \ass_i\in\asss(\confcut(\hlconf_i)) $, and
	$ \hlmarking=\{ \tup{\symbhomom(\hlcond'),\ass_i(\symbhomom(\hlcond'))} \with \hlcond'\in\confcut(\hlconf_i)\}\in\confmarks(\hlconf_i) $,
	which completes the proof.
\end{proof}
The following Corollary now gives us a characterization of cut-off* events in a symbolic branching process.
It follows from Theorem~\ref{thm:constraintsDisj} together with the facts that
$ \confmarks(\hlconf_1)\cap\confmarks(\hlconf_2)\neq\emptyset\Rightarrow\symbhomom(\confcut(\hlconf_1))=\symbhomom(\confcut(\hlconf_2)) $,
and that $ {{\ado}\cone{\hlevnt}} $ is finite.
\begin{corollary}\label{cor:CheckCutOffs}
	Let $ \beta $ be a symbolic branching process
	and $ \hlevnt $ an event in $ \beta $.
	Then $ \hlevnt $ is a cut-off* event in $ \beta $ if and only if
	\begin{equation*}
		\constr{\confcut(\cone{\hlevnt})}\Rightarrow
		\bigvee_{\substack{\hlconf\ado\cone{\hlevnt}\\
				\hlhomom(\confcut(\hlconf))=\hlhomom(\confcut(\cone{\hlevnt}))}}
		\constr{\confcut(\hlconf)}.
	\end{equation*}
\end{corollary}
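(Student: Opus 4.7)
My plan is to reduce the corollary to a direct application of Theorem~\ref{thm:constraintsDisj} after trimming the right-hand index set so that the theorem's hypothesis is satisfied. Unfolding the definition, $\hlevnt$ is a cut-off* event exactly when
\[ \confmarks(\cone{\hlevnt}) \subseteq \bigcup_{\hlconf \ado \cone{\hlevnt}} \confmarks(\hlconf), \]
so the task is to show this set-theoretic inclusion is equivalent to the constraint implication stated in the corollary.

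The key intermediate step uses safeness to discard irrelevant configurations from the union. Because the net underlying $\beta$ is safe, every marking in $\confmarks(\hlconf)$ has support (as a set of high-level places) exactly equal to $\hlhomom(\confcut(\hlconf))$. Consequently, if $\hlhomom(\confcut(\hlconf)) \neq \hlhomom(\confcut(\cone{\hlevnt}))$, then $\confmarks(\hlconf) \cap \confmarks(\cone{\hlevnt}) = \emptyset$, so such $\hlconf$ cannot contribute to covering $\confmarks(\cone{\hlevnt})$ and may be dropped from the union. Hence the cut-off* inclusion is equivalent to the same inclusion restricted to
\[ \mathcal{I} := \{\, \hlconf \ado \cone{\hlevnt} \mid \hlhomom(\confcut(\hlconf)) = \hlhomom(\confcut(\cone{\hlevnt})) \,\}. \]

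Invoking the other stated fact, namely that ${\ado}\cone{\hlevnt}$ is finite, the set $\mathcal{I}$ is also finite. Theorem~\ref{thm:constraintsDisj} then applies verbatim with $\hlconf := \cone{\hlevnt}$ and with $\hlconf_1,\dots,\hlconf_n$ enumerating $\mathcal{I}$, since its matching-image hypothesis $\hlhomom(\confcut(\cone{\hlevnt})) = \hlhomom(\confcut(\hlconf_i))$ is precisely the defining condition of $\mathcal{I}$. The theorem then yields
\[ \confmarks(\cone{\hlevnt}) \subseteq \bigcup_{i=1}^{n} \confmarks(\hlconf_i) \quad\Leftrightarrow\quad \constr{\confcut(\cone{\hlevnt})} \Rightarrow \bigvee_{i=1}^{n} \constr{\confcut(\hlconf_i)}, \]
which is precisely what the corollary asserts.

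The main obstacle is really only the safeness-based trimming argument; everything else is index-set bookkeeping. One edge case worth noting is $\mathcal{I} = \emptyset$, where both sides reduce to trivial statements: the left side asks whether $\confmarks(\cone{\hlevnt}) = \emptyset$, and the right side asks whether $\constr{\confcut(\cone{\hlevnt})}$ is unsatisfiable. By Lemma~\ref{lem:constraintsVSinsts} and the fact that $\cone{\hlevnt}$ admits an instantiation (being a configuration in a branching process, so $\pred(\hlevnt)$ is satisfiable), neither holds, so both sides are simultaneously false and the equivalence remains valid.
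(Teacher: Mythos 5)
Your proof is correct and follows exactly the route the paper sketches: trim the union to configurations with matching cut labels (justified via safety, i.e.\ the fact that $\confmarks(\hlconf_1)\cap\confmarks(\hlconf_2)\neq\emptyset$ forces $\symbhomom(\confcut(\hlconf_1))=\symbhomom(\confcut(\hlconf_2))$), use finiteness of ${\ado}\cone{\hlevnt}$ to get a finite index set, and apply Theorem~\ref{thm:constraintsDisj}. Your treatment of the empty index set is a small addition beyond the paper's one-sentence justification, but the argument is essentially the same.
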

Thus, we showed how to decide for any event $ \hlevnt $ added to a prefix of the unfolding whether it is a cut-off* event, namely, by checking the above implication in Cor.~\ref{cor:CheckCutOffs}.
Note that we can also check whether $ \hlevnt $ is a \emph{cut-off} event (w.r.t.\ Def.~\ref{def:cut-off}) by the implication in Cor.~\ref{cor:CheckCutOffs} when we replace all occurrences of ``$ \hlconf $'' by  ``$ \cone{\hlevnt'} $'' .

\section{Implementation and experimental results}\label{sec:impl}
In this section, we delve into the implementation details of the generalized ERV-algorithm and discuss the results of our experiments.
We give a concise overview of the technical decisions made during implementation
and provide an evaluation of its performance across four novel benchmark families.
We identify a property called ``mode-determinism''
that offers an indicator for whether it is faster to construct (a complete finite prefix of) the symbolic unfolding or the low-level unfolding.

\subsection{Implementation specifics}\label{sec:impl-det}
Other tools designed for generating (prefixes of) P/T Petri net unfoldings include \textsc{Mole} \cite{MOLE}, \textsc{Cunf} \cite{CUNF,RodriguezS13}, and \textsc{Punf} \cite{PUNF}.
However, as these tools are specifically optimized for their intended purpose and do not cater to high-level Petri nets, we opted not to integrate the new algorithms into any of these frameworks.
Furthermore, we refrain from conducting a speed comparison between our implementation and the aforementioned tools. The objective of Section~\ref{sec:impl} is to provide a comparison between two approaches: calculating a respective complete finite prefix of the low-level or the symbolic unfolding.

\medskip
We have devised a prototype implementation called \emph{\textsc{ColorUnfolder}} \cite{ColorUnfolder}
written in the Java programming language.
It serves a dual purpose as an implementation
of the low-level approach as a base for comparisons
and the novel symbolic approach.
It can calculate a finite complete prefix of the low-level unfolding for a given high-level Petri net, combining the concepts from \cite{EsparzaRV02} (complete finite prefixes) and \cite{KhomenkoK03} (generating the low-level unfolding without expansion).
Additionally, it is capable of computing a complete finite prefix of the net's symbolic unfolding,
utilizing a modified version of Alg.~\ref{alg:complpref}.
Since we want to compare the low-level with the high-level case,
we restricted ourselves to nets from the class $ \Nf $ to
guarantee that the low-level unfolding exists.

Both, the generalized (Alg.~\ref{alg:complpref}) and the original (\cite{EsparzaRV02}) ERV-algorithm create possible extensions that are structurally dependent cut-off events,
whereas in the implementation a cut-off event never triggers the calculation of possible extension.
With the same idea, conditions in the postset of cut-off events are never considered for finding co-sets.
This leaves the finite complete prefix unmodified,
as it only eliminates unnecessary work.

\medskip
More importantly, the tool operates on a modified unfolding
since an implementation using the predicates defined here turned out to be very slow.
It rewrites the predicates in the unfolding and modifies arc labels
to drastically reduce the number of variables.
We achieve this by associating with each condition just one
variable, the \emph{internal variable}.
Every condition then has this internal variable on all outgoing edges, as on the ingoing edge. We choose the variable name such that it uniquely
identifies the event that chooses the color of the token. We say an event \emph{chooses}
a color if its transition has a variable on an outgoing edge that is on no ingoing
edge. This is in contrast
to the case where the variable on the outgoing edge was also on an ingoing edge.
In that case the event only forwards the color choice of a previous event.
For example, in the unfolding from Figure~\ref{fig:colorconflictUnf},
\textsc{ColorUnfolder} replaces  the four variables by a single variable.

As a consequence of this renaming, we might need to modify guards in order
to preserve the behavior of the original net. A transition in the original net then can
have two ingoing edges with the same variable, but the corresponding event
has distinct internal variables in those positions. In that case we add a guard
that requires equality of those internal variables.
After a finite complete prefix of the modified unfolding is found,
the result can be easily transformed into the expected result with barely any overhead.

\medskip
This optimization yields a significant speed up.
However, when working on the symbolic unfolding,
in our experiments
still more than 99 percent of the time is spent evaluating the satisfiability of predicates
to identify cut-off events using Cor.~\ref{cor:CheckCutOffs},
and to detect when to discard event candidates because of a color conflict.
For this task we chose the \emph{\textsc{cvc5}} SMT solver~\cite{cvc5}.
It performed best in the relevant category
(non-linear arithmetic with equality and quantifiers)
of the Satisfiability Modulo Theories Competition 2023 (SMT-COMP 2023)\footnote{
	\url{https://smt-comp.github.io/2023/results/equality-nonlineararith-single-query}}.

\subsection{Benchmark families}\label{sec:benchmarks}

In this section, we present four new benchmark families
on which we tested the calculation (resp.\ verification on) of the symbolic unfolding
and compared it to the calculation of (resp.\ verification on) the low-level unfolding.

\subsubsection{Fork And Join}\label{bm:diamond}

The simplest of our benchmark families is called \emph{Fork And Join}.
In the initial marking, a token lies on place $ p_0 $.
A transition $ t $ takes this token from $ p $
and places an arbitrary color on each of its output places.
A transition $ \varepsilon $ then takes these colors from all places, ending the nets execution.
We have two parameters: the first parameter, $ m\in\N $, determines the set of colors $ \hltoks=\{ 0,\dots,m \} $.
The second parameter, $ n\in\N $, determines the number of output places of $ t $.
Fig.~\ref{fig:Independent-Diamonds} shows the independent diamonds for $ n=2 $ in \subref{fig:Independent-Diamond-2} and for $ n=4 $ in \subref{fig:Independent-Diamond-4}.
\begin{figure}[!htb]
\vspace*{-2mm}
	\centering
	\begin{subfigure}{0.48\textwidth}
		\centering
		\includegraphics[width=0.5\textwidth]{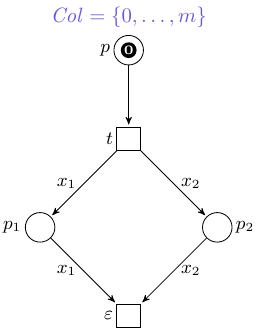}
		\caption{$ n=2 $.\label{fig:Independent-Diamond-2}}
	\end{subfigure}
	\begin{subfigure}{0.48\textwidth}
		\centering
		\includegraphics[width=0.49\textwidth]{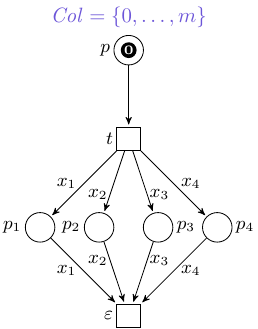}
		\caption{$ n=4 $.\label{fig:Independent-Diamond-4}}
	\end{subfigure}
	\caption{Fork And Join for $ n=2 $ in \subref{fig:Independent-Diamond-2} and $ n=4 $ in \subref{fig:Independent-Diamond-4}.\label{fig:Independent-Diamonds}}\vspace*{-2mm}
\end{figure}

\medskip
The symbolic unfolding of a Fork And Join has $ n+3 $ nodes as it is structurally equal to the net itself.
The low-level unfolding of the expansion has $ (n+2)(m+1)^n+1 $ nodes (since $ t $ is fireable in $ (m+1)^n $ modes), cp.~App.~\ref{app:symb-unfs}.

\begin{figure}[!b]
	\centering
	\scalebox{0.9}{\includegraphics[]{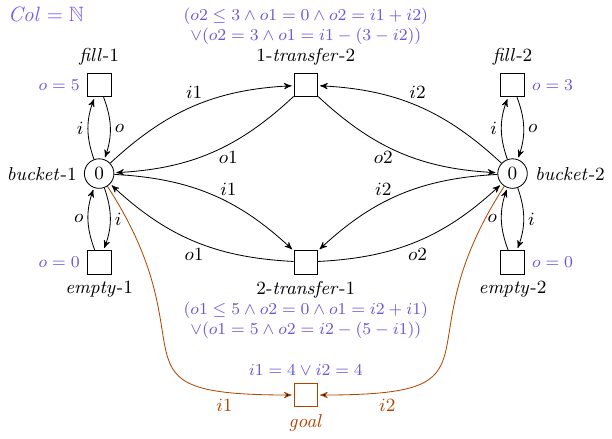} } \vspace*{-2mm}
	\caption{The Water Pouring Puzzle with 2 buckets holding 5 resp.\ 3~liters.\label{fig:buckets}}\vspace*{-6mm}
\end{figure}

\subsubsection{The water pouring puzzle}\label{bm:buckets}

This benchmark family generalizes the following logic puzzle (cf., e.g., \cite{AtwoodP76}):
\begin{quotation}
	\noindent``You have an infinite supply of water and two buckets.
	One holds 5 liters, the other holds 3 liters.
	Measure exactly 4 liters of water in one bucket.''
\end{quotation}
In our generalization we have two parameters.
The first parameter is a finite list $ n=[n_1,\dots,n_k] $ of natural numbers.
Each entry $ n_i $ represents an available bucket holding $ n_i $ liters.
The second parameter, $ m\in\N $ is the amount of water that should be measured.

\medskip
Fig.~\ref{fig:buckets} shows the high-level Petri net corresponding to the parameters
$ n=[3,5] $ and $ m=4 $, corresponding to the puzzle above.
Independently of the parameters, we have $ \hltoks=\N $.
The current fill level of each bucket $ i $ is represented by a place $ \mathit{bucket} $-$ i $,
with an initial color $ 0 $,
and two attached transitions  $ \mathit{fill} $-$ i $ and $ \mathit{empty} $-$ i $,
that, when fired, replace the color on $ \mathit{bucket} $-$ i $ by $ n_i $ or $ 0 $, respectively.
Additionally, for each pair of buckets $ i,j $, there are two transitions $ i $-$ \mathit{transfer} $-$ j $
and $ j $-$ \mathit{transfer} $-$ i $ that transfer as much water as possible from one bucket to the other without overflowing it.
When at least one bucket contains $ m $ liters, the goal transition can fire.
We include a prefix the symbolic unfolding of the net from Fig.~\ref{fig:buckets} in App.~\ref{app:symb-unfs}.

\subsubsection{Hobbits And Orcs}\label{bm:hob-orc}

The \emph{Hobbits And Orcs} problem\footnote{The problem is also known as ``Missionaries and Cannibals'', and is a variation of the ``Jealous Husbands'' problem.} is another logic puzzle (in particular one of many ``river crossing problems'', cf., e.g., \cite{JeffriesPRA1977,PressmanS1989}) and goes as follows:
\begin{quotation}
	\noindent``Three Hobbits and three Orcs must cross a river using a boat which can carry at most two passengers. For both river banks, if there are Hobbits present on the bank, they cannot be outnumbered by orcs (if they were, the Orcs would attack the Hobbits). The boat cannot cross the river by itself with no one on board.''
\end{quotation}

We generalize this problem by introducing two parameters.
The first parameter, $ m\in \N $ is the number of both Hobbits and Orcs.
We always have equally many of the two parties.
The second parameter, $ n\in\N $ is the number of passengers the boat can carry.
We additionally assume that also on the boat, if there are Hobbits present, they cannot be outnumbered by Orcs.

\begin{figure}[!ht]
	\vspace*{4mm}
	\centering
	\begin{subfigure}{0.82\textwidth}\centering
		\includegraphics[width=\textwidth]{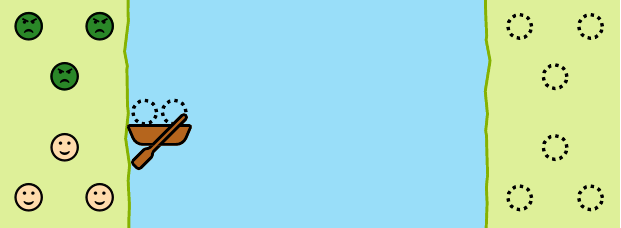}
		\caption{An illustration of the ``Hobbits And Orcs'' problem.\footnotemark\label{fig:hob-orc-ill}}
	\end{subfigure}\\[4mm]
	\begin{subfigure}{0.82\textwidth}\centering
		\includegraphics[width=\textwidth]{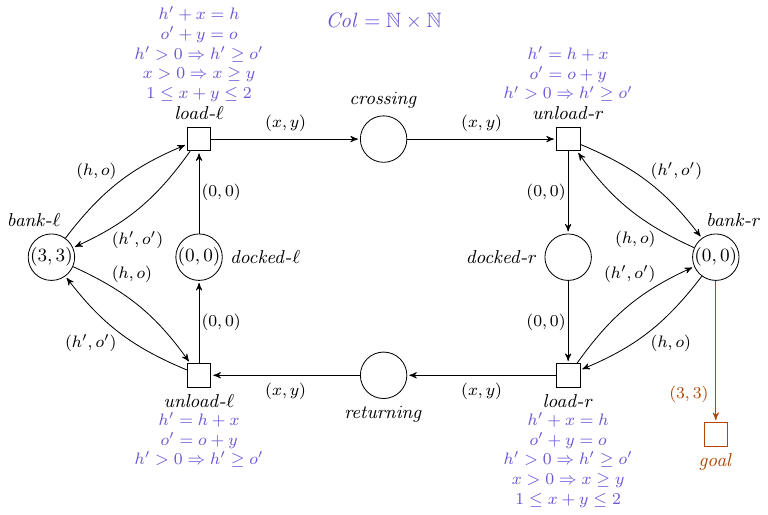}
		\caption{The problem modeled as a high-level Petri net.\label{fig:hob-orc-pn}}
	\end{subfigure}
	\caption[The Hobbits And Orcs problem with 3 Hobbits, 3 Orcs, and a boat fitting 2 passengers.]{The Hobbits And Orcs problem with 3 Hobbits, 3 Orcs, and a boat fitting 2 passengers.\label{fig:hob-orc}}\vspace*{-2mm}
\end{figure}

Figure~\ref{fig:hob-orc-ill} shows an illustration of the original puzzle presented above, with three of both, Hobbits and Orcs, and a boat fitting two passengers.
Figure~\ref{fig:hob-orc-pn} shows the corresponding high-level Petri game.
The colors are given by $ \hltoks=\N\times\N $,
where a tuples describes the number of Hobbits and Orcs at a location --
the left bank, the boat, or the right bank.
We start with three Hobbits and three Orcs on the left bank, indicated by the tuple $ (3,3) $ on the place $ \operatorname{\mathit{bank-\ell}}$.
The four center places describe the current state of the boat,
being either empty and docked on a bank, or loaded and on the river.
Initially, there is a tuple $ (0,0) $ on $ \operatorname{\mathit{docked-\ell}} $, indicating an empty boat the left bank.
Via the transitions $ \mathit{load} $ and $ \mathit{unload} $ left and right,
the boat can be loaded or unloaded, with the guards ensuring all the conditions from the riddle
regarding the number of Hobbits and Orcs on both banks and on the boat.
When all creatures are on the right bank, the goal transition can fire, ending the net's execution.

\subsubsection{Mastermind}\label{bm:mastermind}

\begin{figure}[t]
	\centering
	\includegraphics[]{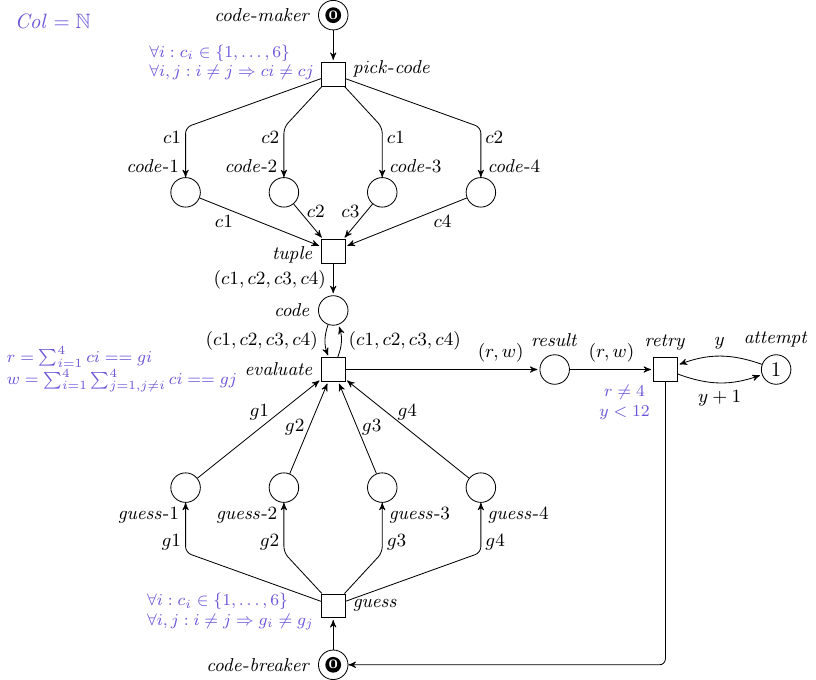}
	\caption{The Mastermind game with a code of length 4, 6 possible colors, and 12 attempts.\label{fig:mastermind}}
\end{figure}
The last benchmark family models a generalization of the classic code-breaking game \emph{Mastermind} developed in the early Seventies and completely solved in 1993 \cite{Knuth76,KoyamaL93}.
The game is played between two players.
The \emph{code maker} secretly chooses an ordered, four digit color code,
with six available colors.
The \emph{code breaker} then guesses the code.
The code maker evaluates the guess by a number of \emph{red pins} indicating how many colors in the guess are in the correct position,
and a number of \emph{white pins} indicating how many colors in the guess appear at a different position in the code.
Using this knowledge, the code breaker makes the next guess, with up to twelve attempts.\footnotetext{The "Boat" icon used in Figure~\ref{fig:hob-orc-ill} is by DinosoftLabs from Noun Project, \url{https://thenounproject.com/dinosoftlab/}.}

In our generalization we have three parameters.
The first parameter, $ m\in\N $, describes the number of available colors.
The second parameter, $ n\in\N $, describes the length of the code.
The third parameter, $ k\in\N $, describes the number of guesses the code breaker can make.
To simplify the net, we restricted the allowed codes to not contain any color twice.

\medskip
Fig.~\ref{fig:mastermind} shows the high-level Petri net with $ m=6 $ available colors, a code of length $ n=4 $, and $ k=12 $ possible attempts, i.e., the scenario described above.
The code maker places one color on each of the four places $ \mathit{code}$-$\mathit{i} $ by firing $ \mathit{pick}$-$\mathit{code} $.
Transition $ \mathit{tuple} $ puts these colors into a tuple on place $ \mathit{code} $.
The purpose of this (for the model unimportant) transition is to make the symbolic unfolding (cp.~App.~\ref{app:symb-unfs}) resemble an actual game board of Mastermind.
The code breaker, analogously to the code maker, concurrently guesses a code via transition $ \mathit{guess} $.
Transition $ \mathit{evaluate} $ compares this guess to the code and places the result,
i.e., the corresponding number of red and white pins, on $ \mathit{result} $.
From there, the code breaker either wins if the guessed code was correct,
or resets with transition $ \mathit{retry} $,
provided that he has another attempt left.

\subsection{Mode-deterministic high-level Petri nets}\label{sec:mode-det}

In the experiments presented in the next section we identified
an important indicator for whether the symbolic approach for a finite complete prefix presented in this paper is expected to outperform,
the complete finite prefix of the low-level unfolding, combining the concepts of \cite{EsparzaRV02} and \cite{KhomenkoK03}.

The identified net property is that in every reachable marking of $ \hlpn $,
every transition can fire in at most one mode.
We call a high-level Petri net with this property \emph{mode-deterministic}, formally:
\begin{definition}\label{def:mode-deterministic}
	A high-level Petri net $ \hlpn $ with transitions $ \hltranss $ is called \emph{mode-deterministic}
	iff
	\begin{equation*}
		\forall \hlmarking\in\reachable(\hlpn)\,
		\forall \hltrans\in\hltranss\,
		\exists^{\leq 1}\hlmode\in\hlmodes(\hltrans) :\hlmarking\fires{\hltrans,\hlmode}.
	\end{equation*}
\end{definition}
In the case of a mode-deterministic net $ \hlpn $, the \emph{skeleton} of $ \hlpn $'s symbolic unfolding
(essentially, the core structure of the high-level occurrence net, devoid of arc labels and guards, and interpreted as a P/T Petri net)
is equivalent in structure to the low-level unfolding of $ \hlpn $'s expansion.
This implies that the high-level abstraction does not offer any computational advantage in the unfolding process.
To the best of our knowledge, this property has not been studied elsewhere.

\medskip
We borrow terminology from ``regular'' determinism and say that a high-level Petri
net is \emph{mode-nondeterministic} if it does not satisfy the above property, and implicitly
describe by a high or low ``degree'' of mode-nondeterminism that there are many resp.\ few
transition-mode combinations making the net mode-nondeterministic.

An illustrative example of a family of mode-deterministic nets is the benchmark family Water Pouring Puzzle introduced in Sec.~\ref{bm:buckets}.
Although this property may not be immediately apparent, it is true that in every state of the system, all transitions can fire in at most one mode.
Specifically, transitions $ \operatorname{\mathit{fill-i}} $ and $ \operatorname{\mathit{fill-i}} $  replace the color on $ \operatorname{\mathit{bucket-i}} $ with a predetermined one.
Every transition $ \operatorname{\mathit{i-transfer-j}} $ emulates the transfer of \emph{all} water from bucket $ i $ that can fit into bucket $ j $.

A condition that guarantees $ \hlpn $ to be a mode-deterministic net is as follows:
``For any conceivable assignment of variables on input arcs, there exists at most one possible mode that completes this assignment to all variables.''
The nets in the Water Pouring Puzzle family fulfill this criterion,
since all output variables ($ o $) are either predetermined or derived from the input variables ($ i $).

An example for a highly mode-nondeterministic nets is the benchmark family Fork And Join from Sec.~\ref{bm:diamond}.
There are only two transitions in the net, but from the initial marking, $ t $ can fire in \emph{every} of its $ m^n $ modes.
This structural pattern of Fork And Join can also be observed in the Mastermind benchmark family (Sec.~\ref{bm:mastermind}),
making these nets also mode-nondeterministic.

The Hobbits and Orcs family (Sec.~\ref{bm:hob-orc}), on the other hand, falls in between and is notably contingent on the parameters involved.
When there are only two seats on the boat, there are merely five \emph{potential} modes of $ \operatorname{\mathit{load-\ell}} $ from the initial marking (reflecting the possible ways to occupy one or both seats with two types of creatures).
However, in the scenario where there are $ n $ seats on the boat, along with $ m $ Hobbits and Orcs, and $ m>n $, the total number of possibilities is $ \sum_{i=1}^{n}i+1=\frac{1}{2}n(n+3) $. It's worth noting that approximately half of these possibilities would result in more orcs than hobbits on the boat, rendering them non-modes.
Hence, given the condition $ m>n $, the ``degree'' of mode-determinism remains unaffected by an increase in $ m $, but solely varies with the parameter $ n $.

\medskip
In the subsequent section, we empirically validate the hypothesis that the level of mode-determinism is inversely proportional to the benefit gained from employing symbolic unfolding compared to low-level unfolding.

\subsection{Experiments and results}\label{sec:exp}

We now present the experimental results of applying our implementation to the four benchmark families presented above.
\textsc{ColorUnfolder} can check the reachability of a (set of) marking(s)
by adding a respective transition to the net.
It then executes Alg.~\ref{alg:complpref} but stops when an instance of
this transition is added to the prefix under construction, which means that the respective
(set of) marking(s) is/are reachable. When the algorithm terminates without such an
instance, the completeness of the prefix implies that the marking is not reachable.

For the nets from the Fork And Join benchmark family, the complete unfolding (being its own smallest finite and complete prefix) is calculated.
For the benchmark families Water Pouring Puzzle and Hobbits And Orcs,
the reachability of the goal state of the riddle is checked.
This is done by adding the respective $ \mathit{goal} $ transition
that is depicted in each of the two figures Fig.~\ref{fig:buckets} and Fig.~\ref{fig:hob-orc}. This transition is not part of the input net.
Finally, for nets in the Mastermind benchmark family,
reachability of a marking with the result of $ n-1 $ red pins and $ 1 $ white pin is checked.
Such a marking is never reachable, so always the complete (but finite) unfolding is calculated.

The tasks described above are checked twice, once using the symbolic unfolding and once using the low-level unfolding as described in Sec.~\ref{sec:impl-det}.
The experiments are calculated
with commit 124b1735 of
\textsc{ColorUnfolder} \cite{ColorUnfolder}
on an otherwise idle system with
an Intel i7-6700K CPU at $4.0$ GHz
and $16$ GB RAM.

\begin{figure}[!b]
	\centering
	\begin{subfigure}[t]{0.49\textwidth}
		\includegraphics[width=\textwidth]{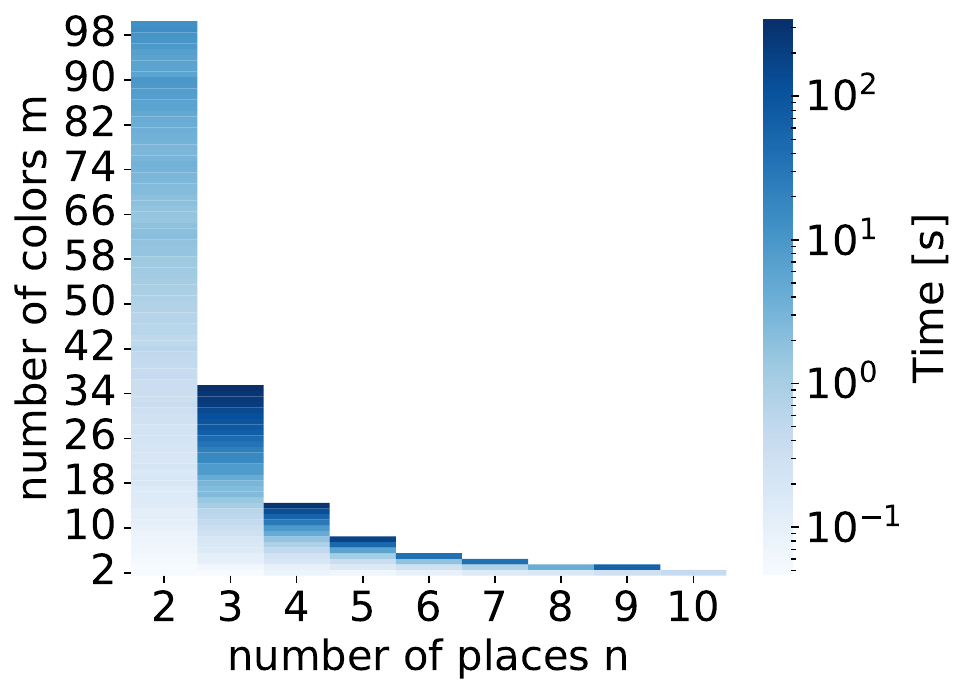}
		\captionsetup{width=.9\linewidth}
		\caption{
			Low-level,
			for growing number of places on x-axis,
			color class on y-axis
			and time as heatmap intensity.
		}\label{fig:independent-diamond-result-ll}
	\end{subfigure}\hspace*{0.01\textwidth}
	\begin{subfigure}[t]{0.49\textwidth}
		\includegraphics[width=\textwidth]{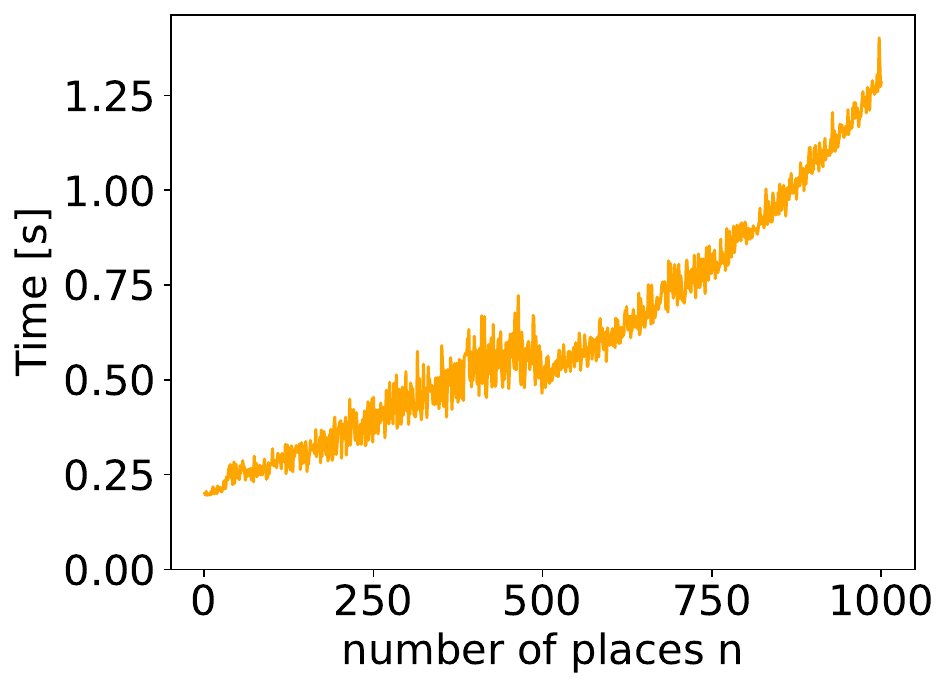}
		\captionsetup{width=.9\linewidth}
		\caption{
			Symbolic,
			for growing number of places on x-axis
			and time on y-axis.
			Behavior is independent of the color class.
			This graph is for $\hltoks = \N$.
		}\label{fig:independent-diamond-result-hl}
	\end{subfigure}
	\caption{Time needed to calculate unfolding \subref{fig:independent-diamond-result-ll} and symbolic unfolding \subref{fig:independent-diamond-result-hl} of Fork \& Join.}\label{fig:independent-diamond-result}
\end{figure}

\paragraph{Fork \& Join.}
The results for Fork \& Join are shown in
Fig.~\ref{fig:independent-diamond-result}.

In Fig.~\ref{fig:independent-diamond-result-ll},
the elapsed time for calculating the low-level unfolding
for the instance with $ n $ places (x-axis) and $ m $ colors (y-axis)
is indicated by the heatmap intensity of the respective cell.
Empty cells exceeded a 5-minute timeout.
We see that the low-level approach is only viable
if at least one parameter is very small.
It takes exponentially more time with a growing color class
and with growing number of places.

Since the symbolic approach
outperforms the low-level approach by a wide margin,
we present it in its own figure, Fig.~\ref{fig:independent-diamond-result-hl}.
For all the cases from Fig.~\ref{fig:independent-diamond-result-ll},
the symbolic approach takes around 200 ms to complete the symbolic unfolding.
This is independent of the color class $ \{ 0,\dots, m \} $.
This corresponds to the fact that the symbolic unfolding of Fork \& Join
is, for a fixed number of places $ n $,
independent of the color class $ \hltoks=\{ 0,\dots,m \} $.
This even does not change for $ \hltoks=\N $.
Since cvc5 (the tool used for checking satisfiability, cp.\ Sec.~\ref{sec:impl-det})
can handle such infinite color domains,
we fix $ \hltoks=\N $, and Fig.~\ref{fig:independent-diamond-result-hl}
shows the elapsed time (y-axis) for calculating the symbolic unfolding
of the Fork \& Join instance with $ n $ places (x-axis).
This approach is very fast (but not linear in the number of places).

\medskip
The symbolic approach is faster for all choices of the parameters.
Only for the smallest choices are both approaches equally fast within the margin of error.
This behavior was expected since the Petri nets are ``highly non mode-deterministic'',
as explained above, and the low-level unfolding is much broader than the symbolic unfolding.

\paragraph{Water Pouring Puzzle.}
For the Water Pouring Puzzle we cannot get interesting results
by varying one parameter while holing the others fixed,
because the complexity of the solution is highly volatile
and dependent on the combination of all parameters.
Since the nets are mode-deterministic,
the low-level and symbolic unfolding, as well as their prefixes, are isomorphic.

\begin{table}[!h]
\centering
\caption{Results of the Water Pouring Puzzle benchmark.}\label{tbl:buckets-result}
\begin{tabular}{cclrrcc}
	buckets $ n $ & target $ m $ & solvable & $|B|$ & $|E|$ & time low-level & time symbolic\\\hline
	$3 , 5 $     & $4$    & yes, $6$ steps  & $ 90$   & $ 75$   & $\boldsymbol{ 95 \,\mathrm{ms}}$ & $1.2 \,\mathrm{s}$ \\
	$15, 17$     & $10$   & yes, $18$ steps & $258$   & $195$   & $\boldsymbol{220 \,\mathrm{ms}}$ & $36  \,\mathrm{s}$ \\
	$57, 73$     & $51$   & yes, $92$ steps & $1294$  & $935$   & $\boldsymbol{1.7 \,\mathrm{s}}$  & $>1  \,\mathrm{h}$ \\
	$9 , 12$     & $4$    & no              & $106$   & $ 74$   & $\boldsymbol{130 \,\mathrm{ms}}$ & $1.8 \,\mathrm{s}$ \\
	$10, 16$     & $5$    & no              & $190$   & $134$   & $\boldsymbol{140 \,\mathrm{ms}}$ & $11  \,\mathrm{s}$ \\
	$8, 14, 17$  & $2$    & yes, $4$ steps  & $411$   & $642$   & $\boldsymbol{300 \,\mathrm{ms}}$ & $26  \,\mathrm{s}$ \\
	$14, 26, 27$ & $14$   & yes, $15$ steps & $21635$ & $15279$ & $\boldsymbol{7.7 \,\mathrm{s}}$  & $>1  \,\mathrm{h}$ \\
	$12, 15, 18$ & $10$   & no              & $2391$  & $1442$  & $\boldsymbol{550 \,\mathrm{ms}}$ & $20  \,\mathrm{min}$ \\
	$12, 21, 27$ & $8$    & no              & $4029$  & $2444$  & $\boldsymbol{1.0 \,\mathrm{s}}$  & $88  \,\mathrm{min}$ \\
\end{tabular}
\end{table}

Table~\ref{tbl:buckets-result} presents the results for this benchmark.
The original puzzle with $ n=[3,5] $ and $ m=4 $ is included in the first line.
Additionally, we randomly selected eight parameter sets such that we consider four scenarios involving 2 buckets and four scenarios involving 3 buckets.

\medskip
We indicate in each scenario whether the puzzle is solvable,
and in the positive case, how many steps a minimal solution has.
We compare the time to check the reachability of a goal state (fireability of the $ \mathit{goal} $ transition, cp.\ above) of the low-level and symbolic approach.
The faster approach is highlighted with bold font.
Additionally, we notate the number of conditions ($ |\hlconds| $) and events ($ |\hlevnts| $)
of the generated prefix. Each of these two numbers coincides between the two approaches.

\medskip
When we pick the parameters at random the low-level approach generally is much faster.
We can, however, construct examples in which the symbolic approach outspeeds the low-level approach.
In these constructed cases we have a large color domain but can reach the goal quite quickly,
e.g., with two buckets of capacity $ 10^6 $ and $ 10^6+1 $, and a target of $ m=1 $.

\paragraph{Hobbits And Orcs.}

\begin{figure}[!h]
\vspace*{-2mm}
	\centering
	\begin{subfigure}{0.32\textwidth}
		\includegraphics[width=\textwidth]{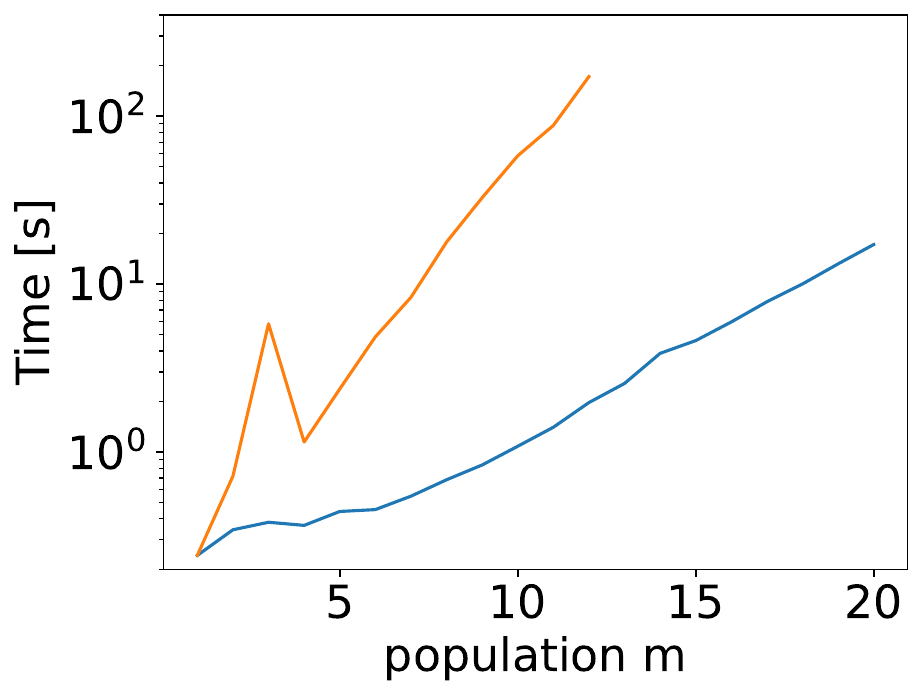}
		\caption{$n=2$}\label{fig:hob2}
	\end{subfigure}
	\hspace*{0.005\textwidth}
	\begin{subfigure}{0.32\textwidth}
		\includegraphics[width=\textwidth]{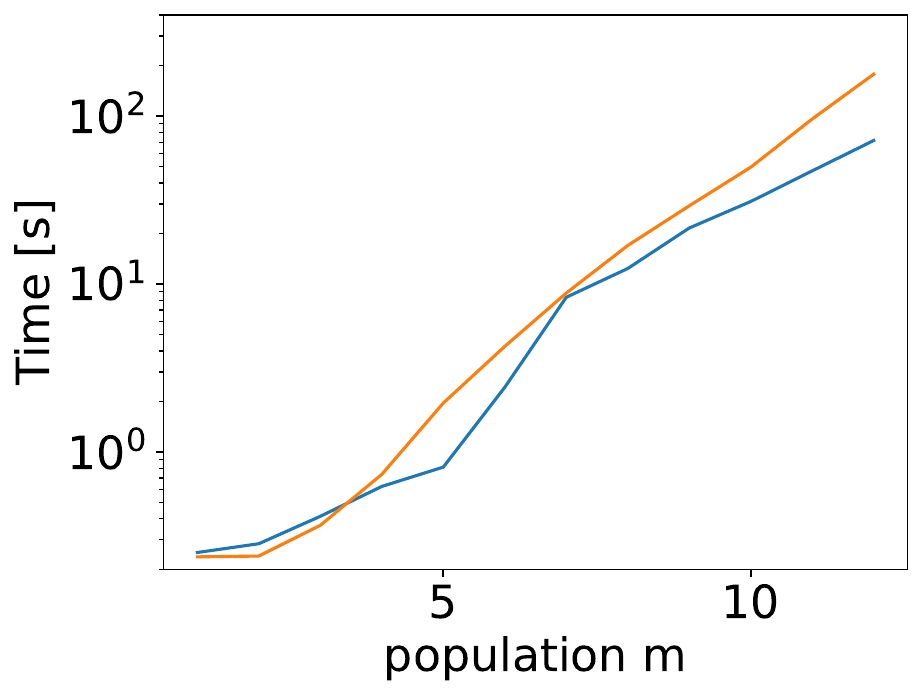}
		\caption{$n=4$}\label{fig:hob4}
	\end{subfigure}
	\hspace*{0.005\textwidth}
	\begin{subfigure}{0.32\textwidth}
		\includegraphics[width=\textwidth]{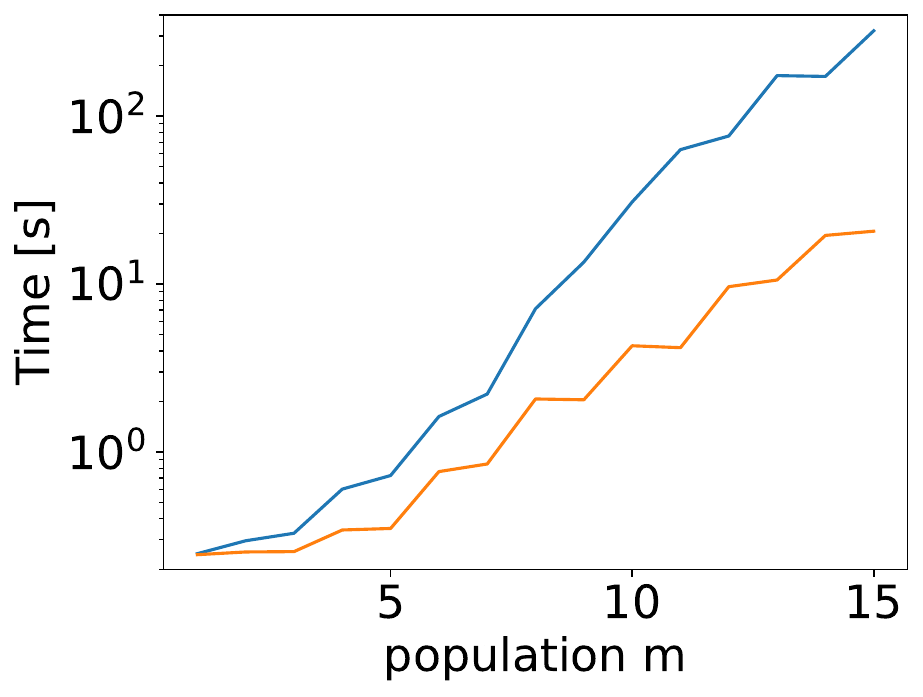}
		\caption{$n=6$}\label{fig:hob6}
	\end{subfigure}
	\caption{Results for Hobbits And Orcs problem
		for a fixed boat capacity~$n$ in \subref{fig:hob2}, \subref{fig:hob4}, \subref{fig:hob6},  and population with $ m $ of each Hobbits and Orcs on the
         x-axis. 		The orange line shows the time needed by the symbolic approach 	and the blue line shows the low-level approach.
	}\label{fig:hobbits-and-orcs-benchmark-result}
\end{figure}
\bgroup
\def\arraystretch{0.95}
\begin{table}[!ht]\small
	\centering
	\caption{Results of the Mastermind benchmark.}\label{tbl:mastermind-result}
	\begin{tabular}{rrr|rrr|rrr}
		&      &        & \multicolumn{3}{c}{Low-level}                             & \multicolumn{3}{c}{Symbolic}                               \\
		$k$   & $n$  & $m$    & time                              & $|\conds|$     & $|\evnts|$     & time                              & $|\hlconds|$      & $|\hlevnts|$     \\\hline
		$12$   & $4$ & $6$    & $            >2  \,\mathrm{min} $ &  -        &  -        & $\boldsymbol{1   \,\mathrm{min}}$ & $149$      & $36$      \\
		$1$   & $3$  & $3$    & $            86  \,\mathrm{ms}  $ & $219$     & $48$      & $\boldsymbol{62  \,\mathrm{ms}} $ & $14$       & $3$       \\
		$1$   & $3$  & $5$    & $            1.8 \,\mathrm{s}   $ & $18363$   & $3720$    & $\boldsymbol{54  \,\mathrm{ms}} $ & $14$       & $3$       \\
		$1$   & $3$  & $6$    & $            18  \,\mathrm{s}   $ & $72723$   & $14640$   & $\boldsymbol{61  \,\mathrm{ms}} $ & $14$       & $3$       \\
		$1$   & $3$  & $7$    & $            >2  \,\mathrm{min} $ & -         & -         & $\boldsymbol{54  \,\mathrm{ms}} $ & $14$       & $3$       \\
		$1$   & $3$  & $1000$ & $            >2  \,\mathrm{min} $ & -         & -         & $\boldsymbol{53  \,\mathrm{ms}} $ & $14$       & $3$       \\
		$1$   & $4$  & $4$    & $            1   \,\mathrm{s}   $ & $3651$    & $624$     & $\boldsymbol{60  \,\mathrm{ms}} $ & $17$       & $3$       \\
		$1$   & $4$  & $5$    & $            >2  \,\mathrm{min} $ & -         & -         & $\boldsymbol{67  \,\mathrm{ms}} $ & $17$       & $3$       \\
		$1$   & $5$  & $1000$ & $            >2  \,\mathrm{min} $ & -         & -         & $\boldsymbol{77  \,\mathrm{ms}} $ & $20$       & $3$       \\
		$1$   & $28$ & $1000$ & $            >2  \,\mathrm{min} $ & -         & -         & $\boldsymbol{1   \,\mathrm{s}}  $ & $89$       & $3$       \\
		$2$   & $3$  & $3$    & $            215 \,\mathrm{ms}  $ & $1719$    & $438$     & $\boldsymbol{206 \,\mathrm{ms}} $ & $24$       & $6$       \\
		$2$   & $3$  & $4$    & $            3.3 \,\mathrm{s}   $ & $110115$  & $27672$   & $\boldsymbol{120 \,\mathrm{ms}} $ & $24$       & $6$       \\
		$2$   & $3$  & $5$    & $            111 \,\mathrm{s}   $ & $1726443$ & $432060$  & $\boldsymbol{124 \,\mathrm{ms}} $ & $24$       & $6$       \\
		$2$   & $3$  & $6$    & $            >2  \,\mathrm{min} $ & -         & -         & $\boldsymbol{119 \,\mathrm{ms}} $ & $24$       & $6$       \\
		$2$   & $3$  & $1000$ & $            >2  \,\mathrm{min} $ & -         & -         & $\boldsymbol{127 \,\mathrm{ms}} $ & $29$       & $6$       \\
		$2$   & $4$  & $4$    & $            5.5 \,\mathrm{s}   $ & $137235$  & $27672$   & $\boldsymbol{1.3 \,\mathrm{s}}  $ & $29$       & $6$       \\
		$2$   & $4$  & $5$    & $            >2  \,\mathrm{min} $ & -         & -         & $\boldsymbol{116 \,\mathrm{ms}} $ & $29$       & $6$       \\
		$2$   & $4$  & $1000$ & $            >2  \,\mathrm{min} $ & -         & -         & $\boldsymbol{162 \,\mathrm{ms}} $ & $29$       & $6$       \\
		$2$   & $10$ & $1000$ & $            >2  \,\mathrm{min} $ & -         & -         & $\boldsymbol{1   \,\mathrm{s}}  $ & $59$       & $6$       \\
		$3$   & $4$ & $4$     & $            >2  \,\mathrm{min} $ & -         & -         & $\boldsymbol{35  \,\mathrm{s}}  $ & $41$       & $9$       \\
		$4$   & $3$ & $3$     & $            2.6 \,\mathrm{s}   $ & $46719$   & $12138$   & $\boldsymbol{803 \,\mathrm{ms}} $ & $44$       & $12$      \\
		$5$   & $3$ & $3$     & $            39  \,\mathrm{s}   $ & $234219$  & $60888$   & $\boldsymbol{1.3 \,\mathrm{s}}  $ & $54$       & $15$      \\
		$6$   & $3$ & $3$     & $            >2  \,\mathrm{min} $ & -         & -         & $\boldsymbol{1.8 \,\mathrm{s}}  $ & $64$       & $18$      \\
	\end{tabular}
\end{table} 
\egroup

For the Hobbits And Orcs benchmark we vary the
and the capacity $ n $ of the boat fixed in Figures \ref{fig:hob2}, \ref{fig:hob4}, \ref{fig:hob6},
and number the of each Hobbits and Orcs, $ m $, plotted on the x-axis.
We indicate the time needed by the low-level approach in blue and by the symbolic approach in orange,
with a timeout of 3 minutes.

\medskip
We find that the low-level approach performs better
when the boat capacity is smaller.
For $ n=2 $, the symbolic approach gets a timeout at $ m=13 $ while the low-level approach can calculate the prefix up to $ m=33 $.
We expected this behavior, because a smaller the boat capacity yields a higher degree of mode-determinism in the net.
When the capacity increases, the net is `less' mode-deterministic.
For $n \geq 6$ the symbolic approach is faster for all population sizes.
Independent of the boat size $ n $, both approaches take exponentially more time with growing population size.

\paragraph{Mastermind.}
We present some results for the Mastermind benchmark in Table~\ref{tbl:mastermind-result}.
We compare the time needed by the low-level and symbolic approach to generate the unfolding,
with a timeout of two minutes.
The table also shows the number of nodes (conditions $ |\conds| $ resp.\ $ |\hlconds| $, and events $ |\evnts| $ resp.\ $ |\hlevnts| $) in the unfolding.
In each row the faster time is highlighted in bold font.

\medskip
The first line shows the original problem with $ k=12 $ attempts, a code length of $ n=4 $, and $ m=6 $ colors.
We observe that in the low-level approach,
time and size grow exponentially with respect to the number $ m $ of colors,
whereas the high-level approach remains constant.
When we increase the parameter~$n$ controlling the length of the code,
both approaches grow exponentially in time.
Interestingly the high-level approach only grows linear in \emph{size} (number of nodes).
This is due to the size of the guard of the $\mathit{evaluate}$ transition
increasing exponentially ($n$ choose $2$).

Overall, the symbolic approach is the clear winner of the Mastermind benchmark.
The low-level approach can only barely compete for the smallest instances of the problem.

We also observe that, in comparison to other parameters,
the performance of the high-level approach drops when $n=m$.
We currently have no explanation for this.
This could be a quirk of the SMT solvers, or alternatively, the formulas may be inherently more challenging to solve in this particular case.
The low-level approach is not impacted negatively by this case, but still much slower than the symbolic approach, as seen in lines with parameters $ (1, 3, 3) $, $ (1, 4, 4 ) $, $ (2, 3,3) $, $ (2, 4, 4) $,
and in the last four lines.

\section{Conclusions and outlook}\label{sec:concl}

We introduced the notion of complete finite prefixes of symbolic unfoldings of high-level Petri nets.
We identified a class of 1-safe high-level nets generalizing 1-safe P/T nets,
for which we generalized the well-known algorithm by Esparza et al.\ to compute such a finite and complete prefix.
This constitutes a consolidation and generalization of the concepts of \cite{EsparzaRV02,Chatain06,ChatainF10,ChatainJ04}.
While the resulting symbolic prefix has the same depth as
a finite and complete prefix of the unfolding of the represented P/T net,
it can be significantly smaller due to less branching.
In the case of infinitely many reachable markings
(where the original algorithm is not applicable)
we identified the class of so-called \emph{symbolically compact} nets
for which an adapted version of the generalized algorithm effectively computes a finite complete prefix of the symbolic unfolding.
For that, we showed how to check an adapted cut-off criterion by symbolically describing sets of markings.
We implemented the generalized algorithm and tested it against four novel benchmark families. This experimentation validated an indicator for whether the symbolic approach is expected to outperform the low-level approach. This indicator relies on the concept of a net property we call ``mode-determinism''.

Future works include
the construction of a symbolic reachability graph for symbolically compact nets
and a comparison with the complete finite prefix, as outlined in Sec.~\ref{sec:feasibility}.
Additionally, a generalization for $ k $-bounded high-level Petri nets seems possible.
Furthermore, we want to apply the results to \emph{high-level Petri games} \cite{Gieseking2021,Gieseking2020}
to find ``symbolic strategies'' with techniques similar to \cite{Finkbeiner2015} or \cite{Chatain2014,AguirreSamboni2022},
employing a successively increasing bound on size of the considered prefix of the symbolic unfolding.

We sincerely thank the anonymous reviewers for their valuable feedback, which significantly improved this paper,
and Paul Hannibal for the discussions about the insufficiency of the cut-off criterion for symbolically compact nets.

\bibliographystyle{fundam}
\bibliography{bib}

\appendix

\section{Appendix}

\subsection{Examples of adequate orders}\label{app:adequateOrders}

We show that the adequate order used in \cite{McMillan95}, as well as the orders $ \ado_E $ and $ \ado_F $ treated in \cite{EsparzaRV02}, when lifted to the symbolic unfolding, are still adequate orders.
In particular we show that $ \ado_F $ is a total adequate order on the symbolic unfolding,
limiting the size of the later constructed finite prefix.
The definition of these orders does not change,
so we take most of the following notation directly from \cite{EsparzaRV02}.

\paragraph{The orders \texorpdfstring{$ \ado_M $}{} and \texorpdfstring{$ \ado_E $}{}.}
The order $ \ado_M $ used in \cite{McMillan95} is defined by
$ \hlconf_1\ado_M\hlconf_2:\Leftrightarrow |\hlconf_1|<|\hlconf_2| $.
It is trivial to see that $ \ado_M $ satisfies i) and ii) from Def.~\ref{def:adequateOrder}.
Since $ \extmonom_{1,\hlext}^2 $ is a  injective,
we have $  |\extmonom_{1,\hlext}^2(\hlext)|=|\hlext| $, which yields iii).

For a high-level Petri net $ \hlpn $, let $ \ll $ be an arbitrary total order on the transitions of $ \hlpn $. Given a set $ \hlevnts' $ of events in the unfolding of $ \hlpn $, let $ \pparikh(\hlevnts') $ be that sequence of transitions which is ordered according to $ \ll $, and contains each transition $ \hltrans $ as often as there are events in $ \hlevnts' $ with label $ \hltrans $. We say $ \pparikh(\hlevnts_1)\ll\pparikh(\hlevnts_2) $ if $ \pparikh(\hlevnts_1) $ is lexicographically smaller than $ \pparikh(\hlevnts_2) $ with respect to the order $ \ll $.

The order $ \ado_E $ is then defined as follows:
let $ \hlconf_1,\hlconf_2 $ be two configurations of the symbolic unfoldings of a high-level Petri net. $ \hlconf_1\ado_E\hlconf_2 $ holds if either $ |\hlconf_1|<|\hlconf_2| $, or
$ |\hlconf_1|=|\hlconf_2| $ and $ \pparikh(\hlconf_1)\ll\pparikh(\hlconf_2) $.
The proof that $ \ado_E $ is an adequate order works exactly as in \cite{EsparzaRV02}:

It is easy to show that $ \ado_E $ is a well-founded partial order implied by inclusion.
We now show that $ \ado_E $ is preserved by finite extensions.
As already mentioned above, $ |\hlext|=|\extmonom_{1,\hlext}^2(\hlext)| $.
Additionally, we have $ \pparikh(\hlext)=\pparikh(\extmonom_{1,\hlext}^2(\hlext)) $,
since $ \extmonom_{1,\hlext}^2  $ preserves the labeling of events.

Assume $ \hlconf_1\ado_E\hlconf_2 $.
If $ |\hlconf_1|<|\hlconf_2| $, then $ |\hlconf_1{\oplus}\hlext|<|\hlconf_2{\oplus}\extmonom_{1,\hlext}^2(\hlext)| $.
If $  |\hlconf_1|=|\hlconf_2| $ and $ \pparikh(\hlconf_1)\ll\pparikh(\hlconf_2) $, then $ |\hlconf_1{\oplus}\hlext|=|\hlconf_2{\oplus}\extmonom_{1,\hlext}^2(\hlext)| $ and, by the properties of the lexicographic order,
$ \pparikh(\hlconf_1{\oplus}\hlext)\ll\pparikh(\hlconf_2{\oplus}\extmonom_{1,\hlext}^2(\hlext)) $.

\paragraph{The Total Adequate Order \texorpdfstring{$ \ado_F $}{}.}
The \emph{Foata normal form} $ \hlfoata $ of a configuration $ \hlconf $ is obtained by starting with $ \hlfoata $ empty, and iteratively deleting the set $ \Min(\hlconf) $ from $ \hlconf $ and appending it to $ \hlfoata $, until $ \hlconf $ is empty.

Given two configurations $ \hlconf_1,\hlconf_2 $,
we can compare their Foata normal forms $ \hlfoata_1=\hlconf_{11}\dots\hlconf_{1n_1} $ and $  \hlfoata_2=\hlconf_{21}\dots\hlconf_{2n_2} $
with respect to the order $ \ll $ by
saying $ \hlfoata_1\ll\hlfoata_2 $ if there exists
$ i\leq i\leq n_1 $ such that
$ \pparikh(\hlconf_{1j})=\pparikh(\hlconf_{2j}) $ for every $ 1\leq j< i $, and $ \pparikh(\hlconf_{1i})\ll\pparikh(\hlconf_{2i}) $.
\begin{definition}[Order $ \ado_F $]
	let $ \hlconf_1 $ and $ \hlconf_2 $ be two configurations of the symbolic unfolding of a high-level Petri net.
	$ \hlconf_1\ado_F\hlconf_2 $ holds if
	\begin{itemize}
		\item $ |\hlconf_1|<|\hlconf_2| $, or
		\item $ |\hlconf_1|=|\hlconf_2| $ and $ \pparikh(\hlconf_1)\ll\pparikh(\hlconf_2) $, or
		\item $\pparikh(\hlconf_1)=\pparikh(\hlconf_2) $ and
		$ \hlfoata_1\ll\hlfoata_2 $.
	\end{itemize}
\end{definition}

We prove that $ \ado_F $ is a total adequate order.
In the proof, (a) -- (c) are taken directly from \cite{EsparzaRV02},
with small adaptations due to the high-level formalism.
While the ideas from (d) also come directly from \cite{EsparzaRV02},
we have work with the monomorphism $ \extmonom_{1,\hlext}^2 $ instead of the isomorphism $ I_1^2 $, and the new definition of adequate order.
This is where the only deviation from \cite{EsparzaRV02} happens.

\medskip
Let $ \beta=\tup{\hlonet,\hlhomom} $ be the symbolic unfolding of $ \hlpn=\tup{\hlns,\hlmarkings_0} $.
\begin{enumerate}[label=(\alph*)]
	\item $ \ado_F $ is a well-founded partial order.

	This follows immediately from the fact that $ \ado_E $ is a well-founded partial order as is the lexicographic order on transition sequences of some fixed length.

	\item $ \hlconf_1\subset \hlconf_2 $ implies $ \hlconf_1\ado_F\hlconf_2 $.

	This is obvious, since $ \hlconf_1\subset \hlconf_2 $ implies $ |\hlconf_1|< |\hlconf_2| $.

	\item $ \ado_F $ is total.

	Assume that $ \hlconf_1 $ and $ \hlconf_2 $ are two incomparable configurations under $ \ado_F $, i.e.,
	$ |\hlconf_1|=|\hlconf_2| $, $ \pparikh(\hlconf_1)=\pparikh(\hlconf_2) $, and $ \pparikh(\hlfoata_1)=\pparikh(\hlfoata_2) $.
	We prove $ \hlconf_1=\hlconf_2 $ by induction on the common size $ k=|\hlconf_1|=|\hlconf_2| $.

	The base case $ k=0 $ gives $ \hlconf_1=\hlconf_2=\emptyset $, so assume $ k>0 $.

	We first prove $ \Min(\hlconf_1)=\Min(\hlconf_2) $.
	Aiming a contradiction, assume w.l.o.g.\ that $ \hlevnt_1\in\Min(\hlconf_1)\setminus\Min(\hlconf_2) $.
	Since $ \pparikh(\Min(\hlconf_1))=\pparikh(\Min(\hlconf_2)) $, $ \Min(\hlconf_2) $ contains an event $ \hlevnt_2 $ s.t.\ $ \hlhomom(\hlevnt_1)=\hlhomom(\hlevnt_2) $.
	Since $ {\rightarrow}\Min(\hlconf_1) $ and $ {\rightarrow}\Min(\hlconf_2) $ are subsets of $ \hlconds_0 $,
	and all conditions of $ \hlconds_0 $ carry different labels,
	we have ${\rightarrow}{\hlevnt_1}={\rightarrow}{\hlevnt_2} $,
	and thus, $ \preset{\hlevnt_1}=\preset{\hlevnt_2} $.
	This contradicts the definition of symbolic branching processes.

	Since $ \Min(\hlconf_1)=\Min(\hlconf_2) $,
	both $ \hlconf_1\setminus\Min(\hlconf_1) $ and $ \hlconf_2\setminus\Min(\hlconf_1) $ are configurations of the branching process $ {\Uparrow}\Min(\hlconf_1) $
	of $ \tup{\hlns,\confmarks(\Min(\hlconf_1))} $,
	and they are incomparable under $ \ado_F $ by construction.
	Since the common size of $ \hlconf_1\setminus\Min(\hlconf_1) $ and $ \hlconf_2\setminus\Min(\hlconf_1) $ is strictly smaller than $ k $, we can apply the induction hypothesis and conclude $ \hlconf_1=\hlconf_2 $.

	\item $ \ado_F $ is preserved by finite extensions.

	Take two finite configurations $ \hlconf_1 $ and $ \hlconf_2 $, let $ \hlext $ be a finite suffix of $ \hlconf_1 $, and let $ \hlmarking\in\confmarks(\hlconf_1)\cap\confmarks(\hlconf_2) $ such that $ \hlconf_1\llbracket\hlmarking\rrbracket\hlext $.
	We have to show that
	$ \hlconf_1\ado\hlconf_2 $
	implies $ \hlconf_1{\oplus}\hlext\ado\hlconf_2{\oplus} \extmonom_{1,\hlext}^{2}(\hlext) $.

	First, notice that we can assume $ \hlext=\{\hlevnt\} $:
	For $ \hlevnt\in\Min(\hlext) $ we have from $ \hlconf_1\llbracket\hlmarking\rrbracket\hlext $ that  $\exists \insta\in\instas(\hlconf_1{\oplus}\hlext): \confmark(\hlconf_1.\insta|_{\hlvars_{\hlconf_1\cup\{\bot\}}})=\hlmarking $.
	Thus, for $ \hlmarking' $ s.t.\ $ \hlmarking[\hlhomom(\hlevnt).\hlmode\rangle\hlmarking' $ with $ \hlmode=\insta\circ[\hlvar\mapsto\hlvar_\hlevnt]_{\hlvar\in\hlvars(\hlevnt)} $,
	we have that
	$ \hlmarking'\in\hlmarkings(\hlconf_1{\oplus}\{\hlevnt \})\cap\hlmarkings(\hlconf_2{\oplus}\{ \extmonom_{1,\hlext}^2(\hlevnt) \}) $ and $ (\hlconf_1{\oplus}\{\hlevnt \})\llbracket\hlmarking'\rrbracket(\hlext\setminus\{\hlevnt\})$.

	Second, the cases $ |\hlconf_1|<|\hlconf_2| $ and $ \hlconf_1 \ado_E \hlconf_2 $ in (i),
	(and the respective cases
	$ |\hlconf_2|<|\hlconf_1| $ and $ \hlconf_2\ado_E\hlconf_1 $ in (ii))
	are easy (shown above).
	Hence, assume $ |\hlconf_1|=|\hlconf_2| $
	and $ \pparikh(\hlconf_1)=\pparikh(\hlconf_2) $.

	Third, we show that under these two assumptions
	$ \hlevnt $ is a minimal event of $ \hlconf_1':=\hlconf_1\cup\{ \hlevnt \} $
	if and only if
	$ \extmonom_{1,\hlext}^2(\hlevnt) $ is a minimal event of $ \hlconf_2':=\hlconf_2\cup\{ \extmonom_{1,\hlext}^2(\hlevnt) \} $.
	Let $ \hlevnt $ be minimal in $ \hlconf_1' $,
	i.e., the transition $ \hlhomom(\hlevnt) $ can be fired in a mode in one initial marking.
	Let $ \hlplace\in{\rightarrow}\hlhomom(\hlevnt) $;
	then no condition in $ {\rightarrow}\hlconf\cup\hlconf{\rightarrow} $ is labeled $ \hlplace $,
	since these conditions would be concurrent to the $ \hlplace $-labeled condition in $ {\rightarrow}\hlevnt $,
	contradicting that $ \tup{\hlns,\hlmarkings_0} $ is safe.
	Thus, $ \hlconf_1 $ contains no event $ \hlevnt' $ with
	$ \hlplace\in{\rightarrow}\hlhomom(\hlevnt') $,
	and the same holds for $ \hlconf_2 $, since $ \pparikh(\hlconf_1)=\pparikh(\hlconf_2) $.
	Therefore, the conditions in $ \confcut(\hlconf_2) $ with label in $ {\rightarrow}\hlhomom(\hlevnt) $ are minimal conditions of~$ \beta $,
	and $ \extmonom_{1,\hlext}^2(\hlevnt)=\hlevnt $ is minimal in $ \hlconf_2' $.
	The reverse implication holds analogously, since about $ \hlconf_1 $ and $ \hlconf_2 $ we have only used the hypothesis $ \pparikh(\hlconf_1)=\pparikh(\hlconf_2) $.

	With this knowledge, we now show the implication.
	Assume $ \hlconf_1\ado_F\hlconf_2 $.
	We show $ \hlconf_1'\ado_F\hlconf_2' $.

	If $ \Min(\hlconf_1)\ado_E\Min(\hlconf_2) $,
	then we now see $ \Min(\hlconf_1')\ado_E\Min(\hlconf_2') $,
	hence $ \pparikh(\hlfoata_1')\ll\pparikh(\hlfoata_2') $
	and so we are done.
	If $ \pparikh(\Min(\hlconf_1))=\pparikh(\Min(\hlconf_2)) $ and $ \hlevnt\in\Min(\hlconf_1') $,
	then
	\begin{equation*}
		\hlconf_1'\setminus\Min(\hlconf_1')
		=\hlconf_1\setminus\Min(\hlconf_1)
		\ado_F\hlconf_2\setminus\Min(\hlconf_2)
		=\hlconf_2'\setminus\Min(\hlconf_2'),
	\end{equation*}
	hence $ \hlconf_1'\ado_F\hlconf_2' $.
	Finally, if $ \pparikh(\Min(\hlconf_1))=\pparikh(\Min(\hlconf_2)) $ and $ \hlevnt\notin\Min(\hlconf_1') $,
	we again argue that $ \Min(\hlconf_1)=\Min(\hlconf_2) $
	and that, hence, $ \hlconf\setminus\Min(\hlconf_1) $ and $ \hlconf_2\setminus\Min(\hlconf_1) $ are configurations of the branching process $ {\Uparrow}\Min(\hlconf_1) $ of $ \tup{\hlns,\confmarks(\Min(\hlconf_1))} $.
	With an inductive argument we get $ \hlconf_1'\setminus\Min(\hlconf_1')
	\ado_F\hlconf_2'\setminus\Min(\hlconf_2') $
	and are also done in this case.
\end{enumerate}

\subsection{More symbolic and low-level unfoldings}\label{app:symb-unfs}

In this appendix we present unfoldings omitted in the main body of the paper.

\paragraph{Low-level Unfolding of Fork And Join.}
Since the symbolic unfolding of any Fork And Join is structurally equal to the net itself we do not display it here.
Fig.~\ref{fig:diamond-unf} shows the low-level unfolding with $ (n+2)(m+1)^n+1 $ nodes for a Fork And Join.

\begin{figure}[!htb]
	\centering
	\includegraphics[width=0.97\textwidth]{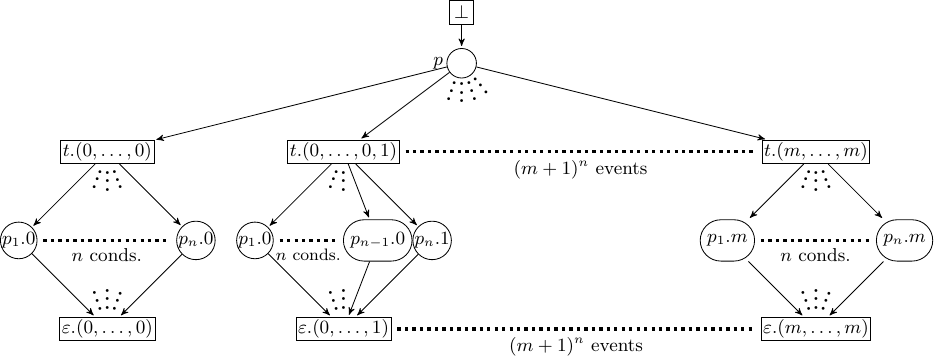}\vspace{-1mm}
	\caption{The symbolic unfolding of a Fork And Join, depending on the parameters $ m $ and $ n $.\label{fig:diamond-unf}}

 \vspace*{5mm}
	\centering
	\includegraphics[width=0.85\textwidth]{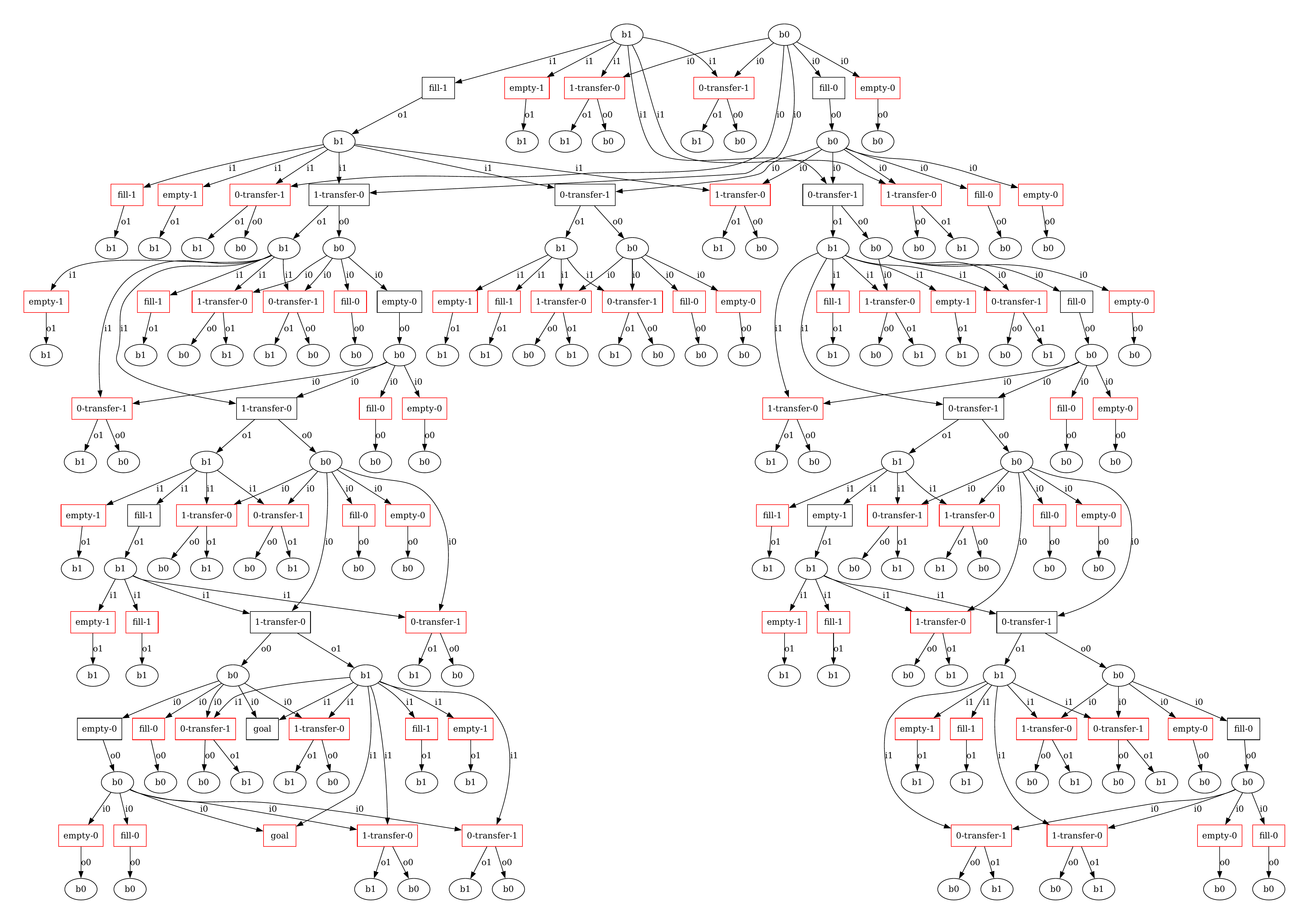}\vspace{-1mm}
	\caption{A prefix of the symbolic unfolding of the Water Pouring Puzzle for $ n=[3,5] $ and $ m=4 $, produced by \textsc{ColorUnfolder}.\label{fig:buckets-unf}}\vspace*{-5mm}
\end{figure}

\paragraph{Symbolic Unfolding of Water Pouring Puzzle.}
Figure~\ref{fig:buckets-unf}
shows the complete finite prefix of the symbolic unfolding of the Water Pouring Puzzle from Fig.~\ref{fig:buckets} with $ n=[5,3] $ and $ m=4 $,
calculated by our implementation \textsc{ColorUnfolder} \cite{ColorUnfolder}
of the generalized ERV-algorithm, Alg.~\ref{alg:complpref}.
The figure is automatically produced from the output of \textsc{ColorUnfolder}.
Cut-off events are marked by a red border.
Additionally marked by a red border are the events representing the added $\mathit{goal}$ transition
that checks the target states.
Since the net is mode-deterministic, the symbolic unfolding's skeleton coincides with the low-level unfolding.
Thus, the skeleton of the complete finite prefix in Figure~\ref{fig:buckets-unf} coincides with
the complete finite prefix of the low-level unfolding generated by the original ERV-algorithm from \cite{EsparzaRV02}.

\begin{figure}[!ht]
	\centering
	\includegraphics[width=0.92\textwidth]{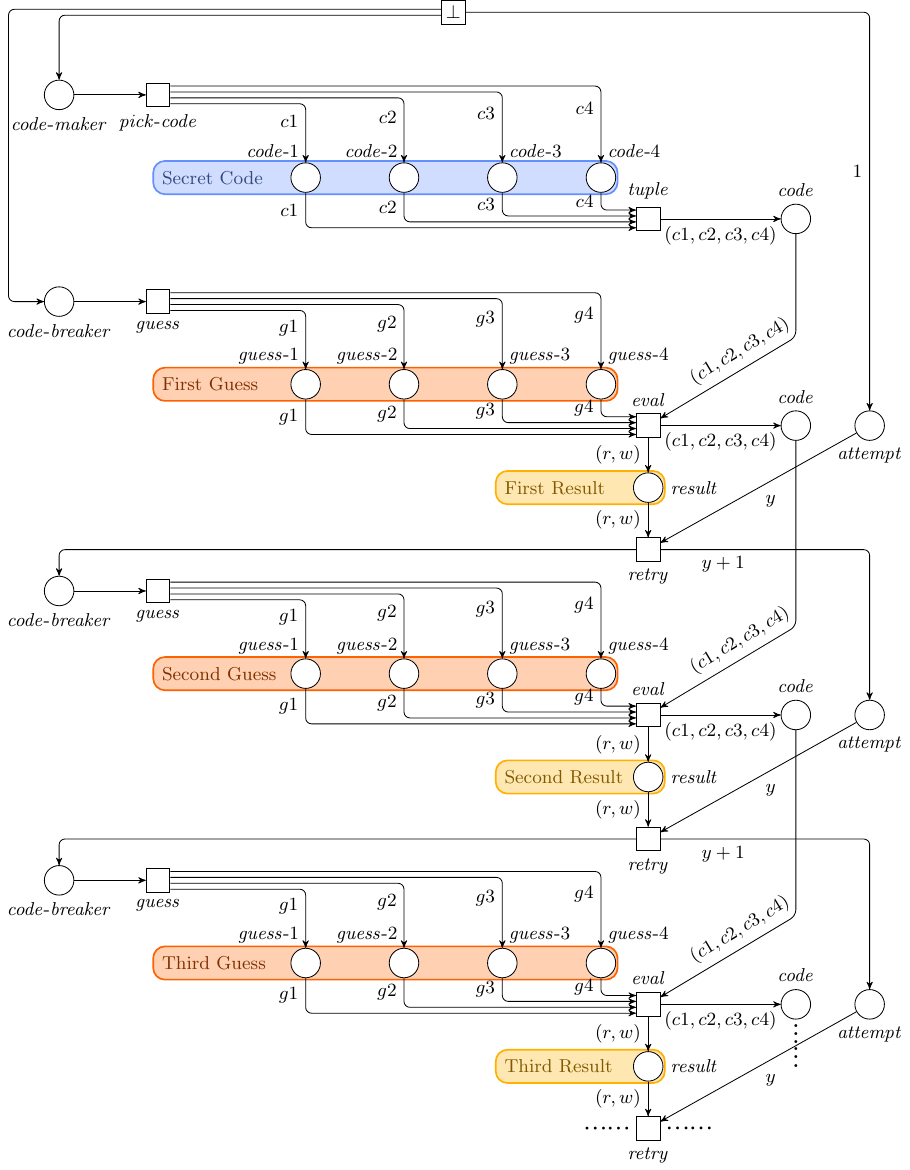}
	\caption{A prefix of the symbolic unfolding of the Mastermind net from Fig.~\ref{fig:mastermind}.\label{fig:mastermind-unf}}
\end{figure}

\paragraph{Symbolic Unfolding of Mastermind.}
Figure~\ref{fig:mastermind-unf} shows a prefix of the symbolic unfolding of the Mastermind net from Fig.~\ref{fig:mastermind} with 6 available colors, a code of length 4, and 12 possible attempts.
Combinatorial arguments give that the low-level unfolding has more than $ 10^{37} $ nodes, so we do not present it here.

\end{document}